\providecommand{\tabularnewline}{\\}
\crefname{section}{Section}{Sections}
\crefname{figure}{Figure}{Figures}
\crefname{example}{Example}{Examples}
\let\oldsqrt\sqrt
\renewcommand{\sqrt}[2][\ \,\,]{{\!\!\oldsqrt[\raisebox{.1em}{\scalebox{.7}{$#1$}}]{#2}\,}}
  \let\originalleft\left
  \let\originalright\right
  \renewcommand{\left}{\mathopen{}\mathclose\bgroup\originalleft}
  \renewcommand{\right}{\aftergroup\egroup\originalright}
\newcommand{\setmulength}[2]{#1=#2\relax}	
\setmulength{\thinmuskip}{2mu plus 1mu minus 1mu}
\setmulength{\medmuskip}{2mu plus 1mu minus 1mu}
\setmulength{\thickmuskip}{4mu plus 1mu minus 2mu}
\let\oldsection\section
\let\oldsubsection\subsection
\let\oldsubsubsection\subsubsection
\def\section{\addpenalty{-4000}\oldsection}	
\def\subsection{\addpenalty{-3000}\oldsubsection}
\def\subsubsection{\addpenalty{-2000}\oldsubsubsection}
\let\oldpar\par
\def\par{\oldpar\addpenalty{-1000}}
  \newlength{\listspace}\setlength{\listspace}{.1em plus .1em}	
  \setlist{topsep=\listspace,itemsep=\listspace,parsep=0em,partopsep=0em}
  \setlist[enumerate]{leftmargin=2em}
  \setlist[itemize]{leftmargin=1.5em}
\newcommand{\justified}{%
  \rightskip\z@skip%
  \leftskip\z@skip}
\date{}
\newlength{\linespace}
\newlength{\parspace}
\newtheoremstyle{lwq}
  {0em}	
  {0em}	
  {\normalfont}	
  {0em}	
  {\bfseries}	
  {.}	
  {.3em plus .2em}	
  {\thmname{#1}\thmnumber{ #2}\thmnote{ (#3)}}	
\newtheoremstyle{lwqprf}
  {0em}	
  {0em}	
  {\normalfont}	
  {0em}	
  {\itshape}	
  {.}	
  {.3em plus .2em}	
  {\thmname{#1}\thmnote{ (#3)}}	
\newlength{\thmspace}\setlength{\thmspace}{.2em plus .1em minus .1em}
\newlength{\prfspace}\setlength{\prfspace}{.2em plus .1em minus .1em}
\newcommand\thmbegin{\par\addvspace{\thmspace}\addvspace{\parskip}\addpenalty{-1500}}
\newcommand\prfbegin{\par\addvspace{\prfspace}\addvspace{\parskip}\addpenalty{-500}}
\newcommand\thmend{\par\addvspace{\thmspace}\addpenalty{-500}}	
\newcommand\prfend{\par\addvspace{\prfspace}\addpenalty{-500}}	
\newcommand{\theoremname}{Theorem}
\theoremstyle{lwq}\newtheorem{thm}{\protect\theoremname}
\newcounter{subtheorem}[thm]
\newcommand{\definitionname}{Definition}
\theoremstyle{lwq}\newtheorem{defn}[thm]{\protect\definitionname}
\newcommand{\lemmaname}{Lemma}
\theoremstyle{lwq}\newtheorem{lem}[thm]{\protect\lemmaname}
\renewcommand{\proofname}{Proof}
\theoremstyle{lwqprf}\newtheorem{prf}{\protect\proofname}
\renewenvironment{proof}[1][]{\prfbegin\begin{prf}[#1]\pushQED{\qed}}{\popQED\end{prf}\prfend}
\renewcommand{\qed}{\hfill{}\hspace{3em minus 1em}\qedsymbol}
\renewcommand{\qed}{}	
\newcommand{\remarkname}{Remark}
\theoremstyle{lwq}
\theoremstyle{lwq}\newtheorem*{rem*}{\protect\remarkname}
\newenvironment{centerbox}
{\begin{centering}}
{\par\end{centering}}
\newcommand{\corollaryname}{Corollary}
\theoremstyle{lwq}\newtheorem{cor}[thm]{\protect\corollaryname}
\newcommand{\setblockspace}
{
  \setlength{\topsep}{0em}
  \setlength{\itemsep}{\listspace}
  \setlength{\parsep}{0em}
  \setlength{\partopsep}{\listspace}
}
\newcommand{\setblockflush}{}
\newenvironment{block}
  {\list{}{\leftmargin1.5em\setblockspace\interlinepenalty3000}\setblockflush}
  {\endlist}
\crefname{footnote}{}{}
\begin{document}
\renewcommand{\qed}{\hfill{}\hspace{2em minus 1em}\scalebox{1.4}{$\diamond$}}

\newcommand\br{\addpenalty{-1000}}

\newcommand\step{\crefname{enumi}{step}{}}

\global\long\def\nn{\mathbb{N}}
\global\long\def\zz{\mathbb{Z}}
\global\long\def\qq{\mathbb{Q}}
\global\long\def\rr{\mathbb{R}}

\global\long\def\wi{\subseteq}
\global\long\def\co{\supseteq}
\global\long\def\nwi{\nsubseteq}
\global\long\def\nco{\nsupseteq}
\global\long\def\none{\varnothing}
\global\long\def\less{\smallsetminus}

\global\long\def\ii{\mathbf{1}}
\global\long\def\pp{\mathbf{P}}
\global\long\def\ee{\mathbf{E}}
\global\long\def\vv{\mathbf{Var}}
\global\long\def\cv{\mathbf{Cov}}

\global\long\def\floor#1{\left\lfloor #1\right\rfloor }
\global\long\def\ceil#1{\left\lceil #1\right\rceil }

\global\long\def\cond#1#2#3{\left(\vphantom{#1#2#3}\right.\,#1\mathrel{\,?\,}\allowbreak#2\mathrel{\,:\,}\allowbreak#3\,\left.\vphantom{#1#2#3}\right)}

\global\long\def\f#1{\operatorname{#1}}

\global\long\def\a#1{.\text{#1}}

\global\long\def\e{\upvarepsilon}

\global\long\def\tbox#1{\boxed{\text{#1}}}

\global\long\def\stbox#1{\scalebox{0.8}{\boxed{\text{#1}}}}

\global\long\def\smbox#1{\scalebox{0.8}{\boxed{#1}}}

\global\long\def\grey#1{\textcolor{gray}{#1}}

\global\long\def\vsp#1{\vspace{#1em}}

\noindent \begin{center}
\textbf{\huge{}Parallel Working-Set Search Structures}
\par\end{center}{\huge \par}

\noindent \begin{center}
\begin{tabular}{>{\centering}p{0.3\textwidth}>{\centering}p{0.3\textwidth}>{\centering}p{0.3\textwidth}}
\textbf{\large{}Kunal Agrawal}{\large \par}

Washington University in St. Louis & \textbf{\large{}Seth Gilbert}{\large \par}

National University of Singapore & \textbf{\large{}Wei Quan Lim}{\large \par}

National University of Singapore\tabularnewline
\end{tabular}
\par\end{center}

\section*{Keywords}

Data structures, parallel programs, dictionaries, comparison-based
search, distribution-sensitive algorithms

\section*{Abstract}

In this paper~\footnote{This is the authors' version of a paper submitted to the 30th ACM
Symposium on Parallelism in Algorithms and Architectures (SPAA '18).
It is posted here by permission of ACM for your personal or classroom
use. Not for redistribution. The definitive version can be found at
https://doi.org/10.1145/3210377.3210390. \copyright~ 2018 Copyright
is held by the owner/author(s). Publication rights licensed to ACM.} we present two versions of a parallel working-set map on $p$ processors
that supports searches, insertions and deletions. In both versions,
the total work of all operations when the map has size at least $p$
is bounded by the \emph{working-set bound}, i.e., the cost of an item
depends on how recently it was accessed (for some linearization):
accessing an item in the map with recency $r$ takes $O(1+\log r)$
work. In the simpler version each map operation has $O\left((\log p)^{2}+\log n\right)$
span (where $n$ is the maximum size of the map). In the pipelined
version each map operation on an item with recency $r$ has $O\left((\log p)^{2}+\log r\right)$
span. (Operations in parallel may have overlapping span; span is additive
only for operations in sequence.)

Both data structures are designed to be used by a dynamic multithreading
parallel program that at each step executes a unit-time instruction
or makes a data structure call. To achieve the stated bounds, the
pipelined version requires a weak-priority scheduler, which supports
a limited form of 2-level prioritization. At the end we explain how
the results translate to practical implementations using work-stealing
schedulers.

To the best of our knowledge, this is the first parallel implementation
of a \emph{self-adjusting} search structure where the cost of an operation
adapts to the access sequence. A corollary of the working-set bound
is that it achieves \emph{work static optimality}: the total work
is bounded by the access costs in an optimal static search tree.

\section{Introduction}

\textbf{Map} (or \textbf{dictionary}) data structures, such as binary
search trees and hash tables, support inserts, deletes and searches/updates
(collectively referred to as accesses) and are one of the most used
and studied data structures. In the comparison model, balanced binary
search trees such as AVL trees and red-black trees provide a performance
guarantee of $O(\log n)$ worst-case cost per access for a tree with
$n$ items. Other kinds of balanced binary trees provide probabilistic
or amortized performance guarantees, such as treaps and weight-balanced
trees.

\textbf{Self-adjusting maps}, such as splay trees~\cite{sleator1985splay},
adapt their internal structure to the sequence of operations to achieve
better performance bounds that depend on various properties of the
access pattern (see \cite{elmasry2013distsensitive} for a hierarchical
classification). Many of these data structures make it cheaper to
search for recently accessed items (temporal locality) or items near
to previously accessed items (spatial locality). For instance, the
working-set structure described by Iacono in \cite{iacono2001wstree}
has the working-set property (which captures temporal locality); it
takes $O(\log r+1)$ time per operation with access rank $r$ (\ref{def:access-rank}),
so its total cost satisfies the working-set bound (\ref{def:work-set-bound}).
\begin{defn}[Access Rank]
\label{def:access-rank} Define the \textbf{access rank} of an operation
in a sequence of operations on a map $M$ as follows. The access rank
for a successful search for item $x$ is the number of distinct items
in $M$ that have been searched for or inserted since the last prior
operation on $x$ (including $x$ itself). The access rank for an
insertion, deletion or unsuccessful search is always $n+1$, where
$n$ is the current size of $M$.
\end{defn}

\begin{defn}[Working-Set Bound]
\label{def:work-set-bound} Given any sequence $L$ of $N$ map operations,
we shall use $W_{L}$ to denote the \textbf{working-set bound} for
$L$, defined by $W_{L}=\sum_{i=1}^{N}\left(\log r_{i}+1\right)$
where $r_{i}$ is the access rank of the $i$-th operation in $L$
when $L$ is performed on an empty map.
\end{defn}

\subsection*{Parallel search structures}

Our goal in this paper is to design efficient self-adjusting search
structures that can be used by parallel programs. Even designing a
non-adjusting parallel or concurrent search structure is quite challenging,
and there has been a lot of research on the topic.

There are basically two approaches. In the concurrent computing world,
processes independently access the data structure using some variety
of concurrency control (e.g., locks) to prevent conflicts. In the
parallel computing world, data structures are designed to support
executing operations in parallel, either individually or in batches.

For example, in the concurrent computing world, Ellen et al.~\cite{EllenPR10}
show how to design a non-blocking binary search tree, with later work
generalizing this technique~\cite{BrownER14} and analyzing the amortized
complexity~\cite{EllenFH14}. However, these data structures do not
maintain balance in the tree (i.e., the height can get large) and
their cost depends on the number of concurrent operations.

An alternate approach (that bears some similarity to the implicit
batching that we use) is software combining~\cite{FatourouKa12,HendlerInSh10,OyamaTaYo99},
where each processor inserts a request in a shared queue and a single
processor sequentially executes all outstanding requests later. These
works provide empirical efficiency but not worst-case bounds.

Another notable example is the CBTree~\cite{AfekKK12,AfekKK14},
a concurrent splay tree that in real experiments achieves surprisingly
good performance --- leading to an interesting hypothesis that self-adjustment
may be even more valuable (in practice) in concurrent settings than
sequential settings. However, the CBTree does not guarantee that it
maintains the proper `frequency balance', and hence does not provide
the guarantees of a splay tree (despite much experimental success).

In the parallel computing world, there are several classic results
in the PRAM model. Paul et al.~\cite{paul1983paradict} devised a
parallel 2-3 tree such that $p$ synchronous processors can perform
a batch of $p$ operations on a parallel 2-3 tree of size $n$ in
$O(\log n+\log p)$ time. Blelloch et al.~\cite{BlellochRe97} show
how pipelining can be used to increase parallelism of tree operations.
Also, (batched) parallel priority queues~\cite{BrodalTrZa98,CrauserMeMe98,DriscollGaSh88,Sanders98}
have been utilized to give efficient parallel algorithms such as for
shortest-path and minimum spanning tree~\cite{BrodalTrZa98,DriscollGaSh88,PaigeKr85}.

More recently, in the dynamic multithreading model, there have been
several elegant papers on parallel treaps~\cite{BlellochR98} and
how to parallelize a variety of different binary search trees~\cite{BlellochFS16}
supporting unions and intersections, and also work on how to achieve
batch parallel search trees with optimal work and span~\cite{AkhremtsevS16}.
Other batch parallel search trees include red-black trees~\cite{FriasSi07}
and weight-balanced B-trees~\cite{EKS14}. (We are unaware of any
batched self-adjusting data structures.)

And yet, such concurrent/parallel map data structures can be difficult
to use; the programmer cannot simply treat it as a black box and use
atomic map operations on it from within an ordinary parallel program.
Instead, she must carefully coordinate access to the map.

\subsection*{Implicit batching}

Recently, Agrawal et al.~\cite{agrawal2014batcher} introduced the
idea of \textbf{\textit{\emph{implicit batching}}}. Here, the programmer
writes a parallel program that uses a black box data structure, treating
calls to the data structure as basic operations. In addition, she
provides a data structure that supports batched operations (e.g.,
search trees in~\cite{BlellochR98,BlellochFS16}). The runtime system
automatically stitches these two components together, ensuring efficient
running time by creating batches on the fly and scheduling them appropriately.
This idea of implicit batching provides an elegant solution to the
problem of parallel search trees.

\subsection*{Our goals}

Our goal is to extend the idea of implicit batching to self-adjusting
data structures --- and more generally, to explore the feasibility
of the implicit batching approach for a wider class of problems. In~\cite{agrawal2014batcher},
they show how to apply the idea to uniform-cost data structures (where
every operation has the same cost).~\footnote{They also provide some bounds for amortized data structures, where
queries do not modify the data structure.} In a self-adjusting structure, some operations are much cheaper than
others, and additionally every operation may modify the data structure
(unlike say AVL/red-black trees where searches have no effect on the
structure), which makes parallelizing it much harder.

We present in this paper, to the best of our knowledge, the first
parallel self-adjusting search structure that is distribution-sensitive
with worst-case guarantees. In particular, we design two versions
of a parallel map whose total work is essentially bounded by the \nameref{def:work-set-bound}
for some linearization $L$ of the operations (that respects the dependencies
between them).

\subsection*{Parallel Programming Model}

The parallel data structures in this paper can be used in the scenario
where a parallel program accesses data structures expressed through
\textbf{dynamic multithreading} (see~\cite[Ch. 27]{CormenLeRi09}),
which is the case in many parallel languages and libraries, such as
Cilk dialects~\cite{Cilk,IntelCilkPlus13}, Intel TBB~\cite{TBB},
Microsoft Task Parallel Library~\cite{TPL} and subsets of OpenMP~\cite{OpenMP}.
The programmer expresses algorithmic parallelism through parallel
programming primitives such as fork/join (also spawn/sync), parallel
loops and synchronized methods, and does not provide any mapping from
subcomputations to processors.

These types of programs are typically scheduled using a \textbf{greedy
scheduler}~\cite{Brent,Graham} or a nearly greedy scheduler such
as \textbf{work-stealing scheduler} (e.g.,~\cite{blumofe1999worksteal})
provided by the runtime system. A greedy scheduler guarantees that
at each time step if there are $k$ available tasks then $\min(k,p)$
of them are completed.

We analyze our two data structures in the context of a greedy scheduler
and a weak-priority scheduler (respectively). A \textbf{weak-priority
scheduler} has two priority levels, and at each step at least half
the processors greedily choose high-priority tasks and then low-priority
tasks --- if there are at most $\frac{1}{2}p$ high-priority tasks,
then all are executed. We discuss in \ref{sec:work-steal} how to
adapt these results for work-stealing schedulers.

\section{Main Results}

We present two parallel working-set maps that can be used with any
parallel program $P$, whose actual execution is captured by a \textbf{program
DAG} $D$ where each node is a unit-time instruction or a call to
some data structure $M$, called an \textbf{$M$-call}, that blocks
until the answer is returned, and each edge represents a dependency
due to the parallel programming primitives. Let $T_{1}$ be the total
number of nodes in $D$, and $T_{\infty}$ be the number of nodes
on the longest path in $D$.

Both designs take work nearly proportional to the \nameref{def:work-set-bound}
$W_{L}$ for some legal linearization $L$ of $D$, while having good
parallelism. (We assume that each key comparison takes $O(1)$ steps.)

The first design, called $M_{1}$, is a simpler batched data structure.
\begin{thm}[$M_{1}$ Performance]
\label{thm:M1-perf} If $P$ uses only $M_{1}$ (i.e., no other data
structures), then its running time on $p$ processes using any greedy
scheduler is 
\[
O\left(\frac{T_{1}+W_{L}+e_{L}\cdot\log p}{p}+T_{\infty}+d\cdot\left((\log p)^{2}+\log n\right)\right)
\]
(as $n,p\to\infty$) for some linearization $L$ of $D$, where $d$
is the maximum number of $M_{1}$-calls on any path in $D$, and $n$
is the maximum size of the map, and $e_{L}$ is the number of \textbf{small-ops},
defined as operations in $L$ that are performed on the map when its
size is less than $p$.
\end{thm}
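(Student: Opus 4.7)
The plan is to apply the Brent--Graham greedy-scheduler bound to an \textbf{augmented DAG} $D'$ obtained by replacing each $M_1$-call node in $D$ with the internal computation DAG that $M_1$ actually executes to service that call. Then on any greedy scheduler the running time is at most $W(D')/p + S(D')$, so it suffices to bound $W(D')$ and $S(D')$ separately to match the two parts of the stated bound.

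First I would handle $S(D')$. Any path in $D$ contains at most $d$ nodes that are $M_1$-calls, each of which is replaced by a sub-DAG of span $O((\log p)^2 + \log n)$ by the span guarantee of $M_1$ promised in the abstract. The non-call nodes on any path contribute at most $T_\infty$, so
\[
S(D') \;=\; T_\infty + O\bigl(d\cdot((\log p)^2 + \log n)\bigr),
\]
matching the additive part of the claimed running time.

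Next I would bound $W(D')$. The program nodes themselves contribute $T_1$, so the task is to bound the total work of all $M_1$-calls. To do this I construct the linearization $L$ directly from the execution: order the $M_1$-calls by the step at which $M_1$ commits each one to its internal state, breaking ties within a single batch by any fixed rule. Because a call enters $M_1$'s queue only after all of its $D$-predecessors have finished, $L$ respects all dependencies in $D$ and is therefore a legal linearization. I would then show that when $M_1$ processes a batch, the work it charges the $i$-th operation (in the order above) is $O(1 + \log r_i)$ whenever the map has size at least $p$, and $O(\log p)$ otherwise. Summing over all calls gives total $M_1$-work $O(W_L + e_L \log p)$, so $W(D') = O(T_1 + W_L + e_L \log p)$; dividing by $p$ and adding $S(D')$ yields exactly the stated bound.

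The main obstacle I expect is the per-batch work accounting. I must argue that the batched parallel processing inside $M_1$ is faithful enough to the chosen linearization that the standard sequential working-set amortization still applies: items drawn from the same batch must admit a consistent assignment of recencies $r_i$ matching some linear order within the batch; the bookkeeping $M_1$ does to route, route-back, and restructure items for a whole batch must not perturb the amortized potential by more than $O(1)$ per operation; and the $O(\log p)$ parallel overhead that appears when the map size is below $p$ must be isolated and charged only to the $e_L$ small-ops so it does not pollute the working-set sum. Once this charging scheme is established, the span computation and the appeal to the greedy-scheduler bound are routine.
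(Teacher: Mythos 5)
Your overall architecture (Brent-type bound on an augmented DAG, work term $O(T_1+W_L+e_L\log p)$, span term $T_\infty+d\cdot O((\log p)^2+\log n)$) matches the shape of the paper's decomposition into effective work and effective span. But there is a genuine gap in the span half, and it is not a technicality.

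The claim that each $M_1$-call can be replaced by a sub-DAG of span $O((\log p)^2+\log n)$, so that $S(D')=T_\infty+O(d\cdot((\log p)^2+\log n))$, is false for the execution DAG that a batched data structure actually produces. First, a single cut batch of size $b$ has processing span $\Theta(b/p)$ just from being cut and merged out of the feed buffer, and $b$ can be as large as $p\log n+p^2$, so one batch alone already exceeds your per-call span budget. Second, and more importantly, batches are processed sequentially: an $M_1$-call must wait for the previous uncut batch to be flushed, for the batch currently inside the structure, and for all intervening cut batches queued ahead of it, and these waiting chains are real edges in $D'$. A path in $D$ with $d$ calls therefore passes, in $D'$, through waiting chains whose total length is not bounded by $d$ times any per-call quantity. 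The paper's proof of the effective span bound resolves exactly this: it traces each call's journey through the buffer and the intervening batches, observes that no two calls on the same path of $D$ can wait for the same intervening batch (the second call is only issued after the first returns), and thereby amortizes all the $b/p$-type waiting over the path into an additive $O(N/p)$ term, which is then absorbed into $T_1/p$ in the final statement. Your proposal has no mechanism for this amortization, and without it the span bound you assert cannot be established. (You would also need to account separately for the parallel buffer's own work and span, which the paper does with a dedicated lemma, but that is a smaller omission.) The work half of your plan is essentially the paper's, and you correctly identify that its substance lies in the per-batch charging argument --- the paper discharges it via an entropy-sorting cost lemma plus an abstract working-set accounting lemma applied to a simulated list --- but the span term is where your proposal, as written, breaks.
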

Notice that if $M_{1}$ is replaced by an ideal concurrent working-set
map (one that does the same work as the sequential working-set map
if we ran the program according to linearization $L$), then running
$P$ on $p$ processors according to the linearization $L$ takes
$\Omega(T_{opt})$ worst-case time where $T_{opt}=\frac{T_{1}+W_{L}}{p}+T_{\infty}$.
Also, we very likely have $e_{L}\ll\frac{W_{L}}{\log p}$ in practice,
and so can usually ignore the $e_{L}\cdot\log p$ term. Thus $M_{1}$
gives an essentially optimal time bound except for the ``span term''
$d\cdot\left((\log p)^{2}+\log n\right)$, which adds $O\left((\log p)^{2}+\log n\right)$
time per $M_{1}$-call along some path in $D$. In short, the parallelism
of $M_{1}$ is within a factor of $O\left((\log p)^{2}+\log n\right)$
of the optimal.

The second design, called $M_{2}$, uses a more complex pipelined
data structure design as well as a weak-priority scheduler (see \ref{sub:weak-priority-sched})
to provide a better bound on the ``span term''.
\begin{thm}[$M_{2}$ Performance]
\label{thm:M2-perf} If $P$ uses only $M_{2}$, then its running
time on $p$ processes using any weak-priority scheduler is 
\[
O\left(\frac{T_{1}+W_{L}+e_{L}\cdot\log p}{p}+T_{\infty}+d\cdot(\log p)^{2}+s_{L}\right)
\]
for some linearization $L$ of $D$, where $d,e_{L}$ are defined
as in \ref{thm:M1-perf}, and $s_{L}$ is the weighted span of $D$
where each map operation is weighted by its cost according to $W_{L}$.
Specifically, each map operation in $L$ with access rank $r$ is
given the weight $\log r+1$, and $s_{L}$ is the maximum weight of
any path in $D$.
\end{thm}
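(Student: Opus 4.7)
The plan is to realize $M_2$ as a pipelined working-set structure with geometrically sized levels (level $i$ holding $\Theta(2^i)$ items), so the item with recency $r$ resides at level $\Theta(\log r)$. Operations descend through the levels, but unlike the batched design of $M_1$ (which waits for each stage to finish globally before advancing), successive stages of $M_2$ are allowed to overlap so that an operation that only needs to reach level $\log r$ can surface and return after traversing $O(\log r)$ stages of the pipeline, rather than paying the full $\log n$ depth of the structure on its span.

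The proof would proceed in three steps. Step~1 (work): by the same amortized accounting as the sequential working-set structure, processing every operation through the levels it actually visits costs $O(T_1 + W_L)$ work in total, plus $O(e_L \cdot \log p)$ for the small-ops when the map is smaller than $p$; this part is essentially inherited from the $M_1$ analysis. Step~2 (per-operation span): show that each pipeline stage has span $O\bigl((\log p)^2\bigr)$ for its local work (sorting, routing, level-search, rebalancing), so that an $M_2$-call on an item with recency $r_i$ finishes within $O\bigl((\log p)^2 + \log r_i\bigr)$ span after entering. Summing along any path $\pi$ of $D$ gives total span at most $T_\infty + d\cdot(\log p)^2 + s_L$, since each operation on $\pi$ contributes exactly $\log r_i + 1$ plus the per-stage $(\log p)^2$ overhead, and the sum of $\log r_i + 1$ along $\pi$ is bounded by $s_L$ by definition. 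Step~3 (scheduling): apply a Brent-style accounting using the weak-priority scheduler, designating as high-priority exactly those tasks that drive the critical-path coordination of the pipeline (forwarding between stages, maintenance of the head/tail pointers of each level), and letting bulk work fall to low priority. Because the high-priority load at any moment is small enough to fit in $\tfrac{p}{2}$ processors, the guarantee that at least $\tfrac{p}{2}$ processors execute high-priority tasks first ensures that the pipeline advances at the span rate while the remaining $\tfrac{p}{2}$ processors absorb the bulk work; dividing total work by $p$ and adding the span then yields the claimed bound.

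The main obstacle will be Step~2 together with the scheduling argument in Step~3: justifying that each pipeline stage really does clear in $O((\log p)^2)$ span even when many long operations simultaneously flood the deep levels of the structure. The key ingredient is that the high-priority class is engineered to be shallow (its own span is $O((\log p)^2)$ per stage and its instantaneous width is $O(p)$), so that the weak-priority guarantee prevents the bulk of concurrent low-recency traffic from starving the stage-to-stage hand-off that determines the critical path. Once this latency bound is locked in, the remaining combination of work, span, and the weighted path quantity $s_L$ into the final running time is a routine greedy-scheduling calculation, adapted in the standard way to the two-priority setting, mirroring the corresponding step in the proof of \ref{thm:M1-perf}.
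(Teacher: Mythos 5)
There is a genuine gap. The heart of this theorem is not a routine Brent-style calculation; it is the combination of a de-duplication mechanism and a priority scheme that together rescue the \emph{work} bound under pipelining, and your proposal addresses neither. First, if stages overlap, then many in-flight operations on the same deep item each traverse the pipeline, costing $\Omega(b\cdot\log n)$ for $b$ accesses that $W_L$ prices at $O(\log n+b)$; sorting within a batch (inherited from $M_1$) does not help across batches. The paper resolves this by splitting the structure into a non-pipelined first slab of the first $\Theta(\log\log p)$ segments and a pipelined final slab, with a \emph{filter} of size $O(p^2)$ in between that traps any operation whose item already has an operation in the pipeline; the $\Omega(\log p)$ filter-access cost is charged to the $\Omega(\log p)$ work the operation already did in the first slab. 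Your Step~1 claim that the work bound is ``essentially inherited from the $M_1$ analysis'' is therefore false for a naively pipelined design. Second, you cast the weak-priority scheduler purely as a latency device for ``coordination'' tasks, but in the paper its essential role is in the \emph{work} bound: the entire final slab is high-priority, so that while an operation on item $x$ is stuck deep in the pipeline, the first slab cannot keep re-searching for $x$ at $\Theta(\log p)$ a pop --- every high-idle step provably reduces the remaining delay of the final-slab computation, which is $O(2^k)$ by the key lemma, so the wasted first-slab work on $x$ is $O(1)$ times the final-slab work on $x$. Your assertion that ``the high-priority load at any moment is small enough to fit in $p/2$ processors'' is neither established nor what the argument needs.

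Your Step~2 is also asserted rather than proved, and is wrong as stated: a deep segment $S[k]$ has span $\Theta\left(2^{k}\right)$, which can be $\Theta(\log n)\gg(\log p)^{2}$, so ``each pipeline stage has span $O\left((\log p)^{2}\right)$'' cannot hold; the correct statement is the telescoping one, that an operation finishing in $S[k]$ incurs $O\left(\sum_{i\le k}2^{i}\right)=O\left(2^{k}\right)$ total delay, and proving even that requires the neighbour-lock and front-lock machinery, the balance invariants keeping every segment within $O\left(p^{2}\right)$ of its capacity, and a rank invariant tying $2^{k}$ to $O(\log p+\log r)$. Relatedly, your level sizing (level $i$ holds $\Theta\left(2^{i}\right)$ items, so recency $r$ lives at level $\Theta(\log r)$) yields a search cost of $\sum_{i\le\log r}O(i)=O\left((\log r)^{2}\right)$, violating the working-set bound; the doubly exponential capacities $2^{2^{i}}$ are needed so the per-segment costs $2^{i}$ telescope to $O(\log r)$. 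Finally, the span accounting must also charge for waiting on intervening cut batches and for filter-full steps (bounded by $O(W_{L}/p)$ via a token argument), which your path-summation omits.
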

Compared to $M_{1}$, the ``work term'' $\frac{T_{1}+W_{L}+e_{L}\cdot\log p}{p}$
is unchanged, but the ``span term'' for $M_{2}$ has no $\log n$
term. Since running $P$ on $p$ processors according to the linearization
$L$ takes $\Omega(T_{opt}+s_{L})$ worst-case time, $M_{2}$ gives
an essentially optimal time bound up to an extra $O\left((\log p)^{2}\right)$
time per map operation along some path in $D$, and hence $M_{2}$
has parallelism within an $O\left((\log p)^{2}\right)$ factor of
optimal.

\section{Central Ideas}

\label{sec:challenges}

We shall now sketch the intuitive motivations behind $M_{1}$ and
$M_{2}$.

It starts with \textbf{\textit{Iacono's sequential working-set structure}},
which contains a sequence of balanced binary search trees $t_{1},t_{2},...,t_{l}$
where tree $t_{i}$ for $i<l$ contains $2^{2^{i}}$ items and hence
has height $\Theta(2^{i})$. The invariant maintained is that the
$r$ most recently accessed items are in the first $\log\log r$ trees.
A search on a key proceeds by searching in each tree in the sequence
in order until the key is found in tree $t_{k}$. After a search,
the item is moved to $t_{1}$ and then for each $i<k$, the least
recently accessed item from tree $t_{i}$ is moved to tree $t_{i+1}$.
By the invariant, any item in the map with recency $r$ will take
$O(\log r+1)$ time to access. Each insertion or deletion can be easily
carried out in $O(\log(n+1)+1)$ time while preserving the invariant.

\textbf{\textit{The challenge is to `parallelize' this working-set
structure while preserving the total work.}} The first step is to
\textbf{\textit{process operations in batches}}, using a batched search
structure in place of each `tree'.

The problem is that, if there are $b$ searches for the same item
in the last tree, then according to the working-set bound these $b$
operations should take $O(\log n+b)$ work. But if these operations
all happen in parallel and end up in the same batch, and we execute
this batch naively, then each operation will go through the entire
structure leading to $\Omega(b\cdot\log n)$ work.

Therefore, in order to get the desired bound, we must \textbf{\textit{combine
duplicate accesses}} in each batch. But naively sorting a batch of
$b$ operations takes $\Theta(b\cdot\log b)$ work. To eliminate this
as well, $M_{1}$ (\ref{sec:PWM1}) uses a novel \textbf{\textit{entropy-sorting
algorithm}}, and a careful analysis yields the desired work bound.

Next, we cannot simply apply the generic ``implicit batching'' transformation
in \cite{agrawal2014batcher} to $M_{1}$, because the Batcher Bound
(Theorem~1 in \cite{agrawal2014batcher}) would give an expected
running time of $O\left(\frac{T_{1}+W+N\cdot s}{p}+T_{\infty}+d\cdot s\right)$
for $N$ map operations, where $W$ is the work done by $M_{1}$,
and $s$ is the worst-case span of a size-$p$ batch. The problem
is that $s$ is $\Omega(\log n)$, because a batch with a search for
an item in the last tree has span $\Omega(\log n)$.

Firstly, this means that the $N\cdot s$ term would be $\Omega(N\cdot\log n)$,
and so the Batcher Bound would be no better than for a batched binary
search tree. Secondly, the $d\cdot s$ term would be $\Omega(d\cdot\log n)$.
$M_{1}$ has the same span term, because if a cheap operation is `blocked'
by the previous batch that has an expensive operation, then the span
of the cheap operation could be $\Omega(\log n)$. To reduce this,
we improve $M_{1}$ to $M_{2}$ using an intricate \textbf{\textit{pipelining
scheme}} (explained in \ref{sec:PWM2}) so that a cheap operation
is `blocked' less by the previous batch.

\section{Parallel Computation Model}

\label{sec:model}

In this section, we describe how the parallel program $P$ generates
an execution DAG, how we measure the cost of a given execution DAG,
and issues related to the chosen memory model.

\subsection*{Execution DAG}

The actual complete execution of $P$ can be captured by the \textbf{execution
DAG} $E$ (which may be schedule-dependent), in which each node is
a unit-time instruction and the directed edges represent the underlying
computation flow (such as constrained by forking/joining of threads
and acquiring/releasing of locks).%
{} At any point during the execution of $P$, a node in the program/execution
DAG is said to be \textbf{ready} if its parent nodes have been executed.
An \textbf{active thread} is simply a ready node in $E$, while a
\textbf{suspended thread} is a terminal node in $E$.

The \textbf{program DAG} $D$ captures the high-level execution of
$P$, but interaction between data structure calls is only captured
by the execution DAG. We further assume that all the data structures
are (implicitly) batched data structures, and that the number of data
structures in use is bounded by some constant. To support implicit
batching, each data structure call is automatically handled by a \textbf{parallel
buffer} for the data structure. (See Appendix \ref{sub:par-buffer}.)

The execution DAG $E$ consists of \textbf{core nodes} and \textbf{ds
nodes}, which are dynamically generated as follows. At the start $E$
has a single core node, corresponding to the start of the program
$P$. Each node could be a \textbf{local instruction} or a \textbf{synchronization
instruction} (including fork/join and acquire/release of a lock).
Each core node could also be a data structure call. When a node is
executed, it may generate child nodes or terminate. A join instruction
also generates edges that linearize all the join operations according
to the actual execution. Likewise, simultaneous operations on a non-blocking
lock generate child nodes that are linearized by edges. For a blocking
lock, a release instruction generates a child node that is simply
the \textbf{resumed thread} that next acquires the lock (if any),
with an edge to it from the node corresponding to the originally suspended
thread.

The core nodes are further classified into \textbf{program nodes}
and \textbf{buffer nodes}. The program nodes (here \textbf{$P$-nodes})
correspond to nodes in the program DAG $D$, and they generate only
program nodes except for data structure calls. An $M$-call generates
a \textbf{buffer node} corresponding to passing the call to the parallel
buffer. This buffer node generates more buffer nodes, until at some
point it generates an $M$-node (every \textbf{$M$-node} is a ds
node), corresponding to the actual operation on $M$, which passes
the \textbf{input batch} to $M$. That $M$-node generates only $M$-nodes
except for when it returns the result of some operation in the batch
(generating a program node with an edge to it from the original $M$-call),
or when it becomes ready for input (generating a buffer node that
initiates flushing of the parallel buffer).

\subsection*{Effective Cost}

We shall now precisely define the notion of effective work/span/cost
for a parallel data structure used by a (terminating) parallel program.
\begin{defn}[Effective Work/Span/Cost]
\label{def:effective} Take any program $P$ using a batched data
structure $M$ on $p$ processors. Let $E$ be the actual execution
DAG of $P$ using $M$. Then the \textbf{effective work} taken by
$M$ (as used by $P$) is the total number $w$ of $M$-nodes in $E$.
And the \textbf{effective span} taken by $M$ is the maximum number
$v$ of $M$-nodes on a path in $E$. And the \textbf{effective cost}
of $M$ is $\frac{w}{p}+v$.
\end{defn}
The effective cost has the desired property that it is \textbf{subadditive}
across multiple parallel data structures. This implies that our results
are \textbf{composable} with other data structures in this model,
since we actually show the following for some linearization $L$:
\begin{itemize}
\item (\ref{thm:M1-work} and \ref{thm:M1-span}) $M_{1}$ takes $O\left(W_{L}+e_{L}\cdot\log p\right)$
effective work and $O\left(\frac{W_{L}}{p}+d\cdot\left((\log p)^{2}+\log n\right)\right)$
effective span (using any scheduler).
\item (\ref{thm:M2-work} and \ref{thm:M2-span}) $M_{2}$ takes $O\left(W_{L}+e_{L}\cdot\log p\right)$
effective work and $O\left(\frac{W_{L}}{p}+d\cdot(\log p)^{2}+s_{L}\right)$
effective span (using a weak-priority scheduler (\ref{sub:weak-priority-sched})).
\end{itemize}
Interestingly, the bound for the effective cost of $M_{1}$ is independent
of the scheduler, while the effective cost bound for $M_{2}$ requires
a weak-priority scheduler. In addition, using any greedy scheduler,
the parallel buffer for either map $M$ has effective cost (analogously
defined) at most $O\left(\frac{T_{1}+w_{M}}{p}+d\cdot\log p\right)$
where $w_{M}$ is the effective work taken by $M$ (Appendix \ref{rem:par-buff-cost}).
Therefore our main results (\ref{thm:M1-perf} and \ref{thm:M2-perf})
follow from the above claims.

\subsection*{Memory Model \phantomsection}

\label{sub:mem-model}

Unless otherwise stated, we work within the \textbf{pointer machine
model} for parallel programs given by Goodrich and Kosaraju~\cite{goodrich1996parallelsort}~\footnote{In short, the main memory can be accessed only via pointers, which
can only be stored, dereferenced or tested for equality (no pointer
arithmetic). }. But instead of having synchronous processors, we introduce a new
more realistic~\footnote{Exclusive reads/writes (EREW) is too strict, while concurrent reads/writes
(CRCW) does not realistically model the cost of contention, as stated
in \cite{gibbons1998qrqw}.} \textbf{QRMW} model with queued read-modify-write operations (including
read, write, test-and-set, fetch-and-add, compare-and-swap) as described
in \cite{dwork1997contention}, where multiple memory requests to
the same memory cell are FIFO-queued and serviced one at a time, and
the processor making each memory request is blocked until the request
has been serviced. Our data structures can hence be implemented and
used in the dynamic multithreading paradigm.

This QRMW pointer machine model supports \textbf{binary fork/join}
primitives. It cannot support constant-time random-access locks, but
it supports \textbf{non-blocking locks} (try-locks), where attempts
to acquire the lock are serialized but do not block. Acquiring a non-blocking
lock succeeds if the lock is not currently held but fails otherwise,
and releasing always succeeds. If $k$ threads concurrently access
a non-blocking lock, then each access completes within $O(k)$ time
steps. Non-blocking locks can be used to support \textbf{activation
calls} to a process, where \textbf{activating} a process will start
its execution iff it is not already executing and it is ready (some
condition is satisfied), and the process can optionally reactivate
itself on finishing.

We can also implement a \textbf{dedicated lock}, which is a blocking
lock initialized with keys $[1..k]$ for some constant $k$, such
that simultaneous acquisitions must be made using distinct keys. When
a thread attempts to acquire a dedicated lock, it is guaranteed to
obtain the lock after at most $O(1)$ other threads that attempt to
acquire the lock at the same time or later.

\section{Amortized Sequential Working-set Map}

\label{sec:ASWM}

In this section we explain the amortized sequential working-set map
$M_{0}$, which is similar to Iacono's working-set structure \cite{iacono2001wstree},
but does not move an accessed item all the way to the front. This
\textbf{\textit{localization of self-adjustment}} is the basis for
parallelizing it as in $M_{2}$.

$M_{0}$ keeps the items in a list with segments $S[0..l]$. Each
segment $S[k]$ has capacity $2^{2^{k}}$ and every segment is full
except perhaps the last. Items in each segment are stored in both
a key-map and a recency-map, each of which is a BBT (balanced binary
tree), sorted by key and by recency respectively. Consider any item
$x$ currently in segment $S[k]$. On a search of $x$, if $k=0$
then $x$ is moved to the front (most recent; i.e. first in the recency-map)
of $S[0]$, otherwise $x$ is moved to the front of $S[k-1]$ and
the last (least recent) item of $S[k-1]$ is shifted to the front
of $S[k]$. On a deletion of $x$, it is removed and for each $i\in[k..l-1]$
the first (most recent) item of $S[i+1]$ is moved to the back of
$S[i]$. On an insertion, the item is added at the back of $S[l]$
(if $S[l]$ is full, then it is added to a new segment $S[l+1]$).

We now prove an abstract lemma about a list with operations and costs
that mimic $M_{0}$. We will later use this same lemma to analyze
$M_{1}$ and $M_{2}$ as well.
\begin{lem}[Working-Set Cost Lemma]
\label{lem:work-set-cost} Take any sequence $L$ of operations on
an abstract list $R$, each of which is a search, insert, delete or
demote (described below), and a constant $d\ge0$, such that the following
hold (where $n$ is the current size of $R$):
\begin{itemize}
\item Searching for an item with rank $q$ in $R$ costs $O(\log q+1)$
and it is pulled forward to within the first $2^{d}\cdot q^{1/2}$
items in $R$.
\item Searching for an item not in $R$ costs $O(\log(n+1)+1)$.
\item Inserting or deleting an item costs $O(\log(n+1)+1)$.
\item Demoting an item in $R$ costs $0$ and pushes it backward in $R$,
but that item subsequently can only be demoted or deleted.
\end{itemize}
Then the total cost of performing $L$ on $R$ is $O(W_{L})$, where
demotions are ignored in computing $W_{L}$ (they are not counted
as accesses).\end{lem}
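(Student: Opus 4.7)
My plan is to establish this by an amortized per-item charging argument. First I would dispatch the easy cases: insertions, deletions, and unsuccessful searches all have access rank $n+1$, so each contributes $\Theta(\log(n+1)+1)$ to $W_L$, matching the order of its actual cost as stipulated by the lemma; demotions cost zero and do not appear in $W_L$. It therefore suffices to pay for the successful searches.

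The crux is a positional inequality for each item. Consider the ``life'' of an item $x$ in $R$, from the insertion $\text{op}_1$ that introduces $x$ until the next deletion or demotion of $x$. Let $\text{op}_2, \ldots, \text{op}_k$ be the successful searches on $x$ during this life, with $\text{op}_i$ having rank $q_i$ and access rank $r_i$, and let $q_0^{(i)}$ denote $x$'s rank in $R$ immediately after $\text{op}_{i-1}$. I would prove the inequality $q_i \leq q_0^{(i)} + (r_i - 1)$. The argument is that between $\text{op}_{i-1}$ and $\text{op}_i$, $x$'s rank can increase only via (a) a successful search of some item $y$ currently behind $x$ that gets pulled in front of $x$, or (b) an insertion placed ahead of $x$. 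For any fixed $y$, only its first qualifying search in the interval can contribute a backward push on $x$: once $y$ is in front of $x$ it stays there, because subsequent searches of $y$ only pull it more forward, shifts caused by searches of other items preserve the relative order of $y$ and $x$, and a demotion or deletion of $y$ removes $y$ from the pool of searchable items. Demotions and deletions of items other than $x$ shift intervening items forward rather than backward, so they cannot push $x$ backward. Hence the number of distinct backward-pushing events on $x$ in this interval is bounded by the number of distinct items searched or inserted during it, which is exactly $r_i - 1$.

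Next I would set up a simple recursion. For $i \geq 3$, $\text{op}_{i-1}$ is a successful search, so by the lemma's movement property $q_0^{(i)} \leq 2^d q_{i-1}^{1/2}$; combined with the positional inequality and the crude estimate $\log(a+b) \leq 1 + \log a + \log b$ for $a,b \geq 1$, this yields $\log q_i \leq (d+1) + \tfrac{1}{2}\log q_{i-1} + \log r_i$. Unrolling this geometric recurrence and summing over $i$ bounds $\sum_{i=2}^{k} \log q_i$ by $O\bigl(\log q_2 + k d + \sum_{i=2}^{k} \log r_i\bigr)$. For $i = 2$ we have $q_2 \leq q_0^{(2)} + r_2 - 1 \leq (n_1 + 1) + r_2$, where $n_1$ is the size of $R$ right after $\text{op}_1$; so $\log q_2 = O(\log(n_1+1) + \log r_2)$, with the $\log(n_1+1)$ term charged to the insertion's $W_L$ contribution of $\Theta(\log(n_1+1)+1)$. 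Thus the total cost over the life of $x$ is $O(W_L^x)$, where $W_L^x$ is the sum of $W_L$ contributions of operations on $x$; summing over all items and all delete-and-reinsert lives (together with the direct contributions from insertions, deletions, and unsuccessful searches) gives the claimed $O(W_L)$ total.

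The main obstacle will be rigorously verifying the positional inequality, and especially the case analysis that neither demotions nor deletions push $x$ backward, and that a previously searched item $y$ which is later demoted past $x$ cannot re-contribute to $x$'s backward pushes. One must carefully track how the relative ordering of items around $x$ evolves under all four operation types and be certain that the ``at most one backward push per distinct accessed item'' accounting is tight; once this invariant is in hand, the remainder of the proof is a routine unrolling of the recurrence.
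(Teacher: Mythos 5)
Your proof is correct, and its mathematical core is the same as the paper's: both arguments rest on the two facts that (i) between consecutive operations on $x$ its rank in $R$ grows by at most its access rank, and (ii) a successful search contracts the rank to $2^{d}\sqrt{q}$, so the history decays geometrically. The difference is purely in the bookkeeping. The paper runs an accounting/credit argument: each item carries a stored credit $c$ (morally $\tfrac12\log q$ of its last touch) and the invariant ``every item with $R$-recency $r$ and credit $c$ is within the first $2^{c+2d+1}+r$ positions'' is exactly your positional inequality fused with the contraction; the case split on $r\gtrless 2^{c+2d+1}$ is deciding which of the two terms in your recurrence $\log q_i\le O(1)+\tfrac12\log q_{i-1}+\log r_i$ dominates. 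You instead unroll that recurrence explicitly per item and sum, charging $\log q_2$ to the insertion that began the item's life. Both routes are sound; the paper's version localizes all the verification into ``the invariant is preserved for every other item $y$,'' whereas yours localizes it into the positional inequality.

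One point to nail down in your positional inequality, which you partly flag: the access rank counts only distinct items \emph{currently in the map}, so an item $y$ that is searched or inserted during the interval but then deleted before $\mathrm{op}_i$ does not contribute to $r_i$, yet your justification as stated bounds the pushes by \emph{all} distinct items touched. The fix is the observation that the net displacement of $x$ attributable to $y$ equals the change in the indicator ``$y$ is in $R$ and ahead of $x$'' over the interval, which is at most $1$ and is $0$ for any $y$ absent at time $\mathrm{op}_i$; this is precisely the content of the paper's closing paragraph checking the invariant under deletions and demotions of other items. With that cancellation in hand, your argument goes through.
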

\begin{proof}
We shall perform the analysis via the accounting method; each operation
on $L$ has a budget according to $W_{L}$, and we must use those
credits to pay for that operation, possibly saving surplus credits
for later. Define the $R$-recency of an item $x$ in $R$ to be the
number of items in $R$ that have been inserted before or pulled forward
past $x$ in $R$ since the last operation on $x$. Clearly, for each
search/insertion/deletion of an item $x$ in $L$, its access rank
(actual recency) is at least the $R$-recency of $x$. Each item in
$R$ has some stored credit, and we shall maintain the invariant that
every item in $R$ with $R$-recency $r$ and stored credit $c$ is
within the first $2^{c+2d+1}+r$ items in $R$ or has been demoted.
The invariant trivially holds at the start.

First we show that, on every operation on an item $x$, the invariant
can be preserved for $x$ itself. For insertion/deletion or unsuccessful
search for $x$, the budget of $\Theta(\log(n+1)+1)$ can pay for
the operation and (for insertion) also pay for the stored credit for
$x$. For successful search for $x$, it is as follows. Let $c$ be
the stored credit and $r$ be the $R$-recency of $x$ before the
operation, and let $q$ be the rank of $x$ in $R$ after that. By
the invariant, $x$ was within the first $2^{c+2d+1}+r$ items in
$R$ before the operation. Also the budget is $\Omega(\log r+1)$.
If $r\ge2^{c+2d+1}$, then $q\le2^{d}\cdot\sqrt{2r}$ and so the budget
can pay for both the operation cost and a new stored credit of $\log\sqrt{2r}$.
If $r\le2^{c+2d+1}$, then $q\le2^{d}\cdot\sqrt{2^{c+2d+1}\cdot2}=2^{c/2+2d+1}$
and so the stored credit can pay for the operation cost and a new
stored credit of $c/2$.

Finally we check that the invariant is preserved for every other item
$y$ in $M$. For search/insertion of $x$, the rank of $y$ in $R$
changes by the same amount as its $R$-recency. For deletion of $x$,
if $x$ is after $y$ in $R$ then $y$ is more recent than $x$ and
so the $R$-recency of $y$ does not change, and if $x$ is before
$y$ in $R$ then the rank of $y$ in $R$ decreases by $1$ and its
$R$-recency decreases by at most $1$. For a demotion, every other
item's rank in $R$ does not increase.
\end{proof}
This lemma implies that $M_{0}$ has the desired working-set property.
\begin{thm}[$M_{0}$ Performance]
\label{thm:M0-perf} The cost of $M_{0}$ satisfies the working-set
bound.\end{thm}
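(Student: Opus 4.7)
The plan is to apply the \nameref{lem:work-set-cost} to $M_0$, taking the abstract list $R$ to be the concatenation of the segments $S[0],S[1],\ldots,S[l]$ with items inside each segment listed in recency order (most recent first). Since $M_0$'s operation sequence is exactly the given sequence of map operations and we will introduce no explicit demotions, obtaining a total cost of $O(W_L)$ from the lemma will be precisely the working-set bound.

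First I would verify the cost and pull-forward conditions for a successful search for $x$ in some segment $S[k]$. Since $x$ lies in $S[k]$, its rank $q$ in $R$ satisfies $q \ge 1 + \sum_{i<k} 2^{2^i} \ge 2^{2^{k-1}}$ when $k \ge 1$, hence $2^k = O(\log q + 1)$. The search performs a BBT lookup in each of $S[0],\ldots,S[k]$, for total cost $O(\sum_{i\le k} 2^i) = O(2^k) = O(\log q + 1)$, and then moves $x$ to the front of $S[\max(k-1,0)]$, whose position in $R$ is at most $1 + \sum_{i<k-1} 2^{2^i} = O(2^{2^{k-2}}) = O(\sqrt{q})$ for $k\ge 2$ (and trivially $O(1)$ for $k \le 1$); this gives the pull-forward condition with a small constant $d$. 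For unsuccessful searches, insertions, and deletions the cost is bounded by the total BBT work over all $l+1$ segments, namely $O(2^l) = O(\log n + 1)$.

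The main subtlety will be to confirm that the internal shifts caused by a search or deletion --- in a search on $x\in S[k]$, the old back of $S[k-1]$ moves to the front of $S[k]$ and the intermediate items of $S[k-1]$ and the head of $S[k]$ each shift back by one position in $R$, and analogously for the cascade in a deletion --- are consistent with the lemma's implicit bookkeeping, so that no explicit demote operations need to be inserted. My plan is to argue, case by case on the displaced item $y$, that each $+1$ change in $y$'s rank in $R$ is caused by exactly one other item being pulled forward past $y$ (namely $x$ itself, for every such $y$ in the relevant shift region), so $y$'s $R$-recency also increases by one and matches the rank change required by the lemma. Once this consistency is checked, the lemma's invariant-based analysis applies verbatim to $L$ and yields total cost $O(W_L)$, which is the working-set bound.
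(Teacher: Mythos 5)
Your proposal is correct and follows the paper's own proof essentially verbatim: both instantiate the \nameref{lem:work-set-cost} with $R$ being the segment-then-recency ordering, verify the $O(\log q+1)$ search cost and the new-rank bound $q'\le 2q^{1/2}$ (so $d=1$ suffices), and bound insertions/deletions by $O(\log(n+1)+1)$ via $2^{l-1}\le\log(n+1)$, with no demotions needed. The ``main subtlety'' you flag is already absorbed by the lemma (whose proof's final paragraph checks exactly this bookkeeping for the other items), since a search leaves the relative order of all items other than $x$ unchanged --- the displaced back item of $S[k-1]$ keeps its position in $R$ up to the single shift caused by $x$ being pulled forward.
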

\begin{proof}
Let $n$ be the number of items in $M_{0}$. By construction, $2^{l-1}\le\log(n+1)$,
and each operation on $M_{0}$ takes $O\left(2^{k}\right)$ time on
segment $S[k]$. Thus each insertion/deletion takes $O(\log(n+1)+1)$
time, and each access/update of an item with rank $q$ in $M_{0}$
(in order of segment followed by order in the recency-map) takes $O(\log q+1)$
time. Also, on each access of an item $x$ with rank $q$ in $M_{0}$,
its new rank $q'$ is at most $2q^{1/2}$, because if $x$ is in $S[0..1]$
then $q'=1$, and if $x$ is in $S[k]$ for some $k>1$ then $q\ge2^{2^{k-1}}$
and $q'\le2^{2^{k-2}}\cdot2$. Thus by the \nameref{lem:work-set-cost}
(\ref{lem:work-set-cost}) we are done.
\end{proof}

\section{Simple Parallel Working-Set Map}

\label{sec:PWM1}

We now present our simple batched working-set map $M_{1}$. The idea
is to use the amortized sequential working-set map $M_{0}$ (\ref{sec:ASWM})
and execute operations in batches~\footnote{\label{fn:batch}Each batch is stored in a leaf-based balanced binary
tree for efficient processing.}. In order to get the bound we desire, however, we must combine operations
in a batch that access the same item. In particular, for consecutive
accesses to the same item, all but the first one should cost $O(1)$.
Therefore, we must sort the batch (efficiently) using the \nameref{def:par-esort}
(Appendix \ref{def:par-esort}), to `combine duplicates'. We also
control the size of batches --- if batches are too small, then we
lose parallelism; if the batches are too large, then the sorting cost
is too large.

\subsection{Description of $M_{1}$}

\label{sub:M1-desc}

\begin{figure}[H]
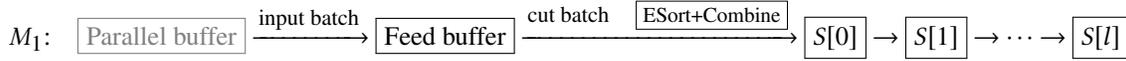

\noindent \begin{centering}
$M_{1}$:\quad{}$\grey{\tbox{Parallel\ buffer}}\xrightarrow{\text{input batch}}\tbox{Feed\ buffer}\xrightarrow{\text{cut batch}\quad\stbox{\text{ESort}+\text{Combine}}}\boxed{S[0]}\to\boxed{S[1]}\to\cdots\to\boxed{S[l]}$
\par\end{centering}

\caption{\label{fig:M1-flow}$M_{1}$ Outline}
\end{figure}

As described in the \ref{sec:model}, $M_{1}$-calls in the program
are put into the parallel buffer for $M_{1}$. When $M_{1}$ is ready
(i.e. the previous batch is done), we flush the parallel buffer to
obtain the next input batch $I$, which we cut and store in a \textbf{feed
buffer}, which is a queue of bunches (a \textbf{bunch} is a set supporting
$O(1)$ addition of a batch and $O(\log b)$-span conversion to a
batch if it has size $b$)~\footnote{\label{fn:bunch}A bunch is implemented using a complete binary tree
with batches at the leaves, with a linked list threaded through each
level to support adding a leaf in $O(1)$ steps.} each of size $p^{2}$ except possibly the last. Specifically, we
divide $I$ into small batches of size $p^{2}$ except possibly the
first and last, the first having size $\min\left(b,p^{2}-q\right)$,
where $b$ is the size of $I$ and $q$ is the size of the last bunch
$F$ in the feed buffer. Next we add that first small batch to $F$,
and append the rest as bunches to the feed buffer. Let $n$ be the
current size of $M_{1}$. Then we remove the first $\ceil{\frac{\log n}{p}}$
bunches from the feed buffer, convert them into batches, merge them
in parallel into a \textbf{cut batch} $B$, and process it as follows.

First we sort $B$ using the \nameref{def:par-esort} (Appendix \ref{def:par-esort}).
Then we combine all operations on each item into one \textbf{group-operation}~\footnote{\label{fn:group-op}Each group-operation stores the original operations
as a batch.} that is treated as a single operation with the same effect as the
whole group of operations in the given order.

We pass the resulting batch (of group-operations) through each segment
from $S[0]$ to $S[l]$. At segment $S[k]$, first we search for the
relevant items. For insertions, if the item is found then we treat
it as an update. For successful searches/updates, we return the results
immediately, and shift the items (keeping their relative order) to
the front of the previous segment $S[k-1]$ ($S[0]$ if $k=0$). For
deletions, if the item is found then we delete the item. Next, we
restore the capacity invariant for $S[0..k-1]$ --- for each segment
$S[i]$ from $S[k]$ to $S[1]$, we transfer the appropriate number
of items between the front of $S[i]$ and the back of $S[i-1]$ so
that either $S[0..i-1]$ have total size $\sum_{j=0}^{i-1}2^{2^{j}}$
or $S[i]$ is empty. Then we pass the unfinished (unreturned) operations
(including all deletions) on to the next segment.

At the end, we handle the remaining insertions. First we insert at
the back of $S[l]$ up to its capacity. If there are leftover items,
we create segments $S[l+1..l']$ with just enough capacity, and carve
out the correct amount for each segment from $S[l']$ to $S[l+1]$
in that order.

Finally, we return the results for the insertions/deletions and the
unsuccessful searches/updates, and we are done processing the batch.

To parallelize the above, we need to be able to efficiently delete,
from each segment, any sorted batch of items or any number of the
most/least recent items. For this we replace each BBT by a \nameref{sub:par-2-3-tree}
(Appendix \ref{sub:par-2-3-tree}), where each leaf in the key-map
also has a direct pointer to the corresponding leaf in the recency-map,
and vice versa. Given a sorted batch of items, we can find them by
one batch operation on the key-map, and then we have a batch of direct
pointers to the leaves for these items in the recency-map, and hence
can perform one reverse-indexing operation on the recency-map to obtain
a sorted batch of indices for those leaves, with which we can perform
one batch operation to delete them. Similarly, to remove the $b$
most recent items from a segment, we find them via the recency-map,
and then do a reverse-indexing operation on the key-map to obtain
them in sorted order, whence we can delete them.

\subsection{Analysis of $M_{1}$}

We first bound the cost of parallel entropy-sorting each batch (\ref{lem:batch-sort-cost}).
To do so, we will find a batch-preserving linearization (\ref{def:batch-preserve})
$L$ such that for each batch of size $b$, the entropy bound is at
most the insert working-set bound (\ref{def:ins-work-set}), which
we in turn bound by the cost according to the working-set bound $W_{L}$,
plus $O(\log b)$ per operation when the map is small (i.e. has size
$b^{O(1)}$). This extra cost arises when a batch has many operations
on distinct items while the map is small, which according to $W_{L}$
are cheap.
\begin{defn}[Batch-Preserving Linearization]
\label{def:batch-preserve} Take any sequence $I$ of batches of
operations on a map $M$. We say that $L$ is a \textbf{batch-preserving
linearization} of $I$ if $L$ is a permutation of the operations
in $I$ that preserves the ordering of batches and (within each batch)
the ordering of operations on the same item.\end{defn}
\begin{rem*}
\label{rem:batch-preserve-inv} When any two batch-preserving linearizations
of $I$ are performed on $M$, the items in $M$ are the same after
each batch, and the successfulness of each operation remain the same.\end{rem*}
\begin{defn}[Insert Working-Set Bound]
\label{def:ins-work-set} The \textbf{insert working-set bound} $IW_{L}$
for any sequence $L$ of map operations is the working-set bound for
`inserting' the items in $L$ in the given order (ignoring the actual
operations) into an empty map, namely for each item first searching
for it and then inserting it iff it is absent.\end{defn}
\begin{lem}[Batch-Sorting Cost Lemma]
\label{lem:batch-sort-cost} Take any sequence $I$ of batches of
operations on a map $M$, and any constant $\e>0$. Then there is
a batch-preserving linearization $L$ of $I$ such that parallel entropy-sorting
(Appendix \ref{def:par-esort}) each batch in $I$ takes $O(W_{L}+\sum_{B\in L}e_{B}\cdot\log b_{B})$
total work over all batches, where each batch $B$ in $L$ has size
$b_{B}$ and has $e_{B}$ operations that are performed when $M$
has size less than ${b_{B}}^{\e}$ (when $L$ is performed on $M$).\end{lem}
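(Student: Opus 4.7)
The plan is to split the analysis into two stages, mirroring the approach sketched in the paragraph preceding the lemma: first construct a batch-preserving linearization $L$ and bound the per-batch parallel entropy-sort cost against the insert-working-set cost $IW_L$ of the batch (\ref{def:ins-work-set}); second, bound $IW_L$ globally by $W_L$ plus the slack $\sum_B e_B \log b_B$.

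I would define $L$ batch-by-batch: within each batch $B$ of size $b=b_B$ with distinct items $x_1,\ldots,x_k$ having multiplicities $b_1,\ldots,b_k$, arrange the operations via a weighted round-robin schedule that spreads each $x_i$'s $b_i$ occurrences as uniformly as possible across the $b$ slots (while preserving the relative order of operations on the same item). Concatenating these orderings across batches yields a valid batch-preserving linearization.

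The heart of the argument is to show that the parallel entropy-sort work for each batch $B$, which is $O(b+\sum_i b_i\log(b/b_i))$ by the entropy bound, is within a constant factor of the IWS contribution from $B$ under $L$. The key combinatorial claim is that under the weighted round-robin ordering, consecutive $x_i$-ops are spaced by $\Theta(b/b_i)$ positions, and within each such gap at least $\Omega(\min(k,\,b/b_i))$ distinct items are accessed. This gives each subsequent $x_i$-op an IWS-rank of $\Omega(b/b_i)$, contributing $\log(b/b_i)$ per op to $IW_L$; summing over items yields $\Omega(\sum_i b_i\log(b/b_i))$, matching the sort cost up to constants.

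Finally, $IW_L$ exceeds $W_L$ only at the ``virtual insert'' step taken on the first occurrence of each item (cost $O(\log(n+1))$), and at operations whose actual type differs from ``search then insert-if-absent''. At any such op with $n\ge b^\e$ the excess is within a constant factor of the ordinary WS cost of that op, already counted in $W_L$; at any op with $n<b^\e$ the excess is $O(\log b)$ and is absorbed by the $e_B\log b$ slack. Summing the per-batch bounds gives the claimed total. The main obstacle is proving the combinatorial spread claim, especially for highly skewed multiplicities: a dominant item's consecutive occurrences may be separated by only a few distinct light items, so one must simultaneously account for the $\Omega(\log n)$ first-op IWS contribution of each light item (available whenever $n$ is large) in order to cover the residual entropy. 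I expect this to require a potential or amortized argument over the schedule, together with a short case split on whether the map size exceeds $b^\e$.
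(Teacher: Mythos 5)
There is a genuine gap, in fact two. First, your key combinatorial claim about the weighted round-robin ordering is false as stated: it is not true that each subsequent $x_i$-operation gets IWS-rank $\Omega(b/b_i)$. Take a batch with one heavy item occurring $b-\sqrt{b}$ times and one light item occurring $\sqrt{b}$ times; round-robin separates consecutive occurrences of the light item by $\Theta(\sqrt{b})$ positions, but every position in the gap is the \emph{same} heavy item, so the rank is $O(1)$, not $\Omega(\sqrt b)$ (here the totals happen to be rescued by the linear term, but the per-operation implication you rely on is broken, and you yourself flag that you do not know how to repair it for skewed multiplicities). The paper avoids this entirely: it never constructs an ordering. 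It chooses $L$ to be the batch-preserving linearization \emph{maximizing} $IW_B$ for each batch, and then gets ``$IW_B\ge$ entropy bound'' for free from an information-theoretic argument (\ref{thm:work-set-worst}): any comparison sort needs $\Omega(bH+b)$ comparisons on average over orderings with the given frequencies, while $ESort$ runs in $\Theta(IW)$ (\ref{thm:seq-esort-perf}), so \emph{some} ordering has $IW=\Omega(bH+b)$, hence so does the maximizing one; combined with \ref{thm:par-esort-perf} this bounds the sort work by $O(IW_B)$ with no spreading lemma at all. If you insist on a constructive ordering you would need an amortized argument of the kind you gesture at, which is substantially harder than the lemma itself.

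Second, your comparison of $IW_L$ against $W_L$ is done per operation, and the claim ``at any such op with $n\ge b^{\e}$ the excess is within a constant factor of the ordinary WS cost of that op'' fails for first occurrences within a batch that are successful searches of globally recent items: an item touched at the very end of the previous batch costs $O(1)$ under $W_L$ but up to $\Theta(\log u)$ under the insert working-set bound, so the excess is not chargeable to that operation even though the map is large. The paper handles exactly this by an \emph{aggregate} argument: split $B$ into the first occurrences $B_0$ and the rest $B_1$; for $B_1$ the per-operation comparison does work ($IC_1\le C_1$), and for $B_0$ one cases on whether $e_B>\tfrac12 u$ (then $IC_0=O(e_B\log b)$) or $e_B\le\tfrac12 u$ (then at least $\tfrac12 u$ of the $B_0$-operations are performed at size $\ge b^{\e}$, forcing $C_0=\Omega(u\log u)$, while $IC_0=O(u\log u)$ always). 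Your second stage needs this aggregate case analysis (or an equivalent counting of distinct first occurrences) rather than an op-by-op charge.
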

\begin{proof}
Let $L$ be a batch-preserving linearization of $I$ such that each
batch $B$ in $L$ has the maximum insert working-set bound $IW_{B}$
(\ref{def:ins-work-set}). By the \nameref{thm:work-set-worst} (Appendix
\ref{thm:work-set-worst}) $IW_{B}$ is at least the entropy bound
for $B$. Thus parallel entropy-sorting $B$ takes $O(IW_{B})$ work
(Appendix \ref{thm:par-esort-perf}).

Let $b$ be the size of $B$, and $u$ be the number of distinct items
(accessed by operations) in $B$. Partition $B$ into subsequences
$B_{0}$ and $B_{1}$ such that $B_{0}$ has only the first operation
of every distinct item in $B$. For each $i\in\{0,1\}$, let $C_{i}$
be the cost of the operations in $B_{i}$ according to $W_{L}$, and
let $IC_{i}$ be the cost of the operations in $B_{i}$ according
to $IW_{B}$, so $IW_{B}=IC_{0}+IC_{1}$. Let $e_{B}$ be the number
of operations in $B$ performed when $M$ has size less than $b^{\e}$
(according to $L$).

Note that $IC_{1}\le C_{1}$ because every operation in $B_{1}$ is
a successful search according to $IW_{B}$ with access rank no more
than according to $W_{L}$. Thus it suffices to show that $IC_{0}\in O(C_{0}+e_{B}\cdot\log b)$.

If $e_{B}>\frac{1}{2}u$, then obviously $IC_{0}\in O(e_{B}\cdot\log b)$.

If $e_{B}\le\frac{1}{2}u$, then at least $\frac{1}{2}u$ operations
in $B_{0}$ are performed when $M$ has size at least $b^{\e}$. So
according to $W_{L}$, each of those operations has access rank at
least $b^{\e}$ and hence costs $\log(b^{\e})\in\Omega(\log u)$.
Thus $C_{0}\in\Omega(u\cdot\log u)$. Also, $IC_{0}\in O(u\cdot\log u)$,
since any insertion on a map with at most $u$ items has access rank
$O(u)$.
\end{proof}

Next we prove a simple lemma that allows us to divide the work done
on the segments among the operations.
\begin{lem}[$M_{1}$ Segment Work]
\label{lem:M1-segment-work} Each segment $S[k]$ takes $O\left(2^{k}\right)$
work per operation that reaches it.\end{lem}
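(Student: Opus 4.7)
The plan is to bound the total work spent on segment $S[k]$'s data structures during the processing of one batch, and show it is $O(b_k \cdot 2^k)$, where $b_k$ is the number of operations in the batch that reach $S[k]$; dividing by $b_k$ then gives the claimed per-operation bound.

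The first key observation is that the key-map and recency-map of $S[k]$ are each parallel 2-3 trees holding at most $\Theta(2^{2^k})$ items (the steady-state capacity, plus any in-flight adjustment that I will argue is bounded), so their height is $O(2^k)$. Consequently, by the parallel 2-3 tree's batch-op guarantees, any batch operation of size $b'$ on these structures --- batch search, batch delete, batch insert at an end, or batch reverse-indexing --- takes $O(b' \cdot 2^k)$ work.

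Next, I would enumerate all batch operations touching $S[k]$'s structures during the processing of a single batch: (i) the initial batch search on $S[k]$'s key-map for the $b_k$ items passed in from $S[k-1]$; (ii) a reverse-indexing on $S[k]$'s recency-map to turn the found key-map leaves (via their direct pointers) into sorted indices; (iii) removal of the found items (successful searches, updates, and deletions, at most $b_k$ of them) from both the key-map and the recency-map; (iv) capacity-restoration transfers at level $k$, which during $S[k]$'s own processing move at most $b_k$ items between the front of $S[k]$ and the back of $S[k-1]$, and during $S[k+1]$'s processing move at most $b_{k+1} \le b_k$ items between the front of $S[k+1]$ and the back of $S[k]$. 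Each is a constant number of batch operations of size $O(b_k)$, so by the tree cost above the total work on $S[k]$ is $O(b_k \cdot 2^k)$.

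The main obstacle is verifying that capacity-restoration does not cause further traffic at level $k$ during the processing of distant segments $S[j]$ with $j > k+1$. I would argue inductively that at the end of $S[j]$'s processing the invariant ``$S[0..j-1]$ has its full target total size or $S[j]$ is empty'' is restored, so no deficit at level $i \le k$ survives to be propagated by later batches. Hence only $S[k]$'s and $S[k+1]$'s processing contribute traffic at level $k$, confirming the $O(b_k)$ bound on items moved and therefore the $O(2^k)$-per-reaching-operation bound on the work done at $S[k]$.
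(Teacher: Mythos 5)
Your proof is essentially correct but attributes the work differently from the paper, and the difference is worth noting. The paper charges to segment $S[k]$ \emph{everything done while the batch is visiting $S[k]$}, including the capacity-restoration transfers at \emph{all} levels $i\le k$ (the restoration loop at $S[k]$ runs from $i=k$ down to $i=1$); it then observes that each level $i$ incurs at most one transfer per operation, each costing $O(2^i)$ because $S[0..i]$ holds at most $2\cdot 2^{2^i}$ items, and the geometric sum $\sum_{i\le k}2^i$ gives $O(2^k)$ per operation. You instead charge to $S[k]$ only the work on $S[k]$'s own trees, and must therefore prove the locality claim that traffic on $S[k]$'s trees arises only during the visits to $S[k]$ and $S[k+1]$. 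That claim is the delicate part of your route: the invariant the algorithm maintains is the disjunction ``$S[0..i-1]$ is at its target total size \emph{or} $S[i]$ is empty,'' and in the under-full regime (after many deletions, or while the map is still small) the second branch can hold, so a surviving deficit can trigger cascading transfers at level $k$ during the visit to some $S[j]$ with $j>k+1$ --- exactly the case your induction waves at but does not close. The paper's geometric-sum accounting is robust to this: it does not need to know \emph{which} levels transfer during a given visit, only that each level transfers at most once per operation, so cascades across many levels are absorbed for free. In exchange, your decomposition is arguably more transparent about where each unit of tree work physically lands. Both accountings partition the same total and both yield the $O(2^k)$-per-reaching-operation bound; if you keep your version, you should either strengthen the invariant you induct on to rule out the empty-segment branch in the relevant regime, or fall back to the paper's per-visit, all-levels, geometric-sum charge, which avoids needing the locality claim altogether.
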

\begin{proof}
Searching/deleting/shifting the relevant items in the parallel 2-3
trees takes $O\left(2^{k}\right)$ work per operation. Also, for each
$i\le k$, the number of transfers (to restore the capacity invariant)
between $S[i-1]$ and $S[i]$ is at most the number of operations,
and each transfer takes $O\left(2^{i}\right)$ work because there
are always at most $\sum_{j=0}^{i}2^{2^{j}}\le2\cdot2^{2^{i}}$ items
in $S[0..i]$. Thus the transfers take $O\left(2^{k}\right)$ total
work.
\end{proof}
Then we can prove the desired effective work bound for $M_{1}$.

\clearpage{}
\begin{thm}[$M_{1}$ Effective Work]
\label{thm:M1-work} $M_{1}$ takes $O(W_{L}+e_{L}\cdot\log p)$
effective work for some linearization $L$ of $D$.\end{thm}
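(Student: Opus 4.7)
I would decompose the effective work of $M_1$ into three parts: (i) parallel entropy-sorting and combining duplicates in each cut batch; (ii) the segment traversal and rebalancing that processes the combined batch; and (iii) the feed-buffer bookkeeping. The plan is to bound each part separately and sum.

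For (i), I invoke the \nameref{lem:batch-sort-cost} on the sequence of cut batches formed by $M_1$, with a fixed constant $\e$ (for concreteness $\e = 1/2$). This yields a batch-preserving linearization $L$ for which the total entropy-sort work is $O(W_L + \sum_B e_B \log b_B)$. Each cut batch has size $b_B = O(\ceil{\log n / p} \cdot p^2)$, so $\log b_B = O(\log p)$ in the regime where $\log n \le p^{O(1)}$; choosing $\e$ so that $b_B^\e$ lies near the small-op threshold $p$ ensures that every op contributing to $e_B$ is a small-op of $L$, whence $\sum_B e_B \log b_B = O(e_L \log p)$.

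For (ii), I apply the \nameref{lem:work-set-cost} to a sequence $L'$ of group-ops on the abstract list $R$, where $R$ is the items of $M_1$ ordered by segment and within each segment by recency. By \ref{lem:M1-segment-work}, a group-op on an item at $R$-rank $q$ (hence in segment $S[k_q]$ with $q \ge 2^{2^{k_q - 1}}$) does $O(2^{k_q}) = O(\log q + 1)$ segment work, and the item then lands at the front of $S[k_q - 1]$; after rebalancing it lies within the first $\sum_{j=0}^{k_q-1} 2^{2^j} \le O(\sqrt{q})$ positions of $R$, satisfying the pull-forward hypothesis for a constant $d$. Unsuccessful searches, insertions, and deletions traverse all segments at $O(\log(n+1)+1)$ cost, matching the remaining hypotheses. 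The lemma then gives total segment work $O(W_{L'})$. Choosing $L$ so that within each cut batch the ops on the same item are consecutive makes the first op of each group inherit the group-op's access rank while the remaining $g-1$ ops each contribute $\Omega(1)$ to $W_L$, which both bounds $W_{L'} \le W_L$ and absorbs the $O(g)$ cost of returning the $g$ results; so segment work is $O(W_L)$.

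For (iii), each operation contributes only $O(1)$ amortized work to bunch maintenance and to the parallel merge of $\ceil{\log n/p}$ bunches into the cut batch, so this overhead is dominated by $W_L$. Summing the three parts yields the claimed $O(W_L + e_L \log p)$ bound.

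The main obstacle will be justifying the pull-forward hypothesis in step (ii) when one cut batch simultaneously promotes many items out of the same segment. The capacity invariant restored by rebalancing, $|S[0..k_q - 1]| = \sum_{j=0}^{k_q - 1} 2^{2^j}$, keeps the promoted items within $O(\sqrt{q})$ of the front of $R$; overflow items that rebalancing forces back into $S[k_q]$ act as demotions in the Working-Set Cost Lemma (cost $0$), and their work is charged against the $W_L$ budget of the promotions that caused the overflow.
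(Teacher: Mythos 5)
Your decomposition is the same as the paper's: buffering work at $O(1)$ per operation, sorting bounded via the \nameref{lem:batch-sort-cost} (\ref{lem:batch-sort-cost}) with the $e_B$-operations matched to small-ops, and segment work bounded by applying the \nameref{lem:work-set-cost} (\ref{lem:work-set-cost}) to the segment-then-recency list $R$ using \ref{lem:M1-segment-work} and the $O(\sqrt{q})$ pull-forward property. Two technical slips in your instantiation need fixing, though. First, $\e=\frac{1}{2}$ does not deliver your own criterion: when the map is small the cut batch can have size up to about $4p^2$, so $b_B^{1/2}\approx 2p>p$, and an operation performed at map size in $[p,2p)$ is counted in $e_B$ (costing $\Theta(\log p)$ in the sorting bound) while being neither a small-op nor necessarily worth more than $O(1)$ in $W_L$ (e.g.\ a repeated search of a just-accessed item); the paper takes $\e=\frac{1}{3}$ so that $(4p^2)^{\e}\le p$ for $p\ge4$, and disposes of large-map batches by noting the map size never drops below $p$ there. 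Second, modelling the items that rebalancing pushes back into $S[k]$ as demotions violates the hypothesis of \ref{lem:work-set-cost} that a demoted item is subsequently only demoted or deleted --- a pushed-back item can certainly be searched later. The correct observation is that these capacity-restoring transfers (in either direction) move items only across the boundary between adjacent segments and hence do not change any item's rank in $R$ at all, so they require no simulated list operation; the paper reserves demotions exclusively for deleted items, which it pretends are merely marked and carried rearward with the batch until the last segment (this also cleanly accounts for the $O(\log n)$ work a deletion incurs across all segments). With $\e$ reduced and the demotion modelling replaced by this rank-preservation observation, your argument matches the paper's proof.
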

\begin{proof}
Cutting the input batch of size $b$ from the parallel buffer into
small batches takes $O(b)$ work. Adding the first small batch to
the last bunch in the feed buffer takes $O(1)$ work. Inserting the
bunches into the feed buffer takes $O(b)$ work. Forming a cut batch
of size $b'$ (converting the bunches and merging the results) takes
$O(b')$ work. So all this buffering work adds up to $O(1)$ per map
operation.

Sorting the (cut) batches takes $O(W_{L}+e_{L}\cdot\log p)$ total
work (over all batches) for some linearization $L$, by the \nameref{lem:batch-sort-cost}
(\ref{lem:batch-sort-cost}). Specifically, we choose $\e=\frac{1}{3}$.
For each batch $B$, let $n$ be the size of $M_{1}$ just before
that batch, and then $B$ has size $b^{*}\le\ceil{\frac{\log n}{p}}\cdot p^{2}\le p\cdot\log n+p^{2}$
and:
\begin{itemize}
\item If $n\le3p^{2}$, then $b^{*}\le p\cdot\log n+p^{2}\le p\cdot\log(3p^{2})+p^{2}\le4p^{2}$
and so ${b^{*}}^{\e}\le p$ (as $p\ge4$).
\item If $n>3p^{2}$, then $n>p\cdot\log n+p^{2}+p\ge b^{*}+p$ so none
of the operations in that batch can be small-ops.
\end{itemize}
It now suffices to show that the work on segments is $O(W_{L'})$
for some linearization $L'$ (since either $L$ or $L'$ suffices
for the final bound). For this, we pretend that a deleted item is
marked rather than removed, and when a segment is filled to capacity
all marked items are simultaneously transferred to the next segment,
and at the last segment the marked items are removed. This takes more
work than what $M_{1}$ actually does, but is easier to bound.

We shall now use the \nameref{lem:work-set-cost} (\ref{lem:work-set-cost})
on the list $R$ of the items in $M_{1}$ (including the marked items)
in order of segment followed by recency within the segment, where
$R$ is updated after the batch has passed through each segment in
the actual execution $A$ of $M_{1}$, and after we finish processing
the batch.

We simulate the updates to $R$ by list operations as follows:
\begin{itemize}
\item Shift successfully searched/updated items in $A$: Search for them
in reverse order (from back to front in $R$).
\item Shift marked (to-be-deleted) items in $A$: Demote them.
\item Insert items in $A$: Insert in the desired positions.
\item Remove marked items in $A$: Delete them.
\end{itemize}
This simulation yields a sequence $G$ of list operations on $R$,
to which we can then apply the \nameref{lem:work-set-cost}.

For each search for an item $x$ with rank $q$ in $R$, $x$ is found
in $S[0]$ or some segment $S[k+1]$ such that $2^{2^{k}}<q$, and
so by \nameref{lem:M1-segment-work} (\ref{lem:M1-segment-work})
the search takes $O(\log q+1)$ work in $A$, after which $x$ has
new rank in $R$ at most $2q^{1/2}$, like in $M_{0}$ (\ref{thm:M0-perf}).
After each batch $B$, let $n'$ be the final size of $M_{1}$ and
$S[l']$ be the new last segment, and then each insertion in $B$
takes $O\left(\sum_{i=0}^{l'}2^{i}\right)\wi O(\log n'+1)$ work in
$A$. Each deletion takes $O(\log n+1)$ work in $A$.

Thus by \ref{lem:work-set-cost}, $M_{1}$ takes $O(W_{G})$ work
on segments. Now let $L'$ be the same as $G$ but with each group-operation
expanded to its original sequence of operations. Clearly $W_{G}\le W_{L'}$,
since each group-operation is on the same item, so we are done.
\end{proof}
And now we turn to bounding the effective span.
\begin{thm}[$M_{1}$ Effective Span]
\label{thm:M1-span} $M_{1}$ takes $O\left(\frac{N}{p}+d\cdot\left((\log p)^{2}+\log n\right)\right)$
effective span, where $N$ is the number of operations on $M_{1}$,
and $n$ is the maximum size of $M_{1}$.\end{thm}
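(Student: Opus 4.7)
The plan is to reduce the claim to counting the cut batches and bounding the span of processing each one. Indeed, all $M_{1}$-nodes generated during batch processing lie on a common path in $E$: at the end of one batch, the last $M_{1}$-node generates a buffer node that initiates the next flush of the parallel buffer, and the resulting buffer-node chain eventually produces the first $M_{1}$-node of the following batch. Hence the effective span of $M_{1}$ equals the longest sum $\sum_{B} \sigma_{B}$ of per-batch spans over the sequential batches.

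First I would bound the span $\sigma_{B}$ of processing a single cut batch of size $b$, where $b \le \ceil{\log n / p} \cdot p^{2} = O(p \log n + p^{2})$. Merging the retrieved bunches into a batch contributes $O(\log b)$. Parallel entropy-sorting contributes $O((\log b)^{2}) = O((\log p)^{2} + (\log \log n)^{2}) = O((\log p)^{2} + \log n)$. Combining duplicates into group-operations adds $O(\log b)$. Finally, each segment $S[k]$ handles its batch operation on the parallel 2-3 tree in span $O(2^{k} + \log b)$, and summing over $k = 0, \ldots, l$ with $l = O(\log \log n)$ gives $O(2^{l} + l \cdot \log b) = O(\log n + \log \log n \cdot \log b) = O((\log p)^{2} + \log n)$. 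Therefore $\sigma_{B} = O((\log p)^{2} + \log n) =: \sigma$ for every batch.

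Next I would bound the total number of cut batches, splitting them into \emph{full} batches (those consuming the maximum $\ceil{\log n/p}$ bunches) and \emph{partial} batches (those consuming strictly fewer). A full batch involves $\Omega(p \cdot \max(p, \log n))$ operations, so the number of full batches is at most $O(N / (p \max(p, \log n)))$. Their total contribution to the effective span is $O\bigl(\tfrac{N \cdot ((\log p)^{2} + \log n)}{p \max(p, \log n)}\bigr) = O(N/p)$, by a short case analysis: when $\log n \le p$ the denominator is $p^{2}$ and $\sigma = O(p)$, while when $\log n > p$ the denominator is $p \log n$ and $\sigma = O(\log n)$.

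The remaining and main obstacle is bounding the number of partial batches on a path of $E$. A partial batch occurs only when $M_{1}$ becomes ready with fewer than $\ceil{\log n/p}$ bunches in the feed buffer, which means that too few $M_{1}$-calls had arrived at the parallel buffer between successive flushes. I plan to charge each partial batch to an $M_{1}$-call along some path in $D$: informally, during a stretch of partial batches the program DAG cannot be ``wide'' in its $M_{1}$-calls, so the responsible calls are forced to line up along a single path of length at most $d$ in $D$. This argument should give $O(d)$ partial batches, contributing $O(d \sigma) = O(d \cdot ((\log p)^{2} + \log n))$. Adding the two contributions yields the stated bound.
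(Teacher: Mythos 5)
The gap is at the step you yourself flag as the main obstacle: the number of partial batches is not $O(d)$, and no charging argument along paths of $D$ can make it so, because how many small cut batches get formed is governed by the \emph{timing} of arrivals in the actual execution, not by the dependency structure of $D$. Concretely, take a program whose DAG is a chain of $N$ fork nodes, each spawning one independent $M_{1}$-call, so $d=1$. Under a greedy scheduler the calls reach the parallel buffer at a rate of one per time step, so each flush yields only as many operations as arrived while the previous cut batch was being processed; the feed buffer never accumulates a full bunch, and $M_{1}$ emits on the order of $N/((\log p)^{2}+\log n)$ partial batches even though the responsible calls are pairwise independent in $D$ and cannot be ``lined up'' along any path of length $d$. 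Your opening reduction of the effective span to $\sum_{B}\sigma_{B}$ over \emph{all} batches then forces you to pay $\Theta((\log p)^{2}+\log n)$ for each of them, which overshoots the claimed bound by roughly a factor of $p$. (A smaller issue: a batch consuming the maximum $\lceil\log n/p\rceil$ bunches need not have $\Omega(p\max(p,\log n))$ operations when $\lceil\log n/p\rceil=1$, since the single bunch taken may be the partial last one.)

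The paper never sums spans over all batches. It bounds a counter-weighted span of $D$: for each $M_{1}$-call $X$ on a path $C$, it bounds the number of steps during which $X$ is pending by tracing $X$'s journey. The decisive observation is local to $X$: once $X$ sits in the feed buffer, every \emph{intervening} cut batch processed before $X$'s own is assembled from bunches that were already full (size exactly $p^{2}$) and consists of $\lceil\log n_{B^{*}}/p\rceil$ of them, hence has size $b^{*}\ge\max(p^{2},p\log n_{B^{*}})$ and span $O(b^{*}/p)$; since no two calls on $C$ wait for the same intervening batch, these sizes sum to at most $N$ over the path, contributing $O(N/p)$. Each call is charged the full $O((\log p)^{2}+\log n)$ only for the $O(1)$ possibly-partial batches it actually waits for (the one in progress when it arrives and its own), and summing over the at most $d$ calls on $C$ gives the $d\cdot((\log p)^{2}+\log n)$ term. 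In short, the distinction the proof needs is not ``few partial batches overall'' (false) but ``few partial batches charged to any single call,'' and reaching that requires abandoning the per-batch summation in favor of the per-call pending-time accounting.
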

\begin{proof}
First we bound the the span of processing each cut batch (i.e. the
span of the corresponding execution subDAG). Let $s(b)$ denote the
maximum span of processing a cut batch of size $b$. Take any cut
batch $B$ of size $b$ and let $n_{B}$ be the size of $M_{1}$ just
before $B$. $B$ takes $O\left(\frac{b}{p^{2}}+\log b\right)$ span
to be removed and formed from the feed buffer, and $O\left((\log b)^{2}\right)$
span to be sorted. $B$ then takes $O\left(\log b+2^{k}\right)$ span
in each segment $S[k]$ (because shifting between parallel 2-3 trees
of size $O\left(2^{h}\right)$ or cutting a batch of size $O\left(2^{h}\right)$
takes $O(h)$ span), which adds up to $O\left(\log b\cdot\log\log b+\log n_{B}\right)$
span over all segments, since $\log b<2^{k}$ when $k>\log\log b$.
Returning the results for each group-operation takes $O(\log b)$
span. Thus $s(b)\in O\left(\frac{b}{p}+(\log b)^{2}+\log n_{B}\right)$.
If $b\le p^{2}$ then $s(b)\in O\left(\frac{b}{p}+(\log p)^{2}+\log n_{B}\right)$.
If $b\ge p^{2}$ then $(\log b)^{2}\in O\left(\frac{b}{p}\right)$
and hence $s(b)\in O\left(\frac{b}{p}+\log n_{B}\right)$.

Now let $E$ be the actual execution DAG for $P$ using $M_{1}$ (on
$p$ processors). Then the effective span of $M_{2}$ is simply the
time taken to execute $E$ on an unlimited number of processors when
each $M_{1}$-node in $E$ takes unit time while every other node
takes zero time, since $E$ captures all relevant behaviour of $P$
using $M_{1}$ including all the dependencies created by the locks.
In this execution, we put a counter at each $M_{1}$-call in the program
DAG $D$, initialized to zero, and at each step we increment the counter
at every pending $M_{1}$-call (i.e., the result is not yet returned).
Then the total number of steps is at most the final counter-weighted
span of $D$, which we now bound.

Take any path $C$ in $D$. Consider each $M_{1}$-call $X$ on $C$.
We trace the `journey' of $X$ from the parallel buffer as an operation
in an uncut batch $U$ of size $u$ to a cut batch $B$ of size $b$
to the end of $M_{1}$.

Observe that any batch of size $u$ takes $O(\log p+\log u)$ span
to be flushed from the parallel buffer, and $O\left(\log u+\frac{u}{p^{2}}\right)$
span to be cut and added/appended to the bunches in the feed buffer,
which in total is at most $O\left(\log p+\frac{u}{p}\right)$ span.

So, first of all, $X$ waits for the preceding uncut batch of size
$u'$ to be processed, taking $O\left(\log p+\frac{u'}{p}\right)$
span. Next, $X$ waits for the current cut batch $B'$ of size $b'$
to be processed, taking $s(b')$ span. After that, $U$ is processed,
taking $O\left(\log p+\frac{u}{p}\right)$ span. Then $X$ waits for
intervening cut batches (between $B'$ and $B$) with $i$ operations
in total. Each intervening batch $B^{*}$ has some size $b^{*}\ge\max\left(p^{2},p\cdot\log n_{B^{*}}\right)$
and hence $s(b^{*})\in O\left(\frac{b^{*}}{p}\right)$. Finally, $B$
is processed, taking $s(b)$ span. Thus $X$ takes $O\left(\log p+\frac{u}{p}+\frac{u'}{p}+s(b)+s(b')+\frac{i}{p}\right)$
span in total.

Note that no two $M_{1}$-calls on the path $C$ can wait for the
same intervening batch, because the second can be executed only after
the first has returned. Thus over all counters at $M_{1}$-calls on
$C$, each of $u,u',b,b',i$ will sum up to at most $N$. Therefore
the final counter-weighted span of $D$ is at most $O\left(\frac{N}{p}+d\cdot\left((\log p)^{2}+\log n\right)\right)$.
\end{proof}

\section{Faster Parallel Working-Set Map}

\label{sec:PWM2}

To reduce the effective span of $M_{1}$, we intuitively have to:
\begin{itemize}
\item Shift each accessed item near enough to the front, so that accessing
it again soon would be cheap.
\item Pipeline the batches somehow, so that an expensive access in a batch
does not hold up the next batch.
\end{itemize}
\noindent Naive pipelining will not work, because operations on the
same item may take too much work. Hence we shall use a \textbf{filter}
before the pipelined segments to ensure that operations proceeding
through them are on distinct items, namely we pass all operations
through the filter and only allow an operation through if there is
not already another operation on the same item in the pipeline.

For similar reasons as in $M_{1}$, we must control both the batch
size and filter size to achieve enough parallelism, and so we choose
the filter capacity to be $\Theta\left(p^{2}\right)$. However, we
cannot put the filter before the first segment, because accessing
the filter requires $\Omega(\log p)$ work per operation, whereas
to meet the working-set bound we need operations with $O(1)$ access
rank to cost only $O(1)$ work.

Therefore, we divide the segments into the \textbf{first slab} and
the \textbf{final slab}, where the first slab comprises the first
$\log\Theta(\log p)$ segments and the final slab contains the rest,
and put the filter after the first slab. Only operations that do not
finish in the first slab are passed through the filter, and so the
filtering cost per operation is bounded by the $\Theta(\log p)$ work
already incurred in going through the first slab. Furthermore, we
shift accessed items to the front of the final slab, and `cascade'
the excess items only when a later batch passes.

We cannot pipeline the first slab, but since the first slab is essentially
a copy of $M_{1}$ but with only $\log O(\log p)$ trees, its non-pipelined
span turns out to be bounded by the $O\left((\log p)^{2}\right)$
span of sorting. To allow operation on items in the first slab to
finish quickly, we need to allow the first slab to run while the final
slab is running, but only when the filter has size at most $p^{2}$,
so that the filter size is always $O\left(p^{2}\right)$.

We also use \textbf{special locking schemes} to guarantee that the
first slab and the segments in the final slab can process the operations
at a consistent pace without interfering with one another. Finally,
we shall \textbf{weakly prioritize} the execution of the final slab,
to prevent excessive work from being done in the first slab on an
item $x$ if there is already an operation on $x$ in the final slab.

\noindent

\subsection{Description of $M_{2}$}

\begin{figure}[H]
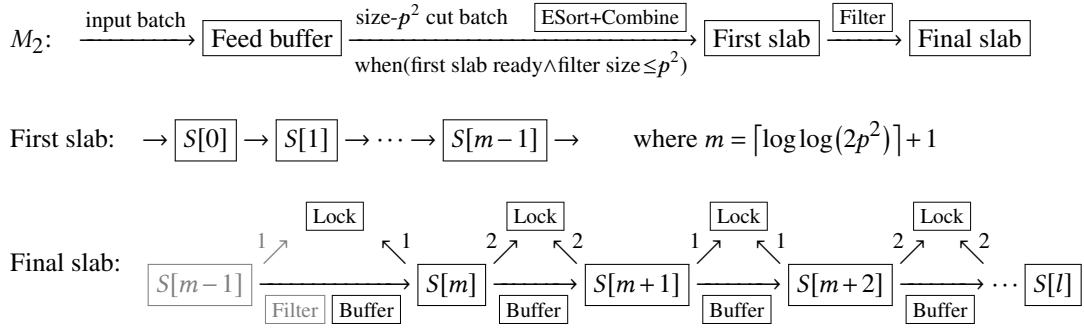

\begin{centerbox}
\noindent %
\begin{tabular}{l}
$M_{2}$:\quad{}$\xrightarrow{\text{input batch}}\tbox{Feed\ buffer}\xrightarrow[\text{when(first slab ready\ensuremath{\land}filter size\ensuremath{\le}\ensuremath{\ensuremath{p^{2}}})}]{\text{size-\ensuremath{p^{2}} cut batch}\quad\stbox{ESort+Combine}}\tbox{First\ slab}\xrightarrow{\stbox{Filter}}\tbox{Final\ slab}$\tabularnewline
\noalign{\vskip1.5em}
First slab:\quad{}$\to\boxed{S[0]}\to\boxed{S[1]}\to\cdots\to\boxed{S[m-1]}\to$\quad{}\quad{}where
$m=\ceil{\log\log\left(2p^{2}\right)}+1$\tabularnewline
\noalign{\vskip1.5em}
Final slab:\quad{}$\begin{matrix}\vsp{.4}\\
\grey{\boxed{S[m-1]}}
\end{matrix}\begin{matrix}\stbox{Lock}\\
\grey{\mathrlap{^{1}}\nearrow}\qquad\quad\ \nwarrow\mathllap{^{1}}\\
{}\xrightarrow[\grey{\stbox{Filter}}\ \stbox{Buffer}]{}{}
\end{matrix}\begin{matrix}\vsp{.4}\\
\boxed{S[m]}
\end{matrix}\begin{matrix}\stbox{Lock}\\
\mathrlap{^{2}}\nearrow\quad\ \nwarrow\mathllap{^{2}}\\
{}\xrightarrow[\stbox{Buffer}]{}{}
\end{matrix}\begin{matrix}\vsp{.4}\\
\boxed{S[m+1]}
\end{matrix}\begin{matrix}\stbox{Lock}\\
\mathrlap{^{1}}\nearrow\quad\ \nwarrow\mathllap{^{1}}\\
{}\xrightarrow[\stbox{Buffer}]{}{}
\end{matrix}\begin{matrix}\vsp{.4}\\
\boxed{S[m+2]}
\end{matrix}\begin{matrix}\stbox{Lock}\\
\mathrlap{^{2}}\nearrow\quad\ \nwarrow\mathllap{^{2}}\\
{}\xrightarrow[\stbox{Buffer}]{}{}
\end{matrix}\begin{matrix}\vsp{.4}\\
\cdots\,\boxed{S[l]}
\end{matrix}$\tabularnewline
\noalign{\vskip1em}
\end{tabular}
\end{centerbox}
\caption{\label{fig:M2-flow}$M_{2}$ Outline}
\end{figure}

We shall now give the details for implementing this (see \ref{fig:M2-flow}),
making considerable use of the \nameref{sub:par-2-3-tree} (Appendix
\ref{sub:par-2-3-tree}). $M_{2}$ has the same segments as in $M_{1}$,
where segment $S[k]$ has assigned capacity $2^{2^{k}}$ but may be
under-full or over-full. We shall group the first $m=\ceil{\log\log\left(2p^{2}\right)}+1$
segments into the \textbf{first slab}, and the other segments into
the \textbf{final slab}. $M_{2}$ uses a \textbf{feed buffer} (like
$M_{1}$; see \ref{sub:M1-desc}), which is a queue of bunches~\ref{fn:bunch}
each of size $p^{2}$ except possibly the last.

The $M_{2}$ interface is ready iff both the following hold:
\begin{itemize}
\item The parallel buffer or feed buffer is non-empty.
\item The filter has size at most $p^{2}$.
\end{itemize}
When the $M_{2}$ interface is activated (and ready), it does the
following (in sequence) on its run (the locks are described later):
\begin{enumerate}
\item Let $q$ be the size of the last bunch $F$ in the feed buffer. Flush
the parallel buffer and cut the input batch of size $b$ into small
batches of size $p^{2}$ except possibly the first and last, where
the first has size $\min\left(b,p^{2}-q\right)$. Add that first small
batch to $F$, and append the others as bunches to the feed buffer.
Remove the first bunch from the feed buffer and convert it into a
batch $B$, which we shall call a \textbf{cut batch}.
\item Sort $B$ using the \nameref{def:par-esort} (Appendix \ref{def:par-esort}),
combining operations on the same item~\ref{fn:group-op}, as in $M_{1}$.
\item \label{enu:first-slab-run} Pass $B$ through the first slab, which
processes the operations as in $M_{1}$. Successful searches/updates
immediately finish, while the rest finish only if there was no final
slab. Successful deletions are tagged to indicate success. But just
before running $S[m-1]$ (if it exists) to process the remaining batch
at that segment, acquire the neighbour-lock shared with $S[m]$ (as
shown in \ref{fig:M2-flow}) and then \textcolor{magenta}{acquire
the front-lock $FL[0]$}. 
\item If there was a final slab, then pass the (sorted) batch of unfinished
operations through the filter (including successful deletions), insert
the filtered batch into the buffer before $S[m]$, and fork (a child
thread) to activate $S[m]$.
\item \textcolor{magenta}{Release $FL[0]$} and the neighbour-lock shared
with $S[m]$.
\item Reactivate itself.
\end{enumerate}
The \textbf{filter} is used to ensure that at any point all the operations
in the final slab are on distinct items. It is implemented using a
batched parallel 2-3 tree that stores items, each tagged with a list
of operations on that item (in the order they arrive at the filter)
and their cumulative \textbf{effect} (as a single equivalent map operation).

When a batch is passed through the filter, each operation on an item
in the filter is appended to the list for it (the effect is also updated)
and filtered out of the batch, whereas each operation on an item not
in the filter is added to the filter and put into the buffer of $S[m]$.

The final slab is pipelined in the following way. Between every pair
of consecutive segments is a \textbf{neighbour-lock}, which is a dedicated
lock (see \ref{sub:mem-model}\hyperref[sub:mem-model]{\ Memory Model})
with $1$ key for each arrow to it in \ref{fig:M2-flow}. Since each
segment needs to access the filter and the contents of $S[m]$, those
accesses will also be guarded by a front-locking scheme using a series
of \textbf{front-locks} $FL[0..l-m]$, each of which is a dedicated
lock with $1$ key for each arrow to it in \ref{fig:filter-lock}.
(This will be fully spelt out below.)

\begin{figure}[H]
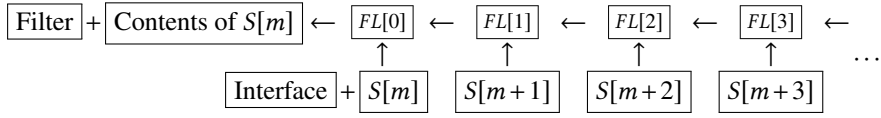

\noindent \begin{centering}
$\begin{matrix}\tbox{Filter}+\boxed{\text{Contents of \ensuremath{S[m]}}}\\
\\
\vphantom{\boxed{X'}}
\end{matrix}\ \begin{matrix}\leftarrow\\
\vphantom{\boxed{X'}}\\
\\
\end{matrix}\begin{matrix}\smbox{FL[0]}\\
\uparrow\\
\mathllap{\tbox{Interface}}+\boxed{S[m]}
\end{matrix}\begin{matrix}\leftarrow\\
\vphantom{\boxed{X'}}\\
\\
\end{matrix}\begin{matrix}\smbox{FL[1]}\\
\uparrow\\
\boxed{S[m+1]}
\end{matrix}\begin{matrix}\leftarrow\\
\vphantom{\boxed{X'}}\\
\\
\end{matrix}\begin{matrix}\smbox{FL[2]}\\
\uparrow\\
\boxed{S[m+2]}
\end{matrix}\begin{matrix}\leftarrow\\
\vphantom{\boxed{X'}}\\
\\
\end{matrix}\begin{matrix}\smbox{FL[3]}\\
\uparrow\\
\boxed{S[m+3]}
\end{matrix}\begin{matrix}\leftarrow\\
\vphantom{\boxed{X'}}\\
\\
\end{matrix}\cdots$
\par\end{centering}

\caption{\label{fig:filter-lock}Front-locking}
\end{figure}

Each final slab segment $S[k]$ has a \textbf{sorted buffer} before
it (for operations from $S[k-1]$), which is a batched parallel 2-3
tree. $S[k]$ is ready iff its buffer is non-empty, and when activated
(and ready) it runs as follows (front-locking is \textcolor{magenta}{highlighted}):
\begin{enumerate}
\item Acquire the neighbour-locks (between $S[k]$ and its neighbours) in
the order given by the arrow number in \ref{fig:M2-flow}.
\item \textcolor{magenta}{If $k=m$, acquire $FL[0]$.}
\item If $S[k]$ is the terminal segment and $S[k-1,k]$ have total size
exceeding their total capacity, create a new terminal segment $S[k+1]$.
\item Flush and process the operations in its buffer as follows:

\begin{enumerate}
\item Search for the accessed items $A$ in $S[k]$ (by performing one batch
operation on the key-map in $S[k]$). Let $R$ be the (sorted batch
of) items in $A$ that are found in $S[k]$, and delete $R$ from
$S[k]$.
\item \textcolor{magenta}{If $k>m$, acquire $FL[k-m]$ to $FL[0]$ in that
order.}
\item Search for $R$ in the filter to determine what to do with it. Let
$R'$ be the items in $R$ to be searched or updated, and delete $R'$
from the filter. (Insertions on items in $R$ are treated as updates.)
Perform all the updates on items in $R$.
\item \label{enu:finish-ops}Let $m'=\min(k-1,m)$. Fork to return the results
for operations on $R'$, and insert $R'$ at the front of $S[m']$.
If $S[k]$ is (now) the terminal segment, perform all insertions at
the front of $S[m']$, and delete $A\less R'$ from the filter, and
fork to return the results for operations on $A\less R'$.
\item If the filter size is at most $p^{2}$, fork to activate the interface.
\item \textcolor{magenta}{If $k>m$, release $FL[0]$ to $FL[k-m]$ in that
order.}
\item \label{enu:overfull-transfer} If $S[k-1]$ is over-full, transfer
items from the back of $S[k-1]$ to the front of $S[k]$ so that $S[k-1]$
is full.
\item \label{enu:underfull-transfer} If $S[k-1]$ is under-full by $i$
items and $S[k]$ has $c$ items and $A$ has $d$ successful deletions,
transfer $\min(i,c,d)$ items from the front of $S[k]$ to the back
of $S[k-1]$.
\item If $S[k]$ is not the terminal segment, insert the operations on $A\less R'$
(with successful deletions tagged as such) into the buffer of $S[k+1]$,
then fork to activate $S[k+1]$.
\end{enumerate}
\item If $S[k]$ is the terminal segment and is empty, remove $S[k]$ to
make $S[k-1]$ the new terminal segment.
\item \textcolor{magenta}{If $k=m$, release $FL[0]$.}
\item Release both neighbour-locks and reactivate itself.
\end{enumerate}

\subsection{Weak-priority scheduler}

\label{sub:weak-priority-sched}

It turns out that, ignoring the sorting cost, all we need to achieve
the working-set bound is that, for each operation on an item $x$
in the final slab, the work `done' by the first slab on $x$ can
be counted against the work done in the final slab. This can be ensured
using a \textbf{weak-priority scheduler}. A weak-priority scheduler
has two queues $Q_{1}$ and $Q_{2}$, where $Q_{1}$ is the high-priority
queue, and each ready node is assigned to either $Q_{1}$ or $Q_{2}$,
and at every step the following hold:
\begin{itemize}
\item If there are $k$ ready nodes, then $\min(k,\frac{1}{2}p)$ of them
are executed.
\item If queue $Q_{1}$ has $k$ ready nodes, then $\min\left(k,\frac{1}{2}p\right)$
of them are executed (and so $Q_{1}$ nodes are weakly-prioritized).
\end{itemize}
The $M_{2}$-nodes generated for the final slab are assigned to $Q_{1}$,
while all other $M_{2}$-nodes are assigned to $Q_{2}$. Specifically,
each (forked) activation call to $S[m]$ and all nodes generated by
that are assigned to $Q_{1}$, except for activation calls to the
$M_{2}$ interface (which are assigned to $Q_{2}$). Any suspended
thread is put back into its original queue when it is resumed (i.e.
the resuming node is assigned to the same queue as the acquiring node).

\subsection{Analysis of $M_{2}$}

For each computation (subDAG of the actual execution DAG), we shall
define its \textbf{delay}, which intuitively captures the minimum
possible time it needs, including all waiting on locks. Each blocked
acquire of a dedicated lock corresponds to an \textbf{acquire-stall
node} $\alpha$ in the execution DAG whose child node $\rho$ is created
by the release just before the successful acquisition of the lock.
Let $\Delta(\alpha)$ be the ancestor nodes of $\rho$ that have not
yet executed at the point when $\alpha$ is executed. Then we define
delay as follows.
\begin{defn}[Computation Delay]
\label{def:comp-delay} The \textbf{delay} of a computation $\Gamma$
is recursively defined as the weighted span of $\Gamma$, where each
acquire-stall node $\alpha$ in $\Gamma$ is weighted by the delay
of $\Delta(\alpha)$ (to capture the total waiting at $\alpha$),
and every other node has unit weight.
\end{defn}
Also, we classify each operation as \textbf{segment-bound} iff it
finishes in some segment, and \textbf{filter-bound} otherwise (namely
filtered out due to a prior operation on the same item that finishes
in the final slab).

The \textbf{\textit{key lemma}} (\ref{lem:M2-final-slab-bound}) is
that any operation that finishes in segment $S[m+k]$ takes $O\left(2^{m+k}\right)$
delay to be processed by the final slab, which is ensured (\ref{lem:M2-front-access-bound})
by the front-locking scheme and the balance invariants (\ref{lem:M2-balance}).

Then $M_{2}$ takes $O(W_{L}+e_{L}\cdot\log p)$ effective work for
some linearization $L$ of $D$, and here is the intuitive proof sketch:
\begin{enumerate}
\item The work divides into \textbf{segment work} (done by the slabs) and
\textbf{non-segment work} (cutting and sorting batches, and filtering).
\item The segment work can be divided per operation; segment $S[k]$ does
$O\left(2^{k}\right)$ work per operation that reaches it (\ref{lem:M2-segment-work}).
\item The cutting work is $O(1)$ per operation, and the (overall) sorting
work is $O(W_{L}+e_{L}\cdot\log p)$ for some linearization $L$ of
$D$ (\ref{lem:batch-sort-cost}). The filtering work is $O(\log p)$
per filtered operation, which can be ignored since each filtered operation
already took $\Omega(\log p)$ work in the first slab. Similarly we
can ignore the work done in passing the batch through $S[m-1]$.
\item The segment work on segment-bound operations is $O(W_{L'})$ for some
linearization $L'$ of $D$ (\ref{lem:M2-core-work}; proven using
\ref{lem:work-set-cost} like we did for $M_{1}$).
\item \br The segment work on filter-bound operations is $O(W_{L'})$ too:

\begin{enumerate}
\item We can ignore work during \textbf{high-busy steps} (where $Q_{1}$
has at least $\frac{1}{2}p$ ready nodes), because the final slab
takes $O(W_{L'})$ work and so there are $O\left(\frac{W_{L'}}{p}\right)$
high-busy steps.
\item We can ignore every \textbf{high-busy run} of $S[0..m-2]$ (namely
with at least half its steps high-busy), because its work is $O(1)$
times the work during high-busy steps.
\item High-idle runs of $S[0..m-2]$ take $O(W_{L'})$ total work.

\begin{enumerate}
\item Every filter-bound operation is filtered out due to some operation
in the final slab. So take any operation $X$ on item $x$ that finishes
in a final slab segment $S[k]$.
\item During each high-idle run (which takes $\Omega(\log p)$ high-idle
steps), the processing of $X$ is not blocked by any $Q_{2}$-thread
(since $S[0..m-2]$ does not hold any neighbour-lock or filter-lock),
so each high-idle step `reduces' its remaining delay, which by the
\textit{key lemma} (\ref{lem:M2-final-slab-bound}) is $O\left(2^{k}\right)$.
\item Therefore the work done on $x$ by high-idle runs while $X$ is in
the final slab is $O(1)$ times the work done by the final slab on
$X$.
\end{enumerate}
\end{enumerate}
\end{enumerate}
Moreover, $M_{2}$ takes $O\left(\frac{W_{L}}{p}+d\cdot(\log p)^{2}+s_{L}\right)$
effective span for some linearization $L$ of $D$:
\begin{enumerate}
\item The effective span is the time taken to run the execution DAG on infinitely
many processors, where each $M_{2}$-node takes unit time while every
other node takes zero time.
\item There are $O\left(\frac{W_{L'}}{p}\right)$ \textbf{filter-full steps}
(steps in which the filter has size at least $p$). To see why, let
every operation in the final slab consume a token on each step. Then
each filter-full step consumes at least $p$ tokens. But by the \textit{key
lemma} (\ref{lem:M2-final-slab-bound}) the total token consumption
is just $O(1)$ times the total work in the final slab, which amounts
to $O(W_{L'})$ (\ref{lem:M2-core-work}).
\item There are $O\left(\frac{N}{p}+d\cdot(\log p)^{2}+s_{L^{*}}\right)$
\textbf{filter-empty steps} (filter size at most $p$) for some linearization
$L^{*}$ of $D$, where $N$ is the number of $M_{2}$-calls, because
each operation essentially has the following path:

\begin{enumerate}
\item It waits $O\left((\log p)^{2}\right)$ filter-empty steps for the
current cut batch in the first slab.
\item Then it waits $O\left(\frac{b^{*}}{p}\right)$ filter-empty steps
per intervening cut batch of size $b^{*}=p^{2}$.
\item Finally it takes $O\left((\log p)^{2}+\log r\right)$ steps to pass
through the slabs where $r$ is its access rank according to $L^{*}$,
by the \textit{key lemma} and the rank invariant (\ref{lem:M2-rank-inv}).
\end{enumerate}
\end{enumerate}
We shall now give the details. For the purpose of our analysis, we
consider a segment to be \textbf{running} iff it has acquired all
neighbour-locks and has not released them. We also consider an operation
to be \textbf{in} segment $S[k]$ exactly when it is in the buffer
of $S[k]$ or is being processed by $S[k]$ up to \step\ref{enu:underfull-transfer}
(inclusive). Likewise, we consider an item that is found in $S[k]$
and to be searched/updated to remain in $S[k]$ until it is \textbf{shifted}
to $S[m']$ (in \step\ref{enu:finish-ops}).

We begin by showing that $M_{2}$ remains `balanced'; each non-terminal
segment has size not too far from capacity.
\begin{defn}[$M_{2}$ Segment Holes]
\label{def:M2-hole} We say that a segment $S[k]$ has $c$ \textbf{holes}
if $S[k]$ is not the terminal segment but has $c$ fewer items than
its capacity. (If $S[k]$ exceeds capacity then it has no holes.)\end{defn}
\begin{lem}[$M_{2}$ Balance Invariants]
\label{lem:M2-balance} The following balance invariants hold:
\begin{enumerate}
\item If $S[m]$ is not running or does not exist, then $S[m-1]$ does not
exceed capacity (if it exists).
\item If the interface is not running, $S[0..m-2]$ has no holes, and $S[m-1]$
has at most $d$ holes where $d$ is the number of successful deletions
in $S[m]$.
\item Each final slab segment $S[k]$ has at most $3\cdot2^{2^{k}}$ items.
\item If a final slab segment $S[k]$ is not running, $S[0..k-1]$ is at
most $2p^{2}$ below capacity.
\end{enumerate}
\end{lem}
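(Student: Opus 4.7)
My plan is to prove all four balance invariants simultaneously by induction on the sequence of node executions in $M_2$'s actual execution DAG, verifying that each invariant is preserved across every state-modifying step — principally the interface run and the runs of each final-slab segment $S[k]$. The invariants are coupled (for instance, invariant~3 for $S[m]$ depends on invariant~4 for $S[m+1]$, since $S[m+1]$'s underfull-transfers regulate $S[m]$'s growth), so it is natural to treat them jointly rather than one at a time.

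For invariant~1, the interface acquires the neighbour-lock shared with $S[m]$ before touching $S[m-1]$, so during any interface run $S[m]$ is quiescent, and the $M_1$-style processing of the first slab maintains $S[m-1]$ within capacity exactly as in $M_1$. During a run of $S[m]$, the only operation that adds items to $S[m-1]$ is the $R'$ shift in step~4d; the overfull-transfer in step~4g fires immediately after and drains any excess from $S[m-1]$ back into $S[m]$, so by the time $S[m]$ releases its neighbour-locks, $S[m-1]$ is again within capacity.

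For invariant~2, the first-slab capacity sweep (inherited from $M_1$) guarantees that after the interface finishes, $S[0..m-2]$ has no holes, since no final-slab activity ever modifies $S[0..m-2]$. Holes in $S[m-1]$ arise only from items deleted out of $S[m-1]$ whose vacancies have not yet been filled by a transfer from $S[m]$. I will match each such hole to a distinct pending successful deletion currently in $S[m]$: a successful deletion processed at $S[m]$ contributes at most one hole in $S[m-1]$ after its item is removed, and step~4h of $S[m]$'s next run transfers up to $\min(i,c,d)$ items from $S[m]$ to $S[m-1]$, which suffices to keep the hole count bounded by the number of pending successful deletions still in $S[m]$.

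For invariants~3 and~4, I bound $S[k]$'s size by tracking the inflow into $S[k]$ between successive runs of $S[k]$. The neighbour-locks guarantee that at most one batch from $S[k-1]$ arrives in $S[k]$'s buffer (of size $\le p^2$, enforced by the cut-batch size and the filter-capacity gate on the interface), plus at most one shifted $R'$ batch from each higher segment. The filter-size bound of $p^2$ caps the cumulative shifted inflow, so a direct arithmetic check gives $|S[k]| \le 3 \cdot 2^{2^k}$, while the underfull-transfer step~4h keeps $S[0..k-1]$ within $2p^2$ of capacity whenever $S[k]$ is not running. The main obstacle is the bookkeeping in invariant~2: holes in $S[m-1]$ and pending successful deletions in $S[m]$ both depend on interleavings of many interface and $S[m]$ runs, so the proof requires carefully maintaining an explicit matching between each current hole and a distinct pending successful deletion rather than a crude counting inequality.
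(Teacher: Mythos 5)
Your overall architecture (joint induction over the execution, matching holes to pending successful deletions for Invariant~2, invoking the filter-size bound as the source of the $O(p^2)$ slack) points in the right direction, but the heart of Invariants~3 and~4 is missing, and the phrase ``a direct arithmetic check'' is hiding exactly the step that can fail. The size of a final-slab segment $S[k]$ is \emph{not} controlled by the inflow between two successive runs of $S[k]$: every run of $S[k]$ can absorb, via the overfull-transfer step, whatever excess $S[k-1]$ has accumulated, and that excess is itself the residue of all earlier batches. So one must rule out the excess of $S[0..k-1]$ over its capacity accumulating across arbitrarily many batches. The paper does this with a conservation argument: writing $e(k)$ for the excess of $S[0..k-1]$ over capacity and $u(k)$ for the number of unfinished operations in $S[k+1..l]$, it shows $e(k)+u(k)$ never increases while $S[k]$ is not running (each operation finishing in a deeper segment raises $e(k)$ by at most $1$ but lowers $u(k)$ by $1$), and is re-established at $\le 2p^2$ each time $S[k]$ finishes a run, using that the run leaves $S[k-1]$ at or below capacity and $S[k]$'s buffer empty, plus the same invariant one segment earlier. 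Only then does $u(k)\le f\le 2p^2$ (the filter bound) give $e(k)\le 2p^2$. Your claim that ``the filter-size bound of $p^2$ caps the cumulative shifted inflow'' is false as stated --- the cumulative inflow over the whole execution is unbounded; only the inflow attributable to currently unfinished operations is capped --- and without the potential $e(k)+u(k)$ there is no way to discharge the inflow from earlier batches against the rearward overfull-transfers. Similarly, the lower bound in Invariant~4 needs a second potential, $e(k)+d(k)\ge 0$ where $d(k)$ counts successful deletions still pending in $S[m..k]$; asserting that the underfull-transfer ``keeps $S[0..k-1]$ within $2p^2$ of capacity'' restates the conclusion, since the transfer moves only $\min(i,c,d)$ items and one must prove $c$ and $d$ are large enough, which is precisely this potential.

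Two smaller points. First, you have the transfers backwards in your motivation: it is the \emph{overfull}-transfers from $S[k-1]$ into $S[k]$ that regulate growth (Invariant~3), while the underfull-transfers address deficits (Invariant~4); and ``at most one batch from $S[k-1]$ arrives in $S[k]$'s buffer'' between runs is neither guaranteed by the neighbour-locks nor needed --- the relevant bound is on operations, via the filter. Second, even your Invariant~2 argument quietly depends on Invariant~4: to conclude that a run of $S[m]$ eliminates all holes in $S[m-1]$ you must know $S[m]$ holds at least $d$ items (i.e.\ $c\ge d$ in $\min(i,c,d)$), which the paper gets from $S[m]$ having at least $2^{2^m}-2p^2\ge d$ items. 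So the ``explicit matching'' you defer to cannot be completed without first building the $e(k)+u(k)$ and $e(k)+d(k)$ machinery.
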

\begin{proof}
Let $f$ be the filter size. Then $f\le2p^{2}$ always, since the
interface only runs if $f\le p^{2}$, on a batch of size at most $p^{2}$.

\br

\uline{Invariant 1}

$S[m-1]$ can only exceed capacity when $S[m]$ runs, and $S[m]$
restores the invariant (in \step\ref{enu:overfull-transfer}) before
it finishes running.

\br

\uline{Invariant 2}

Only the interface creates more holes in $S[0..m-1]$ (in \step\ref{enu:first-slab-run}),
each corresponding to a unique successful deletion that is inserted
into the buffer of $S[m]$, and so just after the interface finishes
running $S[0..m-1]$ has at most $d$ holes where $d$ is the number
of successful deletions in $S[m]$, and all the holes must be in $S[m-1]$
since $d\le2p^{2}\le2^{2^{m-1}}$. Once $S[m]$ runs, $S[m-1]$ will
have no holes, because either $S[m]$ was the terminal segment or
$S[m]$ had at least $2^{2^{m}}-2p^{2}\ge d$ items by Invariant~4.

\br

\uline{Invariants 3,4}

To establish Invariants~3,4, we shall prove sharper invariants. Let
$e(k)$ be the total size of $S[0..k-1]$ minus their total capacity.
Let $u(k)$ be the number of unfinished operations in $S[k+1..l]$.
Let $d(k)$ be the number of successful deletions in $S[m..k]$. Then
the following invariants hold:
\begin{enumerate}
\item[(A)]  If a final slab segment $S[k]$ is not running, $e(k)+u(k)\le2p^{2}$.
\item[(B)]  For each final slab segment $S[k]$, we always have $e(k)\le4p^{2}$.
\item[(C)]  $e(l+1)\le2p^{2}$.
\item[(D)]  If a final slab segment $S[k]$ is not running, $e(k)+d(k)\ge0$.
\end{enumerate}
Firstly, (A) holds for $S[m]$, because $u(m)\le f\le2p^{2}$ and
$e(m)\le0$ by Invariant 1 since the interface does not insert any
item. Thus by induction it suffices to show that (A) holds for $S[k]$
where $k>m$ assuming (A) holds for $S[k-1]$. This can be done by
the following observations:
\begin{itemize}
\item When $S[k]$ is not running, $e(k)+u(k)$ never increases, because
each search/update/deletion in $S[0..k-1]$ does not increase $e(k)$
or affect $u(k)$, and each operation that finishes in $S[k+1..l]$
increases $e(k)$ by at most $1$ but decreases $u(k)$ by $1$.
\item When $S[k]$ is newly created, $e(k)\le2p^{2}$ by (C) since $S[k-1]$
was the previous terminal segment, and $u(k)=0$.
\item When $S[k]$ is running, $S[k-1]$ is not running. Thus just after
$S[k]$ finishes running, $e(k)\le e(k-1)$ since $S[k-1]$ does not
exceed capacity (due to \step\ref{enu:overfull-transfer}), and $u(k)=u(k-1)$
since $S[k]$ has an empty buffer. Thus $e(k)+u(k)\le e(k-1)+u(k-1)\le2p^{2}$
by (A) for $S[k-1]$.
\end{itemize}
We now establish (B) using (A). Consider each final slab segment $S[k]$
run. Just before that run, $e(k)\le2p^{2}$ by (A), and there are
at most $2p^{2}$ unfinished operations in $S[k..l]$. During that
run, no new operation enters $S[k..l]$, and $e(k)$ increases by
at most $1$ for each unfinished operation in $S[k..l]$ that finishes.
Thus $e(k)\le4p^{2}$ throughout that run.

Next we establish (C). Note that the terminal segment is only changed
by the interface or the previous terminal segment, and so when the
terminal segment $S[l]$ is not running, $e(l+1)$ never increases
because no insertions finish. It suffices to observe the following:
\begin{itemize}
\item Just after $S[l]$ is newly created, it does not exceed capacity and
so $e(l+1)\le e(l)\le2p^{2}$.
\item Just before $S[l+1]$ was removed (making $S[l]$ the new terminal
segment), $S[l+1]$ had finished running, and so $S[l]$ did not exceed
capacity, and hence $e(l+1)\le e(l)\le2p^{2}$ by (A) for $S[l]$.
\item Whenever $S[l]$ runs and does not create a new terminal segment,
just before that run $S[l-1,l]$ do not exceed total capacity and
so $e(l+1)\le e(l-1)$ at that point. There are two cases:

\begin{itemize}
\item If $l=m$: Just before that run, $e(l-1)\le0$ since $S[0..m-2]$
do not exceed capacity, and $S[l]$ has at most $2p^{2}$ operations.
During that run, $e(l+1)$ increases by at most $1$ per unfinished
operation in $S[l]$, and hence after that run $e(l+1)\le2p^{2}$.
\item If $l>m$: Just before that run, $e(l-1)+u(l-1)\le2p^{2}$ by (A)
for $S[l-1]$, and $S[l]$ has $u(l-1)$ operations. During that run,
$e(l+1)$ is increased only by $1$ per unfinished operation in $S[l]$,
and hence after that run $e(l+1)\le2p^{2}$.
\end{itemize}
\end{itemize}
Finally we establish (D). Firstly, (D) holds for $S[m]$ by Invariant~2.
Thus by induction it suffices to show that (D) holds for $S[k]$ where
$k>m$ assuming (D) holds for $S[k-1]$. This can be done by the following
observations:
\begin{itemize}
\item Any search/update/insertion that finishes does not decrease $e(k)$.
\item If $S[k]$ is not running, any deletion that succeeds in $S[0..k-1]$
decreases $e(k)$ by $1$ but increases $d(k)$ by $1$.
\item When $S[k]$ is newly created by an $S[k-1]$ run, after that run
$S[k-2]$ does not exceed capacity (due to \step\ref{enu:overfull-transfer})
and so for each hole in $S[k-1]$ there will be at least $1$ successful
deletion in $S[k]$, and hence $e(k)+d(k)\ge e(k-1)+d(k-1)\ge0$ by
(D) for $S[k-1]$.
\item When $S[k]$ runs, $e(k)+d(k)\ge0$ after the run because:

\begin{itemize}
\item If after the run $S[k-1]$ is exactly full, then at that point $e(k)=e(k-1)$
and $d(k)\ge d(k-1)$, and $e(k-1)+d(k-1)\ge0$ by (D) for $S[k-1]$.
\item If after the run $S[k-1]$ is below capacity and $S[k]$ still exists,
the run must have made $d'$ frontward transfers (in \step\ref{enu:underfull-transfer})
where $d'$ was the number of successful deletions in $S[k]$ at the
start of that run. Thus the run increased $e(k)$ by at least $d'$,
and decreased $d(k)$ by $d'$.
\end{itemize}
\end{itemize}
Finally we can establish Invariants~3,4. By both (B) and (C), for
each final slab segment $S[k]$ we have $e(k+1)\le4p^{2}$, and hence
$S[k]$ has size at most $\sum_{i=0}^{k}2^{2^{i}}+4p^{2}\le2\cdot2^{2^{k}}+2^{2^{m}}\le3\cdot2^{2^{k}}$.
By (D), if a final slab segment $S[k]$ is not running, then $-e(k)\le d(k)\le f\le2p^{2}$
and hence $S[0..k-1]$ is at most $2p^{2}$ below capacity.\end{proof}
\begin{cor}[$M_{2}$ Segment Access Bound]
\label{cor:M2-segment-access-bound} Each batch operation on a parallel
2-3 tree in segment $S[k]$ where $k\ge m-1$ takes $\Theta\left(2^{k}\right)$
work per operation in the batch and $\Theta\left(\log p^{2}+2^{k}\right)\wi\Theta\left(2^{k}\right)$
span.
\end{cor}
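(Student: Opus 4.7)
The plan is to combine the size bounds established in \ref{lem:M2-balance} with the standard batch-operation cost bounds for a \nameref{sub:par-2-3-tree}. Specifically, I will show that every key-map and every recency-map sitting in segment $S[k]$ (for $k \ge m-1$) has at most $O(2^{2^k})$ items, whence any batch operation of size $b \le p^2$ on it takes $O(b \cdot 2^k)$ work and $O(2^k + \log b)$ span by the tree's standard bounds. The claimed $\Theta(\log p^2 + 2^k) \wi \Theta(2^k)$ span then follows because $2^{m-1} \ge \log(2p^2)$ by the definition of $m$, so $\log p^2 = O(2^{m-1}) = O(2^k)$ for $k \ge m-1$.

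The size bound for $k \ge m$ is exactly Invariant~3 of \ref{lem:M2-balance}. For the boundary case $k = m-1$, which is not directly covered by that invariant, the argument is as follows. By the choice $m = \ceil{\log\log(2p^2)} + 1$ we have $2^{2^{m-1}} \ge 2p^2$. Invariant~1 gives that $S[m-1]$ never exceeds its capacity $2^{2^{m-1}}$ except possibly while $S[m]$ is running; during such a run, $S[m]$ inserts at most $|R'| = O(p^2)$ items at the front of $S[m-1]$ in \step\ref{enu:finish-ops} (since the buffer of $S[m]$ has at most $O(p^2)$ operations), before \step\ref{enu:overfull-transfer} restores $S[m-1]$ to exactly full. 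Also, the interface's first-slab processing only shifts items within $S[0..m-1]$ (insertions and deletions are deferred to the filter and the final slab), so it does not change the total count of items in the first slab. Hence $|S[m-1]| \le 2^{2^{m-1}} + O(p^2) = O(2^{2^{m-1}})$ at all times.

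Given this size bound, the work and span claims follow by applying the parallel 2-3 tree cost bounds to each primitive operation used in processing $S[k]$: batch search for $A$, batch delete of $R$, batch insertion of $R'$ or $A \less R'$, the over-full and under-full transfers, and the reverse-indexing steps between the key-map and the recency-map. Each such operation on a tree of $N = O(2^{2^k})$ items with a batch of $b \le p^2$ takes $O(b \log(N/b + 1))$ work and $O(\log N + \log b)$ span, which simplifies to $O(b \cdot 2^k)$ work and $O(2^k + \log p^2) = O(2^k)$ span.

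The main subtlety is the $k = m-1$ case, which straddles the first and final slabs and hence lies outside the scope of Invariant~3; however, tracking exactly what \emph{can} be added to $S[m-1]$ under the neighbour-locking and front-locking scheme (essentially only through \step\ref{enu:finish-ops} of an $S[m]$ run) makes the required $O(2^{2^{m-1}})$ size bound straightforward to establish.
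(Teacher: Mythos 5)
Your proposal is correct and takes essentially the route the paper intends: the paper states this as an unproved corollary of the balance invariants, relying on Invariant~3 together with the $\Theta(b\cdot\log n)$ work and $O(\log b+\log n)$ span bounds of the batched parallel 2-3 tree, plus $2^{m-1}\ge\log(2p^{2})$ to absorb the $\log p^{2}$ term. Your explicit handling of the boundary case $k=m-1$ (via Invariant~1 and the $O(p^{2})$ bound on what an $S[m]$ run can push into $S[m-1]$) fills in a detail the paper glosses over, and is consistent with its construction.
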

Now we can prove a delay bound on the `front access' (through the
front-locks) by each final slab segment.
\begin{lem}[$M_{2}$ Front Access Bound]
\label{lem:M2-front-access-bound} Any segment $S[m+k]$ takes $O\left(2^{m+k}\right)$
total delay to acquire the front-locks $FL[0..k]$ and run the front-locked
section (in-between) and then release $FL[0..k]$. And similarly the
interface takes $O\left(2^{m}\right)$ delay to acquire $FL[0]$ and
run the front-locked section and then release $FL[0]$.\end{lem}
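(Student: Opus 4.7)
The plan is to bound the delay by separating two contributions: the intrinsic span of the work done between the first acquire and the last release, and the cumulative waiting for the dedicated locks $FL[k], FL[k-1], \ldots, FL[0]$. I would prove both the segment claim and the interface claim simultaneously by induction on $k$, with the interface case sitting ``in parallel'' with the $S[m]$ (i.e.\ $k=0$) case since both use only $FL[0]$.

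For the intrinsic span, I would invoke \ref{lem:M2-balance} and \ref{cor:M2-segment-access-bound}. The balance invariants give $|\text{filter}| \le 2p^2$ and $|S[m]| = O(2^{2^m})$, and the corollary says batch operations on $S[i]$ (with $i \ge m-1$) take $\Theta(2^i)$ span while batch operations on the filter take $O(\log p)$ span. The front-locked section of $S[m+k]$ consists of a filter query/update, a batch update of the items of $R$ in $S[m+k]$ itself, an insertion of $R'$ at the front of $S[m']=S[m]$, and $O(1)$ forks, summing to $O(2^{m+k})$ span; the interface's front-locked section consists of running $S[m-1]$, a pass through the filter, an insert into the buffer of $S[m]$, and forks, summing to $O(2^m)$ span. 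For the waiting, I would use (i)~the dedicated-lock property that any acquire-stall is overtaken by at most $O(1)$ later-arriving threads, and (ii)~the top-down-acquire / bottom-up-release discipline, which limits the threads that can block $S[m+k]$ at $FL[j]$ to the interface (only at $j=0$) or $S[m+j']$ for some $j'\le j$ (the top lock of any blocker must be $FL[j]$ or lower, since $S[m+k]$ already holds $FL[j+1..k]$). By the inductive hypothesis, each such blocker has remaining delay $O(2^{m+j})$, so the wait at $FL[j]$ is $O(2^{m+j})$.

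Summing over $j=0,\ldots,k$ and adding the intrinsic span and $O(k)$ release overheads gives $O\bigl(\sum_{j=0}^{k} 2^{m+j}\bigr) + O(2^{m+k}) + O(k) = O(2^{m+k})$, which proves the segment claim; the interface claim is the $k=0$ instance, combining its $O(2^m)$ section span with the $O(2^m)$ wait at $FL[0]$. The main obstacle is formalizing (ii): since each $FL[j]$ is a multi-key dedicated lock (one key per arrow in \ref{fig:filter-lock}), two threads can simultaneously hold it, and we have to rule out a taller segment $S[m+k'']$ with $k''>k$ persistently blocking $S[m+k]$ at $FL[j]$ --- otherwise the induction would yield only $O(2^{m+k''})$, which is too weak. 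I expect to close this by combining the exact key allocation, designed to admit only the pipeline-permitted threads, with the $O(1)$-overtake guarantee, and arguing that the ``ahead-of-$S[m+k]$'' positions in each $FL[j]$-queue are always occupied by threads whose induction hypothesis already supplies the required $O(2^{m+j})$ bound.
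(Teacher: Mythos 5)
Your overall architecture (induction, descend through the locks, charge each wait to a blocker's remaining delay, geometric sum) matches the paper's, but the step you yourself flag as ``the main obstacle'' is a genuine gap, and the way you propose to close it does not work with the induction hypothesis as you state it. Your hypothesis is indexed by the \emph{segment}: it bounds the total front-lock phase of $S[m+k]$ by $O(2^{m+k})$. At every lock $FL[j]$ with $j<k$ this suffices, because $S[m+k]$ already holds $FL[j+1]$, so the only possible competitor for $FL[j]$ is $S[m+j]$ via its own key. But at the \emph{top} lock $FL[k]$ the competitor holds the right-hand key, i.e.\ it is some $S[m+k'']$ with $k''>k$ that has already descended to $FL[k+1]$, and your hypothesis only gives $O(2^{m+k''})$ for it --- too weak, exactly as you observe. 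The paper's fix is to re-index the induction by the \emph{lock} rather than the segment: \emph{any} thread that has acquired $FL[j]$ releases $FL[j]$ within $c\cdot 2^{m+j}$ delay, regardless of which segment it is. This holds because from the moment of acquiring $FL[j]$ the thread's remaining path to releasing $FL[j]$ is: wait for (and then use) $FL[j-1]$, each bounded by the hypothesis at level $j-1$, giving the recurrence $T(j)\le 2\,T(j-1)+O(1)$ with base case $T(0)=O(2^{m})$; the bound is independent of the thread's own segment index. Applied at level $k$, this bounds the taller blocker's occupancy of $FL[k]$ by $O(2^{m+k})$, and the lemma follows.

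A second point compounds the issue: you place ``a batch update of the items of $R$ in $S[m+k]$ itself'' inside the front-locked section, giving it intrinsic span $O(2^{m+k})$. That is a misreading of the algorithm --- the search for $A$ and deletion of $R$ from $S[k]$ happen in \step\ref{enu:finish-ops}'s preceding step, \emph{before} the front-locks are acquired; the front-locked section touches only the filter (size $O(p^{2})$) and $S[m']$ with $m'\le m$, so its span is uniformly $O(2^{m})$ for \emph{every} segment. This uniformity is not cosmetic: it is what makes the base case $T(0)=O(2^{m})$ valid for an arbitrary holder of $FL[0]$. If the front-locked section really did contain $\Theta(2^{m+k})$-span work specific to $S[m+k]$, the lock-indexed induction would fail and so would the lemma, for precisely the ``tall blocker'' reason you raise.
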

\begin{proof}
The front-locked section takes $O\left(2^{m}\right)$ delay, since
each operation on a parallel 2-3 tree in $S[m-1,m]$ takes $O\left(2^{m}\right)$
span by \ref{cor:M2-segment-access-bound}. We shall show by induction
that any segment $S[i]$ that has acquired $FL[k]$ will release $FL[k]$
within $c\cdot2^{m+k}$ delay, where $c$ is a constant chosen to
make it true when $k=0$. If $k>0$, then $S[i]$ next attempts to
acquire $FL[k-1]$, and if it fails then $S[m+k-1]$ must now be holding
it and will release it within $c\cdot2^{m+k-1}$ delay by induction,
and then $S[i]$ will actually acquire $FL[k-1]$ and then will release
$FL[k]$ within $c\cdot2^{m+k-1}$ delay by induction, which in total
amounts to $c\cdot2^{m+k}$ delay. Therefore any segment $S[m+k]$
that attempts to acquire $FL[k]$ will wait at most $c\cdot2^{m+k}$
delay for any current holder of $FL[k]$ to release it, and then take
at most $c\cdot2^{m+k}$ delay to run its front-locked section and
release $FL[0..k]$, which is in total $O\left(2^{m+k}\right)$ delay.
Similarly for when the interface attempts to acquire $FL[0]$.
\end{proof}
Then we can prove the \textit{key lemma}:
\begin{lem}[$M_{2}$ Final Slab Bound]
\label{lem:M2-final-slab-bound} Take any segment $S[k]$ where $k\ge m$,
and any operation $X$. Then $S[k]$ runs within $O\left(2^{k}\right)$
delay. So if $X$ finishes in segment $S[k]$ then the processing
of $X$ in the final slab takes $O\left(2^{k}\right)$ delay.\end{lem}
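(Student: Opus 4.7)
The plan is to establish the first claim—every run of $S[k]$ has delay $O\left(2^{k}\right)$—and then obtain the second claim by summing along the pipeline. An operation $X$ finishing in $S[k]$ traverses the buffers of $S[m], S[m+1], \ldots, S[k]$ in sequence, waiting at each $S[j]$ for at most one run of that segment before being processed, so the total delay collapses geometrically to $\sum_{j=m}^{k}O\left(2^{j}\right)=O\left(2^{k}\right)$.

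To establish the first claim I would decompose an $S[k]$-run into three phases: (i) acquiring the two neighbour-locks in the prescribed alternating order, (ii) the body (containing the front-locked sub-section and all the batch operations on $S[k-1,k]$), and (iii) releasing the locks. Phase~(ii) has delay $O\left(2^{k}\right)$ directly: \ref{cor:M2-segment-access-bound} gives span $O\left(2^{k}\right)$ per batch operation on the parallel 2-3 trees in $S[k-1,k]$, and \ref{lem:M2-front-access-bound} gives delay $O\left(2^{k}\right)$ for the front-locked sub-section. Phase~(iii) is $O(1)$.

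The main obstacle is phase~(i)—bounding the neighbour-lock waits without letting them cascade along the pipeline. The key observation is the alternating arrow-number pattern in \ref{fig:M2-flow}: consecutive neighbour-locks carry alternating labels $1,2,1,2,\ldots$, and each segment acquires locks in increasing-label order, so every neighbour-lock $L_{j,j+1}$ falls into one of two types. \emph{Type A} locks (those where both $S[j]$ and $S[j+1]$ acquire it as their \emph{second} lock) have the property that whichever side currently holds the lock has already acquired its other neighbour-lock and is therefore inside its body; it will release within the body bound $O\left(2^{j\pm1}\right)$. \emph{Type B} locks (both sides acquire it \emph{first}) allow the holder to still be blocked on its own second lock, but by the alternation that other lock is necessarily of Type A, so the waiting chain stops after exactly one extra hop at a segment that is provably inside its body. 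Hence each of $S[k]$'s two neighbour-lock waits is at most $O\left(2^{k\pm2}\right)=O\left(2^{k}\right)$, and the dedicated-lock primitive (each neighbour-lock has exactly two users) contributes only an $O(1)$ factor for competing acquisitions.

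Summing phases~(i)--(iii) then yields $T(k)=O\left(2^{k}\right)$ with an explicit, bounded constant, completing the first claim; the second claim follows from the geometric sum noted above. The delicate part will be a careful case check of the parity of $k-m$ to confirm that the Type~A / Type~B alternation really does terminate the wait chain after one hop in every direction, and to verify that the ``current holder is inside its body'' invariant is maintained across the non-trivial control flow of steps~\ref{enu:overfull-transfer}--\ref{enu:underfull-transfer}, where the balance invariants (\ref{lem:M2-balance}) are needed to keep each transfer's 2-3-tree operation within the $O\left(2^{k}\right)$ body budget.
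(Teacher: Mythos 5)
Your proposal is correct and takes essentially the same route as the paper: the body of an $S[k]$ run is bounded via \ref{cor:M2-segment-access-bound} and \ref{lem:M2-front-access-bound}, the neighbour-lock waits are cut off after one hop using the alternating acquisition order (your Type~A/Type~B distinction is exactly the paper's ``second lock''/``first lock'' argument), and the per-operation bound follows from the geometric sum over $S[m..k]$. The only boundary case to add is $S[m]$, whose first lock is shared with the $M_{2}$ interface rather than a segment; there the holder's critical section (running $S[m-1]$ plus the front-locked filter insertion) is bounded by $O\left(2^{m}\right)$ delay via \ref{lem:M2-front-access-bound}, which is how the paper closes that case.
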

\begin{proof}
Once any $S[k]$ acquires the second neighbour-lock, it will finish
within $O\left(2^{k}\right)$ delay, since the operations on the parallel
2-3 trees takes $O\left(2^{k}\right)$ span by \ref{cor:M2-segment-access-bound},
the front access take $O\left(2^{k}\right)$ delay by \ref{lem:M2-front-access-bound},
and inserting the unfinished operations into the buffer of $S[k+1]$
takes $O(\log p)$ span. Thus once any $S[k]$ acquires the first
lock, it waits $O\left(2^{k}\right)$ delay for the holder of the
second lock to finish, and then itself finishes within $O\left(2^{k}\right)$
delay. And once the interface acquires the lock shared with $S[m]$,
it will finish within $O\left(2^{m}\right)$ delay by \ref{lem:M2-front-access-bound}.
Thus any $S[k]$ when run will acquire both locks within $O\left(2^{k}\right)$
delay, and then itself finish within $O\left(2^{k}\right)$ delay.
Therefore, the final slab takes $O\left(\sum_{i=m}^{k}2^{i}\right)\wi O\left(2^{k}\right)$
delay to process any operation that finishes in $S[k]$.
\end{proof}
To bound the total work, we begin by partitioning it per operation:
\begin{lem}[$M_{2}$ Segment Work]
\label{lem:M2-segment-work} We can divide the \textbf{segment work}
(work done on segments) in $M_{2}$ among the operations in the following
natural way --- each segment $S[k]$ does $\Theta\left(2^{k}\right)$
work per operation it processes.\end{lem}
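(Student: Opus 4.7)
The plan is to decompose the per-run work of $S[k]$ and amortize across all runs. By \ref{cor:M2-segment-access-bound} together with the size bound from \ref{lem:M2-balance} (final slab segments have at most $3\cdot 2^{2^k}$ items), every batch operation on a parallel 2-3 tree in $S[k]$ costs $\Theta(2^k)$ per item touched. For one run of $S[k]$ processing a batch of size $b$, the work decomposes into: (a) searching, deleting, and updating the accessed items in $S[k]$'s trees, touching at most $b$ items, giving $O(b\cdot 2^k)$; (b) shifting $R'$ to the front of $S[m']$, giving $O(|R'|\cdot 2^{m'})\le O(b\cdot 2^k)$ since $m'\le k$; (c) the overfull transfer in step~7 of $t_7$ items, giving $O(t_7\cdot 2^k)$; (d) the underfull transfer in step~8 of $t_8$ items, giving $O(t_8\cdot 2^k)$; (e) inserting unfinished operations into $S[k+1]$'s buffer, giving $O(b\cdot 2^{k+1})=O(b\cdot 2^k)$. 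The lower bound $\Omega(2^k)$ per operation follows from (a) alone.

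Components (a), (b), and (e) each contribute $O(b\cdot 2^k)$ per run, which sums to $O(N_k\cdot 2^k)$ over all runs, where $N_k$ is the total number of operations processed by $S[k]$. For (d), we have $t_8\le\min(i,c,d)\le d\le b$, where $d$ is the number of successful deletions in the batch, so (d) is also $O(b\cdot 2^k)$ per run and likewise sums to $O(N_k\cdot 2^k)$.

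The remaining component (c) is the main challenge, since in a single run $t_7$ can reach $4p^2$ (by Invariant~B inside the proof of \ref{lem:M2-balance}) and thus exceed $b$. I would amortize by tracing each step~7-transferred item at $S[k]$ back to how it arrived in $S[k-1]$: either via step~7 of $S[k-1]$ in an earlier run (handled inductively at the lower level), via step~8 of $S[k]$ itself (a returning item that was already counted at level $k$), or, when $k-1=m$, via a step~4d shift from some deeper final-slab segment. Introducing a potential proportional to the current excess at $S[k-1]$ weighted by $2^k$, each run of $S[k]$ decreases the potential by $t_7\cdot 2^k$ and thereby pays for its step~7 cost, while the corresponding increases in potential (from step~7 of $S[k-1]$ or a step~4d shift) are charged against the work already performed by the lower-level segment or the originating operation.

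The main obstacle will be verifying that this cross-level potential telescopes cleanly so that each operation reaching $S[k]$ absorbs only $O(2^k)$ amortized step~7 cost at that level. A single step~4d shift originates an item that could in principle cascade upward through every level from $S[m+1]$ to $S[l]$, with geometrically growing per-level costs, so the balance invariants from \ref{lem:M2-balance} (which bound the total accumulated excess) must be combined with the fact that every $S[k]$-run processes at least one operation in its buffer to prevent any single operation from being charged more than its allotted $\Theta(N_k\cdot 2^k)$ share at $S[k]$. Once this amortization is carried out, summing (a)--(e) across all runs yields total segment work $\Theta(N_k\cdot 2^k)$ at $S[k]$, completing the division.
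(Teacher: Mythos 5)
Your decomposition (a)--(e), the lower bound from (a), and the treatment of the frontward transfers (paying step~8 out of the successful deletions) match the paper's argument, and you correctly isolate the rearward transfers of step~7 as the crux. But that crux is exactly what your proposal leaves unproven, and the potential you sketch would not close it. You take the level-$k$ potential to be the excess of the \emph{single} segment $S[k-1]$ weighted by $2^{k}$, and propose to charge its increases either to ``the work already performed by the lower-level segment'' or to the originating operation. Charging to the lower level cannot telescope: a step-7 transfer of an item from $S[k-1]$ to $S[k]$ costs the lower level only $\Theta\left(2^{k-1}\right)$ work (work which was itself paid out of potential), while the increase it can cause in the level-$(k+1)$ potential is $2^{k+1}$ per item; since the per-level cost doubles at every level, passing charges one level at a time with only constant slack fails --- this is precisely the ``geometrically growing per-level costs'' obstacle you name and then defer. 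The fix you gesture at (balance invariants plus ``each run processes at least one operation'') does not resolve it either: a single run of $S[k]$ can transfer $\Theta\left(p^{2}\right)$ items while its buffer holds a single operation, so per-run accounting against $b$ cannot bound step~7.

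The paper closes this step with a differently chosen potential: $C$ is the \emph{minimum total charge} of rearward transfers needed to make every non-terminal segment not exceed capacity, where a transfer from $S[j-1]$ to $S[j]$ has charge $2^{j-1}$; equivalently, it weights the cumulative excess of the whole prefix $S[0..j-1]$ (the quantity $e(j)$ from the proof of \ref{lem:M2-balance}), not the excess of the individual segment. With this choice, a search/update/insertion that finishes in $S[k]$ and places its item at the front of $S[m']$ raises only the terms with $j\in[m'+1..k]$, so it increases $C$ by less than $\sum_{j\le k-1}2^{j}<2^{k}$ --- one up-front deposit, within that operation's $\Theta\left(2^{k}\right)$ budget at the segment that processes it, which already covers the entire future cascade across all levels --- while every actual step-7 transfer performed by $S[k]$ lowers exactly one prefix excess and hence decreases $C$ by $\Theta\left(2^{k}\right)$, paying for its own work. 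So the geometric growth is absorbed once at the finishing segment instead of being relayed level by level. As written, your proof has a genuine gap at this amortization; note also that the lemma covers $k<m$, which the paper handles by reusing the argument of \ref{lem:M1-segment-work} and which your decomposition (tailored to the final-slab steps) does not address.
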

\begin{proof}
If $k<m$, then the proof is the same as for \nameref{lem:M1-segment-work}
(\ref{lem:M1-segment-work}). So consider only $k\ge m$. The front-locking
takes $O(k)\wi O\left(2^{k}\right)$ work, and both accessing the
filter and inserting into the buffer of $S[k+1]$ take $O(\log p)\wi O\left(2^{k}\right)$
work per operation. Accessing the parallel 2-3 trees takes $\Theta\left(2^{k}\right)$
work per item by \ref{cor:M2-segment-access-bound}. Thus searching
for the accessed items in $S[k]$ takes $O\left(2^{k}\right)$ work
per operation, and frontward transfers (in \step\ref{enu:underfull-transfer})
can be paid for by the successful deletions.

All that remains is to pay for the rearward transfers (in \step\ref{enu:overfull-transfer}).
Define the charge for a rearward transfer from $S[k-1]$ to $S[k]$
to be $2^{k-1}$. Let $C$ be the minimum total charge for rearward
transfers needed to make every segment not exceed capacity. Then each
search/update/insertion that finishes in $S[k]$ increases $C$ by
at most the total charge for a cascade of rearward transfers that
returns the shifted/inserted item to its original segment, which is
less than $2^{k}$. And each rearward transfer done by $S[k]$ is
necessary to make $S[k-1]$ not exceed capacity, so it decreases $C$
by at least $2^{k}$. Therefore the searches/updates/insertions can
pay for the rearward transfers.
\end{proof}
We are now ready to bound the core work done by $M_{2}$:
\begin{lem}[$M_{2}$ Core Work]
\label{lem:M2-core-work} $M_{2}$ takes $O(W_{L})$ segment work
on segment-bound operations, for some linearization $L$ of $D$.\end{lem}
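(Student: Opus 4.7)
The plan is to mirror closely the template of \ref{thm:M1-work}. As a first simplification, replace real deletions by marking (so marked items cascade through every subsequent final slab segment until they reach the terminal segment, where they are finally removed). This can only increase segment work on segment-bound operations, so an upper bound on the simplified version suffices. Then set up an abstract list $R$ of the items currently in $M_2$, ordered by segment and by recency within each segment, and apply the \nameref{lem:work-set-cost} (\ref{lem:work-set-cost}). As in the $M_1$ analysis, simulate $M_2$'s execution by a sequence $G$ of list operations on $R$: successful searches/updates become searches, actual insertions at the back of the terminal segment become inserts, cascades of marked items between consecutive segments become demotions, and the removal of a marked item at the terminal segment is a delete. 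Rearward/frontward balance transfers of unmarked items and same-batch shifts land in the simulation implicitly via how search in $R$ pulls an item forward past the items in between. Concatenate these simulated operations in the order in which the corresponding segment runs complete in the actual pipelined execution to obtain $G$.

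The main verification is that \ref{lem:work-set-cost} applies with some constant $d$. For the cost condition, \ref{lem:M2-segment-work} gives $O(2^k)$ segment work for an operation finishing in $S[k]$, and the \nameref{lem:M2-balance} (\ref{lem:M2-balance}) force $|S[0..k-1]|$ to equal the full capacity up to $O(p^2)$ slack whenever $k\ge 1$, so the old rank $q$ satisfies $\log q + 1 \in \Omega(2^k)$; the boundary case $k=0$ and the case where $M_2$ is very small can be verified by direct inspection. The rank-reduction condition splits into two cases. If $k \le m$, the shift of $x$ to the front of $S[k-1]$ is the same geometry as in \ref{thm:M0-perf}, giving new rank $O(q^{1/2})$. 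If $k \ge m+1$, the shift lands $x$ at the front of $S[m]$, so its new rank is bounded by $|S[0..m-1]| + O(p^2) = O(p^2)$ (including same-batch shifts and the $O(p^2)$ invariant slack); but by \ref{lem:M2-balance} (invariants~3 and 4) we have $q \ge \Omega(2^{2^m}) = \Omega(p^4)$ for any item finishing in $S[m+1]$ or later, so the new rank is $O(q^{1/2})$ with a constant $d$. Demotions cost zero and push marked items monotonically backward, and marked items are indeed only demoted further or finally deleted, matching the hypothesis of \ref{lem:work-set-cost}. This yields total simulated cost $O(W_G)$, which dominates $M_2$'s segment work on segment-bound operations.

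Finally, let $L$ be $G$ with every group-operation expanded into its constituent operations in group order. Since each group is on a single item, expansion preserves access ranks operation-by-operation and $W_G \le W_L$, giving the desired $O(W_L)$ bound.

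The main obstacle I expect is choosing the order of the simulated list operations within $G$ so that the pipelined execution is faithfully tracked --- multiple batches may be mid-processing at different segments simultaneously, and final slab segments restore balance incrementally rather than all at once. Ordering entries of $G$ by the completion times of the corresponding segment runs should work, but one must verify that the \nameref{lem:M2-balance} hold at each such moment, and in particular that the $O(p^2)$ slack of invariants~3 and 4 is tight enough that the boundary case $k \in \{m, m+1\}$ of the rank-reduction condition admits a constant $d$; this is precisely the role of the choice $m = \lceil \log\log(2p^2)\rceil + 1$.
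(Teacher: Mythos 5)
Your proposal follows the paper's proof essentially step for step: mark deletions instead of removing them, maintain the list $R$ ordered by segment and then recency within segment, simulate the shifts/cascades/insertions/removals by searches/demotions/inserts/deletes ordered by when the segment runs complete, verify the hypotheses of the \nameref{lem:work-set-cost} (\ref{lem:work-set-cost}) via \ref{lem:M2-segment-work} and \ref{lem:M2-balance}, and finally expand $G$ into a linearization $L$. So the approach is the right one; the issue is in one of your verification steps.

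Your rank-reduction argument for $k\ge m+1$ is wrong as written. The first slab does \emph{not} have size $O(p^{2})$: with $m=\ceil{\log\log(2p^{2})}+1$ the capacity $2^{2^{m-1}}$ of $S[m-1]$ lies anywhere between $2p^{2}$ and about $(2p^{2})^{2}$, so $|S[0..m-1]|$ can be $\Theta(p^{4})$. Combined with your lower bound $q=\Omega(p^{4})$, the pair ``new rank $O(p^{2})$, old rank $\Omega(p^{4})$'' would in truth only yield new rank $O(q)$, not $O(q^{1/2})$. The comparison has to be made in terms of the capacities themselves, which is what the paper does: after the shift the item is within the first $2\cdot2^{2^{m'-1}}$ positions of $R$ (since $S[0..m'-1]$ does not exceed capacity at that point), while before the shift its rank was at least roughly $2^{2^{k-1}}\ge2^{2^{m'}}=\left(2^{2^{m'-1}}\right)^{2}$ because $S[0..k-1]$ is at most $2p^{2}$ below capacity; the squaring structure of the segment capacities, not powers of $p$, is what gives new rank $O(q^{1/2})$ with a constant $d$, uniformly in whether $\log\log(2p^{2})$ is an integer. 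Two smaller points: in the final slab, insertions are performed at the front of $S[m']$ rather than at the back of the terminal segment (immaterial, since \ref{lem:work-set-cost} prices inserts at $O(\log(n+1)+1)$ regardless of position); and when expanding $G$ into $L$ you should expand each entry by the full list stored in its filter entry (the filter-bound operations that finish together with it), as the paper does --- expanding only the sort-time groups omits the filter-bound operations, so your $L$ would not be a linearization of $D$.
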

\begin{proof}
As we did for $M_{1}$, we pretend that a deleted item is marked rather
than removed, and marked items are transferred to the next segment
when it runs, and are removed only at the last segment.

Then we use the \nameref{lem:work-set-cost} (\ref{lem:work-set-cost})
on the list $R$ of the items in $M_{2}$ (including marked items)
in order of segment followed by recency within the segment, where
$R$ is updated when each segment $S[k]$ finishes processing each
batch as follows:
\begin{itemize}
\item If $k<m$, then update the sublist of $R$ for $S[0..k]$.
\item If $k\ge m$, then update the sublist of $R$ for $S[k-1]$ and $S[k]$,
and if $S[k+1]$ was just created to hold newly inserted items then
update the sublist of $R$ for $S[k+1]$ as well.
\end{itemize}
To apply the lemma, we construct a sequence $G$ of list operations
on $R$ that simulate the updates to $R$ in the actual execution
$A$ of $M_{2}$ as follows:
\begin{itemize}
\item Shift successfully searched/updated items in $A$: Search for them
in reverse order (from back to front in $R$).
\item Shift marked (to-be-deleted) items in $A$: Demote them.
\item Insert items in $A$: Insert in the desired positions.
\item Remove marked items in $A$: Delete them.
\end{itemize}
Each segment $S[k]$ takes $O\left(2^{k}\right)$ work per operation
that reaches it, by \nameref{lem:M2-segment-work} (\ref{lem:M2-segment-work}).
Note that each searched item in $S[k]$ where $k\ge m$ has rank in
$R$ at least $2^{2^{k-1}}$ before the shift and at most $2\cdot2^{2^{m\smash{'}-1}}$
after the shift where $m'=\min(m,k-1)<k$, since at that point $S[0..m'-1]$
does not exceed capacity by \nameref{lem:M2-balance} (\ref{lem:M2-balance}).
So just as for $M_{1}$, the conditions of \ref{lem:work-set-cost}
are satisfied for $G$ on $R$, and hence $M_{2}$ takes $O(W_{G})$
segment work on segment-bound operations. Now let $L$ be the same
as $G$ but with each operation expanded to the original sequence
of operations that were finished together with that one (according
to the list in the filter entry). Clearly $W_{G}\le W_{L}$, so we
are done.
\end{proof}
We now have the needed lemmas to bound the effective work of $M_{2}$.
\begin{thm}[$M_{2}$ Effective Work]
\label{thm:M2-work} $M_{2}$ takes $O(W_{L}+e_{L}\cdot\log p)$
effective work for some linearization $L$ of $D$.\end{thm}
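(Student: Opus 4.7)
The plan is to decompose the total effective work into a non-segment part (batch cutting, sorting, and filtering) and a segment part (work executed inside the first and final slabs), and then bound each piece separately. The non-segment part is straightforward: the cutting and feed-buffer bookkeeping is $O(1)$ per operation (as in the proof of \ref{thm:M1-work}); the sorting of cut batches is handled by a single application of \ref{lem:batch-sort-cost} with $\e=\frac{1}{3}$, yielding $O(W_L + e_L\cdot\log p)$ total sorting work for some batch-preserving linearization $L$ of $D$ (with the same small-batch analysis as in \ref{thm:M1-work} showing that no operation is a small-op when $n>3p^2$); and the work a filter-bound operation costs in the filter is $O(\log p)$, which I can amortize against the $\Theta(\log p)$ work it already did traversing the first slab.

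For the segment part I split operations into segment-bound and filter-bound. Segment-bound operations are handled directly by \ref{lem:M2-core-work}, giving $O(W_{L'})$ for some linearization $L'$. Combining $L$ and $L'$ by taking whichever linearization gives the worse bound will yield the single linearization needed in the statement. The remaining task, and the main obstacle, is bounding the segment work on filter-bound operations: every such operation traverses the first slab doing $\Theta(\log p)$ work and never reaches the final slab, so I cannot appeal to the working-set argument of \ref{lem:M2-core-work} to pay for it. The idea is to charge this work to the final-slab work done on the corresponding operation that caused the filtering.

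To execute this charging I would use the weak-priority scheduler. Call a step \emph{high-busy} if $Q_1$ has at least $\frac{1}{2}p$ ready nodes. Since the scheduler then dedicates $\frac{1}{2}p$ processors to $Q_1$ work and the total $Q_1$ work is bounded by the final-slab work, which is $O(W_{L'})$ by \ref{lem:M2-core-work} and \ref{lem:M2-segment-work}, there are only $O(W_{L'}/p)$ high-busy steps. A \emph{run} of the first slab on some batch has at most $O(p\cdot\log p)$ work with $O((\log p)^2)$ span; classify a run as high-busy if at least half of its steps are high-busy, in which case the total work of high-busy runs is $O(p)$ times the number of high-busy steps, i.e.\ $O(W_{L'})$. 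This lets me ignore all first-slab work during high-busy activity.

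For high-idle runs I use the key lemma (\ref{lem:M2-final-slab-bound}). Fix a filter-bound operation and let $X$ be the final-slab operation on the same item $x$ that caused it to be filtered, with $X$ finishing in $S[k]$. During a high-idle run of $S[0..m-2]$ neither the filter-locks nor neighbour-locks are held by the first slab (it does not touch them), so $Q_1$-threads advancing $X$ are never blocked by this run, and because fewer than $\frac{1}{2}p$ $Q_1$ nodes are ready each high-idle step actually advances $X$. Thus the total number of high-idle steps contained in the window during which $X$ sits in the final slab is $O(2^k)$ by \ref{lem:M2-final-slab-bound}, so the first-slab work expended on $x$ during these steps is $O(1)$ times the final-slab work spent on $X$, which is $\Theta(2^k)$ by \ref{lem:M2-segment-work}. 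Summing over all such $X$ and invoking \ref{lem:M2-core-work} one more time bounds the filter-bound contribution by $O(W_{L'})$, and combining with the earlier estimates yields the claimed $O(W_L + e_L\cdot\log p)$.
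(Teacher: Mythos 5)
Your proposal is correct and follows essentially the same route as the paper's proof: the same decomposition into cutting/sorting/filtering versus segment work, the same amortization of filter and $S[m-1]$ work against the $\Omega(\log p)$ already spent in $S[0..m-2]$, and the same high-busy/high-idle step and run classification with the key lemma (\ref{lem:M2-final-slab-bound}) used to charge high-idle first-slab work on a filter-bound item to the $\Theta(2^k)$ final-slab work of the operation that trapped it. The only deviations are cosmetic (the paper takes $\e=\frac{1}{2}$, and your $O(p\log p)$ estimate for a first-slab run's work should be $O(p^2\log p)$, though neither affects the argument).
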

\begin{proof}
We shall follow the same techniques as in the proof of \nameref{thm:M1-work}.
We can ignore the work done to transform the input batches from the
parallel buffer into cut batches of size $p^{2}$, since it takes
$O(1)$ work per operation. Sorting the cut batches takes $O(W')$
work where $W'=W_{L}+e_{L}\cdot\log p$ for some linearization $L$
by \nameref{lem:batch-sort-cost} (\ref{lem:batch-sort-cost}) with
the choice of $\e=\frac{1}{2}$.

We divide the segment work done among the operations as per \ref{lem:M2-segment-work}.
Then the segment-bound operations take $O(W_{L'})$ total segment
work for some linearization $L'$ of $D$, by \nameref{lem:M2-core-work}
(\ref{lem:M2-core-work}). We can ignore the work done by $S[m-1]$
and the filter, since that is $O(\log p)$ work per operation, each
of which had already taken $\Omega(\log p)$ segment work in $S[0..m-2]$.
Thus we just have to show that the filter-bound operations take $O(W_{L'})$
total work in $S[0..m-2]$.

We classify each time step as a \textbf{high-busy step} iff $Q_{1}$
has at least $\frac{1}{2}p$ ready nodes, and as a \textbf{high-idle
step} otherwise. On each high-busy step, $\frac{1}{2}p$ ready nodes
generated for the final slab will be executed (by the weak-priority
scheduler). Since the final slab takes $O(W_{L'})$ work in total
by the preceding analysis, there can be only $O\left(\frac{W_{L'}}{p}\right)$
high-busy steps, which hence take $O(W_{L'})$ work.

Now consider each first slab run ($M_{2}$ interface \step\ref{enu:first-slab-run})
as comprising an $S[0..m-2]$ run and then an $S[m-1]$ run (where
only the latter is neighbour-locked and front-locked). We classify
each $S[0..m-2]$ run as a \textbf{high-busy run }iff at least half
the time steps during the run are high-busy steps, and as a \textbf{high-idle
run} otherwise. During each high-busy run, the total work is $O(1)$
times the work during high-busy steps, since every high-busy step
does $\Theta(p)$ work. Thus we can ignore the work done during high-busy
runs, and it remains to show that high-idle runs do $O(W_{L'})$ work
on filter-bound operations.

Note that every filter-bound operation is trapped in the filter due
to some operation that finishes in the final slab. So take any operation
$X$ on item $x$ that finishes in segment $S[k]$ in the final slab,
and consider the time interval $Z$ when $X$ is in the final slab
(after being put in the buffer for $S[m]$). Let $\Gamma$ be the
remaining computation for the processing of $X$, which initially
has delay $O\left(2^{k}\right)$ by \nameref{lem:M2-final-slab-bound}
(\ref{lem:M2-final-slab-bound}), and note the following:
\begin{itemize}
\item At any high-idle step, all ready $Q_{1}$-nodes are executed.
\item During any $S[0..m-2]$ run, the interface is not holding any neighbour-lock
or front-lock.
\end{itemize}
These imply that during $\Gamma$, at any high-idle step during an
$S[0..m-2]$ run, the delay of $\Delta(\alpha)$ is reduced for each
acquire-stall node $\alpha$ in $\Gamma$ (by structural induction),
and hence the delay of $\Gamma$ is reduced. Thus there are $O\left(2^{k}\right)$
such steps during $Z$. Finally observe that each $S[0..m-2]$ run
takes $\Omega(\log p)$ high-idle steps, and hence at most $O\left(1+\frac{2^{k}}{\log p}\right)$
$S[0..m-2]$ runs overlap $Z$, and they do at most $O\left(\log p+2^{k}\right)\wi O\left(2^{k}\right)$
work on $x$, which is $O(1)$ times the work done on $X$. Summing
over all operations $X$ that finish in the final slab, at most $O(W_{L'})$
total work is done during $S[0..m-2]$ runs on filter-bound operations.
\end{proof}
At last, we tackle the effective span bound for $M_{2}$.
\begin{defn}[$M_{2}$ Time Linearization]
 We say that a linearization $L$ of $D$ is a \textbf{time linearization}
if the following hold:
\begin{enumerate}
\item $L$ orders operations by when they finish (for those finished by
the final slab this is defined to be when \step\ref{enu:finish-ops}
is run).
\item For operations that finish at the same time in the same segment, $L$
puts searches/updates/insertions in reverse order of how the items
are inserted into $S[m']$ (in \step\ref{enu:finish-ops}).
\end{enumerate}
\end{defn}
\begin{lem}[$M_{2}$ Rank Invariant]
\label{lem:M2-rank-inv} Every item $x$ in the final slab is within
the first $r$ items of the final slab, where $r$ is the number of
distinct items searched/updated/inserted since the last (combined)
operation that shifted/inserted $x$ (in \step\ref{enu:finish-ops}),
according to any time linearization.\end{lem}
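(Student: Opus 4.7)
The plan is to prove the rank invariant by induction on the events in $L$ together with any associated step-4g rearward transfers, while also maintaining a strengthened \emph{ordering invariant}: at every point, the items of the final slab in ascending rank order have strictly decreasing $L$-times of their most recent arrival at the front of $S[m]$, counting both step-4d shifts/inserts and step-4g transfers as arrivals. The ordering invariant implies the rank invariant: for every item $x$, each of the $\rho(x)-1$ items in front of $x$ arrived more recently than $x$, so its presence corresponds either to a step-4d op on a distinct item directly contributing to $r(x)$, or to a step-4g transferee whose presence I will charge separately to a step-4d op using the arguments below.

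The base case is vacuous. $L$-ops that do not touch the final slab, deletions of final-slab items, and unsuccessful searches all preserve both invariants straightforwardly, since ranks only decrease (or stay the same) while $r$ is monotone non-decreasing. The central inductive case is a step-4d batch at $S[k]$ with $k \ge m$: because $L$ lists the batch's shifted/inserted items in reverse of their insertion order into $S[m']$, processing the batch conceptually one $L$-op at a time shows that the $i$-th processed item $y$ ends at rank $i$ within the new front block and has exactly $i$ distinct contributors to $r(y)$ (itself and the $i-1$ already-processed batch-mates). For a bystander $w$, any shifted item $z$ that bumps $\rho(w)$ up by one must have been behind $w$ before the shift, so by the ordering invariant $z$'s previous arrival time precedes $w$'s, making $z$ a new distinct contributor to $r(w)$, matching the rank increase exactly.

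The hard part will be the step-4g rearward transfer in an $S[m]$ run, in which $|T|$ items move from the back of $S[m-1]$ to the front of $S[m]$ without $L$ recording any op, so $r$ does not automatically track the $|T|$ jump in $\rho$. For a transferred item $y$, I observe that each of the $|S[m-1]|-1$ items stacked above $y$ in $S[m-1]$ was added to $S[m-1]$'s front strictly after $y$'s last step-4d shift (otherwise $y$ could not have reached the back), each such addition being a distinct step-4d op contributing to $r(y)$; since $|T|$ equals the current overflow, $\rho(y) \le |T| \le r(y)$. For a bystander $w$ previously in $S[m]$ whose rank jumps by $|T|$ in step 4g, I plan to combine the bound $|T| \le |R'|$ — which follows from the balance invariants keeping $S[m-1]$ at or below capacity between $S[m]$ runs — with an accounting that matches each item in the current batch $A$ against either an already-present contribution to $r(w)$ (in which case it lies in $R$ in front of $w$ in $S[m]$ by the ordering invariant and therefore offsets the rank jump by being deleted from $S[m]$ during the same run) or a fresh contribution to $r(w)$ (in which case it exactly pays for the residual jump). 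This yields $\rho(w) \le r(w)$ after the run, and the ordering invariant is re-established by placing the transferred items in decreasing-arrival-time order at the front.
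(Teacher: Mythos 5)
Your overall strategy (an induction over the linearization with an explicit charging of \step\ref{enu:overfull-transfer} transfers, which the paper's own two-sentence simulation argument --- ``each shifted/inserted item is placed at the front of $S[m']$, and between consecutive accesses each searched/updated/inserted item moves $x$ rearward by at most one position'' --- largely glosses over) is reasonable, but the step you yourself identify as the hard part is argued incorrectly for the transferred item $y$. Your justification is that every item stacked above $y$ in $S[m-1]$ ``was added to $S[m-1]$'s front strictly after $y$'s last \step\ref{enu:finish-ops} shift (otherwise $y$ could not have reached the back), each such addition being a distinct \step\ref{enu:finish-ops} op contributing to $r(y)$.'' Both halves fail: (i) items reach the back of $S[m-1]$ directly via the frontward transfers of \step\ref{enu:underfull-transfer}, so $y$ can sit at the back of $S[m-1]$ with items above it that predate $y$'s last shift/insert; and (ii) arrivals at the front of $S[m-1]$ are not all \step\ref{enu:finish-ops} ops --- they also come from the interface's capacity-restoring cascade inside the first slab (where the item pushed into $S[m-1]$ is an old displaced item, not the accessed one), and two later accesses to the same item can cause two such arrivals while contributing only once to $r(y)$. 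The sound route for $y$ is the one you already use for bystanders: by Invariant~1 of \ref{lem:M2-balance} (together with the front-locking, which is what forbids an overflow of $S[m-1]$ from being ``in flight'' at the moment $x$ or $y$ is shifted), the transferred count $T$ is at most the number of items shifted/inserted to the front of $S[m-1]$ during this very $S[m]$ run, and each of those is a distinct item whose operation finishes in this run, hence after $y$'s last shift/insert in the time linearization, so $\rho(y)\le T\le r(y)$ without any claim about the history of the items above $y$.

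The bystander accounting also has a hole as stated: you match each already-counted item of the current batch $A$ to an element of $R$ sitting in front of $w$ in $S[m]$, but an item accessed earlier (after $w$'s shift) need not be found in $S[m]$ now --- it may have finished in the first slab, or have been moved to the back of $S[m-1]$ by \step\ref{enu:underfull-transfer} --- so it provides no offsetting deletion from $S[m]$. The repair is again to charge the $T$ transferred items only against the items actually placed at the front of $S[m-1]$ in this run ($R'$ plus any insertions), noting that each such item found behind $w$ is necessarily a fresh contributor to $r(w)$ (an item accessed since $w$'s shift cannot have drifted back behind $w$ without $w$ being accessed), while each one found ahead of $w$ is simultaneously removed from ahead of $w$. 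Finally, two smaller issues: your ordering invariant assigns ``$L$-times'' to \step\ref{enu:overfull-transfer} arrivals, but transfers are not operations of $L$, so this needs a separate timestamping convention; and the proposal is explicitly a plan (``I plan to combine\ldots''), so the hard case is not actually discharged. With these repairs your argument would be a more explicit, heavier-machinery version of the paper's proof, which simply works in the global $M_2$ order (where transfers are order-preserving no-ops) and leaves the bridge to final-slab rank implicit.
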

\begin{proof}
The shifts/inserts can be simulated by individual shifts/inserts of
items in the same order as the time linearization, such that every
item $x$ that is shifted/inserted is placed at the front of $S[m']$,
at which point the invariant is preserved for $x$. Between consecutive
accesses to $x$, every item searched/updated/inserted can move $x$
rearward in $M_{2}$ by at most $1$ position.\end{proof}
\begin{thm}[$M_{2}$ Effective Span]
\label{thm:M2-span} $M_{2}$ takes $O\left(\frac{W_{L}}{p}+d\cdot(\log p)^{2}+s_{L}\right)$
effective span for some linearization $L$ of $D$.\end{thm}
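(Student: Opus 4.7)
The plan is to adapt the counter-based accounting used in the proof of \ref{thm:M1-span}. I would place a counter at each $M_2$-call in $D$, initialized to zero, and at each time step increment the counter at every pending $M_2$-call (one whose result has not yet been returned). The effective span is then bounded by the final counter-weighted span of $D$. I would split each time step into two classes: \textbf{filter-full steps} (when the filter has size at least $p$) and \textbf{filter-empty steps} (size less than $p$), and bound the contribution of each class to the counter-weighted span separately.

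For the filter-full steps I would argue globally that there are only $O(W_L/p)$ of them, using a token argument sketched in \ref{sec:PWM2}. Give each operation in the final slab one token per time step it is present there; every filter-full step consumes at least $p$ tokens. By the key lemma (\ref{lem:M2-final-slab-bound}), an operation that finishes in $S[k]$ spends $O(2^k)$ delay in the final slab, and by \ref{lem:M2-segment-work} and \ref{lem:M2-core-work} the total token consumption amounts to $O(W_{L'})$ for the linearization $L'$ produced there. Since every path in $D$ is incremented at most once per time step, the filter-full contribution to the counter-weighted span along any path is $O(W_L/p)$ after absorbing the choice of linearization into $L$.

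For the filter-empty steps I would trace along any path $C$ in $D$ the journey of each $M_2$-call $X$ on $C$, exactly as in the $M_1$ argument but with the pipelined final slab replacing the monolithic tail. During a filter-empty step the interface's gating condition (filter size $\le p^2$) is automatically satisfied, so the first slab actually runs and pulls from the feed buffer. Then $X$ waits $O(\log p + u'/p)$ filter-empty steps for the preceding uncut input batch, $O((\log p)^2)$ for the current cut batch to finish in the first slab (sort plus first-slab span by \ref{cor:M2-segment-access-bound} applied to $S[0..m-1]$), $O((\log p)^2)$ for its own cut batch $B$ of size $p^2$ to be sorted and passed through the first slab, and $O(b^*/p) = O(p)$ per intervening size-$p^2$ cut batch. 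As in the $M_1$ proof, no two calls on $C$ can wait for the same intervening batch, so these contributions telescope to $O(N/p)$ along $C$. Finally, the traversal of $X$ through the final slab takes $O(2^k)$ delay when $X$ finishes in $S[k]$ by \ref{lem:M2-final-slab-bound}, and by the rank invariant (\ref{lem:M2-rank-inv}) combined with the segment capacities this is $O(\log r_X + 1)$, where $r_X$ is the access rank of $X$ according to a time linearization $L$. Summing $\log r_X + 1$ over calls on $C$ gives $s_L$, and the per-call $O((\log p)^2)$ overhead contributes $O(d \cdot (\log p)^2)$, so the filter-empty contribution along $C$ is $O(N/p + d \cdot (\log p)^2 + s_L) \subseteq O(W_L/p + d \cdot (\log p)^2 + s_L)$.

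The main obstacle will be making the pipelining bookkeeping rigorous: I need to show that whenever $X$ is pending and the current step is filter-empty, either $X$'s own cut batch or a uniquely-assignable intervening batch makes progress, so that the intervening-batch contribution telescopes without double counting, and that the acquire-stalls introduced by the neighbour- and front-locks are already absorbed into the $O(2^k)$ final-slab delay via \ref{lem:M2-front-access-bound}. Once the telescoping along $C$ is justified and the final-slab traversal is identified with $\log r_X + 1$ using the rank invariant, combining the filter-full and filter-empty bounds yields the stated effective span.
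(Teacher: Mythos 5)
Your proposal follows essentially the same route as the paper's proof: the same filter-full/filter-empty decomposition, the same token argument via \ref{lem:M2-final-slab-bound}, \ref{lem:M2-segment-work} and \ref{lem:M2-core-work} for the filter-full steps, and the same counter-weighted journey-tracing along a path in $D$ with telescoping waits on intervening size-$p^{2}$ batches, converting the $O\left(2^{k}\right)$ final-slab delay into $O(\log p+\log r)$ via the rank invariant (\ref{lem:M2-rank-inv}) and balance invariants (\ref{lem:M2-balance}). The bookkeeping you defer --- charging a filter-bound call to the prior operation on the same item that is in the final slab, and choosing the time linearization so that each call on the path precedes the operations that finish together with it --- is exactly what the paper supplies, not a different idea.
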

\begin{proof}
Similar to the proof of \nameref{thm:M1-span}, we shall find the
time taken to execute the actual execution DAG for $P$ using $M_{2}$
on an unlimited number of processors when each $M_{2}$-node takes
unit time while every other node takes zero time. We classify each
time step as a \textbf{filter-full} step iff the filter has size at
least $p$, and as a \textbf{filter-empty} step otherwise. And we
shall separately count them.

In both cases we shall utilize the \nameref{lem:M2-final-slab-bound}
(\ref{lem:M2-final-slab-bound}).

\br

\uline{Filter-full steps}

Each operation $o$ that finishes in a final slab segment $S[k]$
takes $\Theta\left(2^{k}\right)$ work in the final slab as per \nameref{lem:M2-segment-work}
(\ref{lem:M2-segment-work}), but stays in the final slab for only
$O\left(2^{k}\right)$ steps by \ref{lem:M2-final-slab-bound}. Let
$K$ be the collection of all pairs $(i,o)$ such that an operation
$o$ is in a final slab segment at step $i$. Then $\#(K)\in O(W_{L})$
for some linearization $L$ of $D$ by \nameref{lem:M2-core-work}
(\ref{lem:M2-core-work}), and hence there are $O\left(\frac{W_{L}}{p}\right)$
filter-full steps since at least $p$ operations are in the final
slab at each filter-full step.

\br

\uline{Filter-empty steps}

Put a counter at each $M_{2}$-call in the program DAG $D$, initialized
to zero, and on each filter-empty step increment the counter at every
pending $M_{2}$-call. Then the number of filter-empty steps is at
most the final counter-weighted span of $D$, which we shall now bound.

Take any path $C$ in $D$. Let $L'$ be the time linearization of
$D$ that puts each $M_{2}$-call along $C$ before all the other
operations that finish together with it (this is permissible because
all operations that finish together are independent in $D$). Consider
each $M_{2}$-call $X$ on $C$ that accesses item $x$. We will trace
the `journey' of $X$ from the parallel buffer in an uncut batch
$U$ of size $u$ to a cut batch $B$ to the end of $M_{2}$. Let
$r$ be the access rank of $X$ according to $L'$. For convenience
we first observe the following:
\begin{itemize}
\item Each uncut batch of size $u$ takes $O(\log p+\log u)$ steps to be
flushed from the parallel buffer, and $O\left(\log u+\frac{u}{p^{2}}\right)$
steps to be cut and added/appended to the bunches in the feed buffer.
In total this is $O\left(\log p+\log u+\frac{u}{p^{2}}\right)\wi O\left(\log p+\frac{u}{p}\right)$
steps.
\item Each cut batch has size at most $p^{2}$, and hence takes $O(\log p)$
steps to be converted from the bunch in the feed buffer, $O\left((\log p)^{2}\right)$
steps to be sorted and combined, and $O\left(\log p+2^{k}\right)$
steps in segment $S[k]$ for each $k\in[0..m-1]$. ($S[m-1]$ takes
$O(\log p)$ delay by \nameref{lem:M2-front-access-bound} (\ref{lem:M2-front-access-bound}).)
Thus each cut batch takes $O\left((\log p)^{2}\right)$ steps in the
first slab, since $m\in\log O(\log p)$. After that, filtering the
remaining batch takes $O(\log p)$ steps. In total this is $O\left((\log p)^{2}\right)$
steps.
\end{itemize}
At the start, $X$ waits in the parallel buffer for the previous uncut
batch of size $u'$ to be processed, taking $O\left(\log p+\frac{u'}{p}\right)$
steps. Next, the $M_{2}$ interface processes the current cut batch
$B'$, taking $O\left((\log p)^{2}\right)$ steps. Then some filter-full
steps elapse before $M_{2}$ is again ready, at which point $U$ is
processed, taking $O\left(\log p+\frac{u}{p}\right)$ steps. After
that, $X$ waits for some $i$ intervening cut batches, each of which
has size $b^{*}=p^{2}$ and hence takes $O\left((\log p)^{2}\right)\wi O\left(\frac{b^{*}}{p}\right)$
filter-empty steps each. (More filter-full steps may elapse in-between
the cut batches.)

Then, $X$ is passed through the segments of $M_{2}$ in some combined
operation $Y$. If $Y$ finishes in the first slab, then it takes
$O\left((\log p)^{2}\right)$ steps to finish. Otherwise, let $Z$
be $Y$ if $Y$ is segment-bound, and otherwise let $Z$ be the prior
operation on the same item $x$ that was in the final slab when $Y$
gets trapped in the filter. Then $Z$ finishes in some final slab
segment $S[k]$. Just before that $S[k]$ run, $x$ must be within
the first $r$ items of $S[m..l]$ by \nameref{lem:M2-rank-inv} (\ref{lem:M2-rank-inv}),
and $S[0..k-1]$ is at most $2p^{2}$ below capacity by \nameref{lem:M2-balance}
(\ref{lem:M2-balance}), and so $\sum_{i=0}^{m-1}2^{2^{i}}+r\ge\sum_{i=0}^{k-1}2^{2^{i}}-2p^{2}$,
which gives $2^{2^{m}}+r\ge2^{2^{k-1}}-2p^{2}$ and hence $2^{k-1}\le\log\left(r+2^{2^{m}}+2p^{2}\right)\in O(\log p+\log r)$.
Thus $Z$ takes $O(\log p+\log r)$ steps in the final slab, and the
filter entry for $x$ has $O(\log p+\log r)$ combined operations,
and so $Y$ will finish within $O(\log p+\log r)$ steps after $Z$.

Hence $Y$ takes $O\left((\log p)^{2}+\log r\right)$ steps to be
passed through the segments of $M_{2}$, after which the result for
$X$ is returned within $O(\log p)$ steps (since $Y$ comprises at
most $p^{2}$ operations). Thus $X$ takes $O\left(\log p+\frac{u}{p}+\frac{u'}{p}+(\log p)^{2}+\frac{i\cdot b^{*}}{p}+\log r\right)$
steps in total. Also, no two $M_{2}$-calls on the path $C$ can wait
for the same intervening batch. Thus over all counters at $M_{2}$-calls
on $C$, each of $u,u',i\cdot b^{*}$ will sum up to at most the total
number $N$ of $M_{2}$-calls, and hence the counter-weighted span
of $C$ is $O\left(\frac{N}{p}+d\cdot(\log p)^{2}+s_{L'}\right)$.

Therefore the final counter-weighted span of $D$ is $O\left(\frac{N}{p}+d\cdot(\log p)^{2}+s_{L^{*}}\right)$
where $L^{*}$ is the linearization of $D$ that maximizes $s_{L^{*}}$,
as desired.
\end{proof}
This concludes our analysis of $M_{2}$.

\section{Practical Schedulers}

\label{sec:work-steal}

The bounds on the effective work and span in \ref{sec:PWM1} and \ref{sec:PWM2}
apply if we use a greedy scheduler for $M_{1}$ and a weak-priority
scheduler (\ref{sub:weak-priority-sched}) for $M_{2}$. In practice,
we do not have such schedulers. But a work-stealing scheduler~\footnote{To ensure that each processor can access its own deque in $O(1)$
time, the deque is guarded by a dedicated lock (\ref{def:dedi-lock})
with $2$ keys, $1$ for the processor and $1$ for the external interface
through which other processors access the deque. The external interface
is guarded by another dedicated lock with $1$ key for each processor.} for $M_{1}$ gives the desired time bound (\ref{thm:M1-perf}) on
average, as essentially shown in \cite{blumofe1999worksteal,arora2001worksteal}.~\footnote{The results in \cite{blumofe1999worksteal} are for strict computations
in an atomic message passing model, but the proof of the time bound
carries over to the QRMW parallel pointer machine model in the same
manner as done in \cite{arora2001worksteal}.}

As for $M_{2}$, dedicating $\frac{1}{2}p$ processors to a greedy
scheduler for $Q_{1}$-nodes and the other $\frac{1}{2}p$ processors
to another greedy scheduler for $Q_{2}$-nodes gives a weak-priority
scheduler as required for $M_{2}$. Replacing each greedy scheduler
by a work-stealing scheduler should yield the desired time bound (\ref{thm:M2-perf})
on average.

\section{Conclusions}

This paper presents two parallel working-set data structures, both
nearly achieving the working-set bound in their effective work, and
the faster version having a lower overhead in its effective span by
using careful pipelining. Pipelining techniques to reduce the span
of data structure operations have been explored before~\cite{BlellochRe97}.
Our results indicate that implicit batching, especially combined with
pipelining, has promise in the design of parallel data structures.
As a future research direction, it would be interesting to see if
these ideas apply to other data structures, in particular, self-adjusting
data structures such as splay trees that provide good amortized performance
(rather than worst-case performance) and/or randomized data structures.

\renewcommand{\thesection}{\hspace{-1em}}

\section{Appendix}

\renewcommand{\thesection}{A}

Here we spell out the data structures, locking mechanisms and supporting
theorems that we have used in our paper.

\subsection{Parallel Buffer}

\label{sub:par-buffer}

We can use any parallel buffer implementation that takes $O(p+b)$
work and $O(\log p+\log b)$ span per batch of size $b$, and is such
that (regardless of the scheduler) any operation that arrives will
be included in the batch that is being flushed or in the next batch,
and it always has at most $\frac{1}{2}p+q$ ready nodes (active threads)
where $q$ is the number of operations that are currently buffered
or being flushed. Then the parallel buffer overhead is given by the
following theorem.
\begin{thm}[Parallel Buffer Cost]
\label{rem:par-buff-cost} Take any program $P$ using a batched
data structure $M$ on $p$ processors using any greedy scheduler.
Let $E$ be the actual execution DAG of $P$ using $M$. The \textbf{buffer's
effective cost} (see \ref{def:effective}) is defined as $\frac{t_{1}}{p}+t_{\infty}$,
where $t_{1}$ is the total number of buffer nodes in $E$, and $t_{\infty}$
is the maximum number of buffer nodes on any path in $E$. Then the
buffer's effective cost is $O\left(\frac{T_{1}+w_{M}}{p}+d\cdot\log p\right)$,
where $w_{M}$ is the effective work taken by $M$ and $d$ is the
maximum number of $M$-calls on any path in the program DAG $D$ for
$P$.\end{thm}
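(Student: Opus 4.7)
The plan is to bound $t_1$ and $t_\infty$ separately and then combine, since the claimed bound is just the sum $\frac{t_1}{p} + t_\infty$.

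For $t_\infty$, I would fix any path $\pi$ in the execution DAG $E$ and observe that buffer nodes along $\pi$ come in clusters, each cluster corresponding to the processing of one program-level $M$-call: a sequence that enqueues the call, waits for (and possibly drives) a flush, and finally routes the result back. Using the per-batch span bound $O(\log p + \log b)$ and the fact that per $M$-call a path enters at most one such cluster, each $M$-call along $\pi$ contributes $O(\log p)$ buffer-span to $t_\infty$, provided $\log b \in O(\log p)$ — which we may take as part of the design of the parallel buffer (batches are capped at size $\mathrm{poly}(p)$, or equivalently the $\log b$ term can be folded into the $\log p$ term by the buffer implementation used in \ref{sub:par-buffer}). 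Since a path in $D$ has at most $d$ $M$-calls, $t_\infty \in O(d \log p)$.

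For $t_1$, let $F$ denote the number of flushes the buffer ever performs and let $N_M$ denote the total number of $M$-calls. Summing $O(p + b_i)$ over all flushes gives $t_1 \in O(pF + N_M) \subseteq O(pF + T_1)$, using $N_M \le T_1$ because every $M$-call is a program node. It therefore suffices to show $pF \in O(T_1 + w_M)$. The argument I would use is an amortization: each flush is triggered by $M$ becoming ready, so $F$ is exactly the number of batches $M$ ever processes. Using the assumed structural property that the parallel buffer maintains at most $\tfrac{1}{2}p + q$ ready nodes, the other $\tfrac{1}{2}p$ processors are free to execute $P$-nodes or $M$-nodes during every step where the buffer is actively flushing; combined with the greedy scheduler, this lets us charge the $O(p)$ per-flush overhead to $\Omega(p)$ concurrent productive work (either in $P$ or in $M$) that is performed in the same time interval. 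Formally, I would define a token scheme: each flush spends $p$ tokens, and each $P$-node or $M$-node executed during a flush-active step deposits one token; by the ready-node condition at least $\tfrac{1}{2}p$ tokens are deposited per such step, and a flush cannot complete before its $\Theta(p/\log p) \cdot \log p = \Theta(p)$ worth of time has elapsed, so the tokens balance up to constants.

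The main obstacle is this amortization step — specifically, handling flushes near the end of the computation, when $P$ and $M$ may jointly have fewer than $\tfrac{1}{2}p$ ready nodes, so the ``concurrent productive work'' account is temporarily underfunded. I would resolve this by noting there can be at most $O(\log p)$ such under-funded flushes in a row before either $P$ or $M$ generates enough ready nodes, contributing only a low-order additive $O(p \log p)$ term that is absorbed into $O(T_1 + w_M + p \cdot d \log p)$ since the total DAG has depth at least $d$ when $d > 0$. Combining the two bounds gives $\frac{t_1}{p} + t_\infty \in O\!\left(\frac{T_1+w_M}{p}+d\cdot\log p\right)$, as required.
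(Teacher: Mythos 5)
There are two genuine gaps, one in each half of your argument. For $t_\infty$: you fold the $\log b$ term into $\log p$ by assuming batches are capped at size $\mathrm{poly}(p)$, but that assumption is not available — the buffer specified in \ref{sub:par-buffer} must guarantee that every arriving operation joins the batch currently being flushed or the very next one, so a batch can have size up to $N$ and cannot be capped without breaking the property that the rest of the analysis (the span proofs for $M_1$ and $M_2$) relies on. Consequently your intermediate claim $t_\infty\in O(d\log p)$ is false in general (one huge batch with $d=1$ already gives $t_\infty=\Omega(\log N)$). The repair is the paper's observation that when $b>p^2$ the $O(\log b)$ span is $O(b/p)$, and the $b/p$ terms met along any path sum to $O(N/p)\subseteq O(T_1/p)$; one then proves only $t_\infty\in O(T_1/p+d\log p)$, which is all the theorem needs.

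For $t_1$: the target of your amortization, $pF\in O(T_1+w_M)$, is simply not true. Take a chain of $N$ sequentially dependent $M$-calls with no other parallelism in $P$: every flush is a singleton batch, $F=N$, so $pF=\Theta(pN)$, while $T_1+w_M$ can be $\Theta(N\log n)\ll pN$ when $p\gg\log n$. Your patch for ``under-funded'' flushes does not rescue this: such flushes are not confined to the end of the computation and are not bounded by $O(\log p)$ in a row — in the chain example essentially \emph{every} flush is under-funded, so the shortfall is $\Theta(pd)$, not an additive $O(p\log p)$. The correct accounting (and the paper's) is to charge every time step at which fewer than $p$ nodes of $E$ are ready to the buffer's own span: a greedy scheduler then executes \emph{all} ready buffer nodes, so there are only $O(t_\infty)$ such steps, contributing $O(p\cdot t_\infty)$ buffer work; steps at which $\Omega(p)$ ready $P$-nodes or $M$-nodes execute contribute $O(T_1)$ and $O(w_M)$ respectively, and intervals whose two active batches contain $\Omega(p)$ operations have their $O(p+b+b')$ buffer work charged to those operations. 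This yields $t_1\in O(T_1+w_M+p\,t_\infty)$, and combined with the corrected span bound gives $\frac{t_1}{p}+t_\infty\in O\left(\frac{T_1+w_M}{p}+d\cdot\log p\right)$; note that the $p\cdot d\log p$ worth of buffer work in the chain example is exactly what the $d\cdot\log p$ span term of the theorem is there to pay for, so no argument that makes it ``low-order'' can be right.
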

\begin{proof}
Let $N$ be the total number of operations on $M$. Consider each
batch $B$ of $b$ operations on $M$. Let $t_{B}$ be span taken
by the buffer on $B$. If $b\le p^{2}$, then $t_{B}\in O(\log p)$.
If $b>p^{2}$, then $t_{B}\in O(\log b)\wi O\left(\frac{b}{p}\right)$.
Thus $t_{B}\in O\left(\frac{b}{p}+\log p\right)$ and hence $t_{\infty}\in O\left(\frac{N}{p}+d\cdot\log p\right)$.

Now consider the execution of $E$. At each time step, the buffer
is processing at most two consecutive batches, so we shall analyze
the buffer work done during the time interval for each pair of consecutive
batches $B$ and $B'$, where $B$ has $b$ operations and $B'$ has
$b'$ operations.

If $b+b'\ge\frac{5}{6}p$, then the buffer work done on $B$ and $B'$
is $O(b+b')$.

If $b+b'<\frac{1}{6}p$, then at least one of the following hold at
each time step in this interval:
\begin{itemize}
\item At least $\frac{1}{6}p$ ready $P$-nodes in $E$ are executed. These
steps take at most $O(T_{1})$ work over all intervals.
\item At least $\frac{1}{6}p$ ready $M$-nodes in $E$ are executed. These
steps take at most $O(w_{M})$ work over all intervals.
\item At least $\frac{2}{3}p$ ready buffer nodes in $E$ are executed,
which is impossible since $b+b'+\frac{1}{2}p<\frac{2}{3}p$.
\item Less than $p$ ready nodes in $E$ are executed. All ready buffer
nodes in $E$ are executed (by the greedy scheduling), so over all
intervals there are $O(t_{\infty})$ such steps, taking $O(p\cdot t_{\infty})$
work.
\end{itemize}
Therefore $\frac{t_{1}}{p}\in O(\frac{T_{1}+w_{M}}{p}+t_{\infty})$,
and hence the buffer's effective cost is $\frac{t_{1}}{p}+t_{\infty}\in O\left(\frac{T_{1}+w_{M}}{p}+d\cdot\log(p)\right)$
since $N\le T_{1}$.
\end{proof}
The parallel buffer for each data structure can be implemented in
the QRMW pointer machine model using a static BBT (balanced binary
tree), with a sub-buffer (and its length) at each leaf node, with
one sub-buffer for each processor, and a flag at each internal node.
On every call to the data structure, the processor terminates the
current thread and puts its continuation together with the call parameters
into the sub-buffer for that processor. Then the processor walks up
the BBT from leaf to root, test-and-setting each flag along the way,
terminating if it was already set. On reaching the root, the processor
activates the data structure's interface, which when ready will then
flush the buffer, via a parallel recursion on the BBT to atomically
swap out all the sub-buffers and then combine their contents in parallel,
which it then passes to the data structure itself as a batch. The
recursion simultaneously constructs a new static BBT for the new sub-buffers
that have been swapped in, which the processors use for subsequent
calls to that data structure.

\begin{figure}[h]
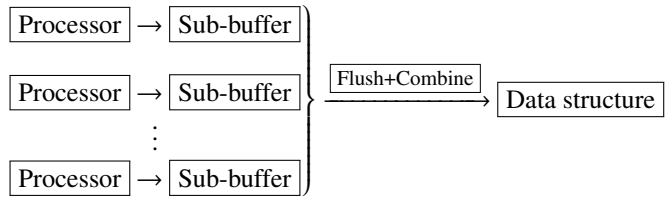

\noindent \begin{centering}
$\left.\begin{matrix}\tbox{Processor}\to\tbox{Sub-buffer}\\
\\
\tbox{Processor}\to\tbox{Sub-buffer}\\
\vdots\\
\tbox{Processor}\to\tbox{Sub-buffer}
\end{matrix}\right\} \xrightarrow{\stbox{Flush+Combine}}\tbox{Data\ structure}$
\par\end{centering}

\caption{Data flow in the parallel buffer}
\end{figure}

Test-and-setting each flag in the BBT takes constant time, because
at most two processors ever access it. The parallel buffer takes $O(p)$
work and $O(\log(p))$ span to initialize. Each data structure call
takes $O(p)$ work and $O(\log(p))$ span for a processor to reach
the root, because the flags ensure that only $O(1)$ work is done
per node in traversing the BBT, and flushing the buffer on a batch
of size $b$ takes $O(p+b)$ work and $O(\log(p)+\log(b))$ span.
Thus the total cost for the batch is $O(p+b)$ work and $O(\log(p)+\log(b))$
span.

It remains to implement the sub-buffer to support $O(1)$ worst-case
time insert but $O(\log(b))$ flushing where $b$ is the current size.
This can be done by filling in a complete binary tree in each level
from left to right, assisted by maintaining a linked list through
each level and a pointer to the leftmost leaf node, which we can construct
from the linked list for the parent level as we fill in the current
level. Flushing is trivial.

\subsection{Batched Parallel 2-3 Tree}

\label{sub:par-2-3-tree}

The batched parallel 2-3 tree that we use in our parallel working-set
maps can be implemented by adapting the parallel 2-3 dictionary described
in \cite{paul1983paradict} to the QRMW pointer machine model. A batched
parallel 2-3 tree $M$ supports the following batched operations (where
$n$ is the number of items in $M$ before the batch):
\begin{enumerate}
\item \textbf{Normal batch operation:} Given any item-sorted input batch
$B$ of $b$ operations on distinct items, $M$ performs all the operations
in $B$, and returns an output batch $B'$ containing the results
for the operations stored in the same order as in $B$, all within
$\Theta(b\cdot\log n)$ work and $O(\log b+\log n)$ span.

\begin{itemize}
\item The \textbf{result} for each operation is a direct pointer\textbf{
}to the item in $M$ (or $null$ if deleted).
\item A \textbf{direct pointer} is an object from which one can access the
item itself (and read/modify any attached value). 
\end{itemize}

\item \textbf{Reverse-indexing operation:} Given any (unsorted) input batch
$B$ of $b$ direct pointers to distinct items in $M$, $M$ returns
an output batch $B'$ that is an item-sorted batch of the same items
pointed to by the direct pointers in $B$, all within $\Theta(b\cdot\log n)$
work and $O(\log b+\log n)$ span. 
\end{enumerate}

\subsection{Sorting Theorems}

In the comparison model the items in the search problem can come from
any arbitrary set $S$ that is linearly ordered by a given comparison
function. We shall also assume that $S$ has at least two items. As
is standard, let $S^{n}$ be the class of all length-$n$ sequences
from $S$. Search structures can often be adapted to implement a sorting
algorithm~\footnote{A sorting algorithm on a class $C$ of sequences is a procedure that
given any input from $C$ will output a list of pointers that corresponds
to the input in sorted order.}, in which case any lower bound on complexity of sorting sequences
from $S^{n}$ typically implies a lower bound on the costs of the
search structure with $n$ items. For the proofs of \nameref{thm:M1-work}
and \nameref{thm:M2-work} we need a crucial lemma that the entropy
bound is a lower bound for (comparison-based) sorting up to a constant
factor, even in the average case, as precisely stated below.
\begin{defn}[Normalized Frequencies]
 We say that $q_{1..u}$ are \textbf{normalized frequencies} if $q_{1..u}\in\rr^{+}$
and $\sum_{i=1}^{u}q_{i}=1$. Let $\f{Seq}(n,q)$ be the class of
sequences in $S^{n}$ with normalized item frequencies $q$.\end{defn}
\begin{thm}[Sorting Entropy Bound]
\label{thm:sort-entr-bound} Take any normalized frequencies $q_{1..u}$,
and let $C=\f{Seq}(n,q)$. Then any sorting algorithm $A$ on $S^{n}$
requires at least $\max\left(\frac{n-1}{5}\cdot H,n-1\right)$ comparisons
on average for input sequences from $C$, where $H=\sum_{i=1}^{u}\left(q_{i}\cdot\ln\left(\frac{1}{q_{i}}\right)\right)$
is the entropy of $C$ per element. Asymptotically, $A$ requires
$\Omega(n\cdot H+n)$ comparisons on average for input sequences from
$C$.~\footnote{In fact no sorting algorithm takes $O(n\cdot H)$ steps on every sequence
from $C$, since we require the implicit constant to not depend on
the access distribution.}\end{thm}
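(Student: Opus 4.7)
My plan is to apply the standard information-theoretic decision-tree lower bound. A comparison-based algorithm $A$ on $S^n$ corresponds to a decision tree $T$ whose internal nodes are ternary comparisons (outputting one of $<$, $=$, or $>$) and whose leaves are output permutations; on input $s$, $A$ makes $d_s$ comparisons, where $d_s$ is the depth of the leaf $A$ reaches on $s$. The quantity to bound from below is $\frac{1}{|C|}\sum_{s \in C} d_s$, and I will prove the two lower bounds $n-1$ and $\frac{n-1}{5} H$ separately before taking the maximum.

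For the $n-1$ bound, fix any $s \in S^n$ reaching a leaf $\ell$, and let $G_s$ be the graph on the $n$ index-positions whose edges are the comparisons made along the root-to-$\ell$ path. If $G_s$ were disconnected, partitioning the positions into nonempty $A$ and $B$, I could exhibit two inputs in $S^n$ that agree with $s$ on every comparison outcome along this path but whose sorted orders force different output permutations (for example, by shifting all values at positions in $B$ by a common additive offset that is large in one case and small in the other). The single permutation stored at $\ell$ cannot sort both, contradicting correctness on $S^n$. Hence $G_s$ is connected, so $d_s \geq n-1$ for every $s$, and the average over $C$ is also at least $n-1$.

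For the entropy bound, I first claim that distinct sequences $s \neq s'$ in $C$ reach distinct leaves of $T$. Indeed, if they shared a leaf with output permutation $\pi$, then $\pi$ would sort both, so $s_{\pi(i)} = v_i = s'_{\pi(i)}$ for every $i$, where $v_1 \leq \cdots \leq v_n$ is the multiset common to all members of $C$; as $\pi$ is a bijection this forces $s = s'$. Hence $T$ has at least $|C|$ distinguished leaves reached by $C$, so Kraft's inequality gives $\sum_{s \in C} 3^{-d_s} \leq 1$, and Jensen's inequality on the uniform distribution over $C$ yields $\frac{1}{|C|}\sum_{s \in C} d_s \geq \log_3 |C|$.

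It remains to show $\log_3 |C| \geq \frac{n-1}{5} H$. Unless every $n q_i$ is a positive integer summing to $n$, $C$ is empty and the inequality is vacuous; otherwise $|C| = \binom{n}{n q_1, \ldots, n q_u}$. Combining the Stirling bounds $m \ln m - m \leq \ln m! \leq m \ln m - m + \ln m + O(1)$ with the concavity estimate $\sum_i \ln(n q_i) \leq u \ln(n/u) \leq n/e$ yields $\ln |C| \geq n H - C_0 n$ for a small absolute constant $C_0$ (around $1.4$). Dividing by $\ln 3$ and comparing, the main term $\frac{n H}{\ln 3} \approx 0.91\, n H$ dominates $\frac{n-1}{5} H \approx 0.2\, n H$ with enough headroom to absorb the $O(n)$ correction whenever $H$ exceeds a small absolute constant (around $2$). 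The main obstacle is the small-$H$ regime, where the $O(n)$ Stirling correction can overwhelm $n H$; I handle that by splitting into cases: for $H \leq 5$ the separate bound $n-1 \geq \frac{n-1}{5} H$ already suffices, and for $H \geq 5$ the entropy bound controls. Taking the maximum of the two lower bounds yields the claim, and the asymptotic $\Omega(n H + n)$ statement follows since $\max\bigl(\tfrac{n-1}{5} H,\, n-1\bigr) \geq \tfrac{1}{10}(n H + n)$ for $n \geq 2$.
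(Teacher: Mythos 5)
Your proposal follows essentially the same route as the paper's proof: a ternary decision-tree argument showing the average number of comparisons is at least $\log_{3}|C|$ (the paper via a Jensen induction on the tree, you via injectivity of $C$ into the leaves plus Kraft and Jensen), the same Stirling-type estimate $\ln|C|\ge n\cdot H-O(n)$ with the concavity/smoothing bound on $\sum_i\ln(n\cdot q_i)$, and the same case split using the $n-1$ bound when $H$ is small. The only caveat, cosmetic relative to the paper's own treatment of the $n-1$ term, is that your ``additive offset'' step should be phrased as re-embedding one component's values within the ordered set $S$ (the comparison model provides no arithmetic), i.e., as an order-preserving relabeling argument, which then matches the paper's level of rigor.
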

\begin{proof}
Take any correct sorting algorithm $A$ on $S^{n}$. If $u=1$ or
$n=1$ then $\frac{n-1}{5}\cdot H=0$ and so the claim holds. Therefore
we can assume that $u>1$ and $n>1$. If $H\le5$ then $\frac{n-1}{5}\cdot H\le n-1$
and so the claim holds, since $A$ needs at least $n-1$ comparisons,
otherwise if every comparison returns equality then the graph with
comparisons as edges has at least $2$ connected components and it
cannot be determined whether their items are equal or not. Therefore
we can assume that $H>5$.

Note that for every $x\in\zz^{+}$ we have $\int_{1}^{x}\ln(x)\le\ln(x!)\le\frac{1}{2}\ln(1)+\int_{1}^{x}\ln(x)+\frac{1}{2}\ln(x)$
since $\ln$ is concave, and hence $x\cdot\ln(x)-x+1\le\ln(x!)\le x\cdot\ln(x)-x+1+\frac{1}{2}\ln(x)$.
Thus $\ln(\#(C))=\ln\left(\frac{n!}{\prod_{i=1}^{u}(n\cdot q_{i})!}\right)$
$\ge n\cdot\ln(n)-n-\sum_{i=1}^{u}\left(n\cdot q_{i}\cdot\ln(n\cdot q_{i})-n\cdot q_{i}+1+\frac{1}{2}\ln(n\cdot q_{i})\right)$
$=n\cdot H-u-\frac{1}{2}\sum_{i=1}^{u}\ln(n\cdot q_{i})$ $\ge n\cdot H-n-\frac{u}{2}\cdot\ln(n)$.
Also we have $\frac{n}{2}\cdot H\ge\frac{n}{2}\cdot(u-1)\cdot\left(\frac{1}{n}\cdot\ln(n)\right)=\frac{u-1}{2}\cdot\ln(n)$
by smoothing, and hence $\ln(\#(C))\ge\frac{n}{2}\cdot H-n-\frac{1}{2}\cdot\ln(n)$.
And since $\frac{\ln(n)}{n}\le\frac{1}{e}$, we get $\ln(\#(C))\ge n\cdot\left(\frac{1}{2}H-1-\frac{1}{2e}\right)>n\cdot\frac{\ln(3)}{5}\cdot H$,
and hence $\log_{3}(\#(C))\ge\frac{n}{5}\cdot H>\frac{n-1}{5}\cdot H$.

Let $T$ be the ternary tree corresponding to all possible executions
of $A$ on sequences from $C$, where each node corresponds to a comparison
whose outcome determines the subtree that the execution will proceed
to, and each leaf node corresponds to a terminal state. For each node
$v$ in $T$, let $N(v)$ be its child nodes, and let $c(v)$ be the
number of leaf nodes reachable from $v$, and let $f(v)$ be the number
of comparisons needed from that point on average. Then $f(v)=0$ if
$v$ is a leaf node, and $f(v)=1+\frac{1}{c(v)}\cdot\sum_{w\in N(v)}(c(w)\cdot f(w))$
otherwise. If $f(w)\ge\log_{3}(c(w))$ for every $w\in N(v)$, then
$f(v)\ge1+\log_{3}\left(\frac{1}{3}\cdot\sum_{w\in N(v)}c(w)\right)$
by Jensen's inequality, and hence $f(v)\ge\log_{3}(c(v))$. Thus by
structural induction, $f(r)\ge\log_{3}(c(r))$ where $r$ is the root
node of $T$, and hence $A$ needs at least $\log_{3}(\#(C))$ comparisons
on average over all input sequences from $C$.

Therefore $A$ needs $\max\left(\frac{n-1}{5}\cdot H,n-1\right)\in\Omega(n\cdot H+n)$
comparisons on average over all input sequences from $C$.
\end{proof}
It also turns out that there is a sequential sorting algorithm $ESort$
on $S^{n}$ that achieves the entropy bound (up to a linear discrepancy),
as stated precisely in the subsequent definition and theorem.
\begin{defn}[Sequential Entropy Sort]
\label{def:seq-esort} Let $ESort$ be the algorithm that does the
following on an input sequence $I$:
\begin{block}
\item Let $D$ be a dictionary that stores items tagged with a list of pointers,
implemented using Iacono's working-set structure. Iterate through
each item in $I$, and insert it into $D$ (with an empty list) if
it is not already there, and then append its location in $I$ to its
list in $D$. For each segment in $D$, construct the sorted list
of the (tagged) items in that segment. Merge these sorted lists in
order of increasing capacity to obtain the sorted list $L$ of the
items in $I$, in which the tag of each item identifies all its duplicates
in $I$. Expand each item in $L$ to its tag, and output the resulting
list of pointers.
\end{block}
\end{defn}
\begin{thm}[$ESort$ Performance]
\label{thm:seq-esort-perf} Take any normalized frequencies $q_{1..u}$,
and let $C=\f{Seq}(n,q)$. Then $ESort$ takes $\Theta(W)\wi O(n\cdot H+n)$
steps on every sequence $I\in C$, where $W$ is the insert working-set
bound (\ref{def:ins-work-set}) for $I$ and $H=\sum_{i=1}^{u}\left(q_{i}\cdot\ln\left(\frac{1}{q_{i}}\right)\right)$.\end{thm}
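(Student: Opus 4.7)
My plan is to prove the two halves of the theorem separately. First, I will show that $ESort$ runs in $\Theta(W)$ steps by analyzing its three phases — iteration through $I$, segment sorting, and merging — and checking that every non-dictionary cost is absorbed into $W$. Second, I will bound $W$ itself by $O(n\cdot H+n)$ via a per-item concavity argument.

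For the first part, the iteration phase performs, on each element of $I$, exactly the search-then-insert-if-absent access pattern that defines the \nameref{def:ins-work-set} (\ref{def:ins-work-set}), so by the working-set property of Iacono's structure its total cost is $\Theta(W)$; each tag append is $O(1)$ and contributes only $O(n)\wi O(W)$. After the iteration, the dictionary holds $u$ distinct items distributed across segments $S[0..l]$ with $|S[k]|\le 2^{2^k}$. Sorting each segment in $O(|S[k]|\log|S[k]|)$ time adds $O(u\log u)$ total. The crucial observation is that the $u$ first-occurrence accesses were insertions whose unsuccessful searches cost $\sum_{j=1}^{u}(\log j+1)=\Omega(u\log u)$ according to $W$, absorbing the sorting cost. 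Merging the sorted segments in order of increasing size costs $O(u)\wi O(W)$ by the doubly-exponential growth of segment sizes, and expanding each tag at the end costs $O(n)\wi O(W)$. The matching lower bound is immediate: the iteration alone already costs $\Omega(W)$ since Iacono's structure achieves the working-set bound.

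For the second part, fix any item $x$ with $m=q_{x}n$ occurrences at positions $p_1<p_2<\cdots<p_m$ in $I$, and set $p_0=0$ and $g_j=p_j-p_{j-1}$. The first occurrence is an unsuccessful search whose cost is $O(\log(\text{current size})+1)\wi O(\log p_1+1)=O(\log g_1+1)$, since at most $p_1-1$ distinct items can have been inserted before position $p_1$. For $j\ge 2$, the access rank of the $j$-th occurrence of $x$ is at most $g_j$, since at most $g_j$ distinct items occupy the positions $(p_{j-1},p_j]$ (including $p_j$ itself). Hence the total contribution of $x$ to $W$ is
\[
O\!\left(m+\sum_{j=1}^{m}\log g_j\right)\wi O\!\left(m+m\cdot\log(n/m)\right)
\]
by Jensen's inequality applied to the concave function $\log$, using $\sum_j g_j=p_m\le n$. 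Summing over all $u$ distinct items with frequencies $q_i$ yields $W\wi O(n+n\cdot\sum_{i}q_{i}\log(1/q_{i}))=O(n\cdot H+n)$, as required.

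The main subtlety is in the first part: one must notice that the $O(u\log u)$ cost of sorting Iacono's segments is not already covered by the working-set bound on the dictionary operations themselves, and then realize that it is precisely absorbed by the $\Omega(u\log u)$ contribution from the $u$ first-occurrence insertions in $W$. The second part is then a clean Jensen's-inequality calculation once the access rank of each subsequent occurrence is correctly identified as the gap between consecutive occurrences.
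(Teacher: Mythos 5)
Your proposal is correct and follows essentially the same route as the paper: show that the dictionary phase (plus the linear-time merge and tag-expansion phases) runs in $\Theta(W)$, and then bound $W$ by $O(n\cdot H+n)$ via Jensen's inequality applied per item to the access ranks of its occurrences. The only differences are cosmetic: the paper extracts each segment's sorted list in linear time (the segments are already key-sorted balanced trees), so it never needs your step of paying and absorbing an extra $O(u\log u)$, and it bounds the first occurrences separately by $O(u\log u+u)\subseteq O(n\cdot H+n)$ via a smoothing inequality rather than folding them into the gap/Jensen computation as you do.
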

\begin{proof}
Note that $D$ satisfies the working-set property, and hence $ESort$
clearly takes $\Theta(W+n)=\Theta(W)$ steps on $I$. Let $c_{i}=n\cdot q_{i}$
be the number of occurrences of item $i$ in the sequence, and $r_{i}(j)$
be the access rank of the $j$-th occurrence of item $i$. Then the
first occurrence of each item takes $O(\log u+1)$ steps since its
access rank is at most $u$. By smoothing we get $n\cdot H\ge n\cdot(u-1)\cdot\left(\frac{1}{n}\cdot\ln n\right)\ge(u-1)\cdot\ln u$,
and hence all the first accesses to the items takes $O(u\cdot\log u+u)\wi O(n\cdot H+\ln u+u)\wi O(n\cdot H+n)$
steps. Also, for each item $i$ that occurs more than once, the subsequent
accesses to item $i$ takes $O\left(\sum_{j=2}^{c_{i}}\left(\log r_{i}(j)+1\right)\right)<O\left(\sum_{j=2}^{c_{i}}\log r_{i}(j)+c_{i}\right)$
steps, and Jensen's inequality gives $\sum_{j=2}^{c_{i}}\log r_{i}(j)\le(c_{i}-1)\cdot\log\left(\frac{1}{c_{i}-1}\sum_{j=2}^{c_{i}}r_{i}(j)\right)\le c_{i}\cdot\log\left(\frac{2}{c_{i}}\cdot n\right)=c_{i}\cdot\log\frac{1}{q_{i}}+c_{i}$,
and hence all the subsequent accesses to the items takes $O\left(\sum_{i=1}^{u}\left(n\cdot q_{i}\cdot\log\frac{1}{q_{i}}+c_{i}\right)\right)=O(n\cdot H+n)$
steps. Thus $W\in O(n\cdot H+n)$. Also, merging the segments into
$L$ takes $O(u)$ steps because each segment is at least twice the
size of the preceding one (except possibly the last). Finally, expanding
each item in $L$ to its tag takes $O(n)$ steps. Therefore $ESort$
takes $\Theta(W)\wi O(n\cdot H+n)$ steps.\end{proof}
\begin{thm}[Worse-case Working-set Bound]
\label{thm:work-set-worst} Take any normalized frequencies $q_{1..u}$,
and let $C=\f{Seq}(n,q)$. Then the working-set bound for inserting
some sequence $I\in C$ is $\Omega(n\cdot H+n)$ where $H=\sum_{i=1}^{u}\left(q_{i}\cdot\ln\frac{1}{q_{i}}\right)$.\end{thm}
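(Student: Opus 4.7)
The plan is to combine the two preceding theorems about $ESort$ in a straightforward averaging argument. Specifically, I would apply the \nameref{thm:sort-entr-bound} (\ref{thm:sort-entr-bound}) to the algorithm $ESort$ (\ref{def:seq-esort}), which is a correct comparison-based sorting algorithm on $S^{n}$. That lower bound implies $ESort$ uses $\Omega(n \cdot H + n)$ comparisons \emph{on average} over sequences in $C$, and hence there must exist at least one sequence $I \in C$ on which $ESort$ performs $\Omega(n \cdot H + n)$ comparisons (and therefore takes $\Omega(n \cdot H + n)$ steps).

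Next I would invoke \nameref{thm:seq-esort-perf} (\ref{thm:seq-esort-perf}), which states that on \emph{every} input $I \in C$ the number of steps taken by $ESort$ is $\Theta(W)$, where $W$ is the insert working-set bound for $I$. Applying this to the particular $I$ identified above yields $W \in \Omega(n \cdot H + n)$ for that sequence, which is exactly the claim. The chain is: sorting lower bound on average over $C$ $\Rightarrow$ existence of a hard sequence $I \in C$ $\Rightarrow$ hardness of $ESort$ on $I$ $\Rightarrow$ hardness of the insert working-set bound for $I$ via the tight $\Theta(W)$ characterization.

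There is no substantial obstacle here; the only thing to be careful about is the direction of the inequality and that the averaging is being done over the right class. The \nameref{thm:sort-entr-bound} gives a lower bound on the average number of comparisons over $C$, so a pigeonhole/maximum argument gives one concrete sequence achieving the lower bound, which is precisely the existential statement the theorem asserts. The tight equivalence $\text{steps}(ESort,I) = \Theta(W(I))$ from \ref{thm:seq-esort-perf} transfers the lower bound from comparison count to working-set cost without loss.
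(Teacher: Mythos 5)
Your proposal is correct and matches the paper's own proof essentially verbatim: both use the \nameref{thm:sort-entr-bound} to exhibit a sequence $I\in C$ on which $ESort$ must make $\Omega(n\cdot H+n)$ comparisons, and then transfer this lower bound to the insert working-set bound via the tight $\Theta(W)$ characterization in \nameref{thm:seq-esort-perf}. No gaps to report.
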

\begin{proof}
By the \nameref{thm:sort-entr-bound} let $I\in C$ such that $ESort(I)$
takes $\Omega(n\cdot H+n)$ comparisons, and let $W$ be the working-set
bound for inserting $I$. Since $ESort(I)$ takes $\Theta(W)$ steps
(\ref{thm:seq-esort-perf}), we have $W\in\Omega(n\cdot H+n)$.
\end{proof}
Finally we give a parallel sorting algorithm $PESort$ on $S^{n}$
that achieves the entropy bound for work but yet takes only $O\left((\log n)^{2}\right)$
span, which we need in our parallel working-set map. Note that these
algorithms work in the QRMW pointer machine model, in which input
and output lists are stored in leaf-based BBTs (balanced binary trees
with all items at the leaves).
\begin{defn}[Parallel Entropy Sort]
\label{def:par-esort} Let $PESort$ be the following parallel variant
of Quicksort:
\begin{block}
\item Use the \nameref{lem:par-pivot-alg} to pick a pivot from the two
middle quartiles of the input list. Then partition the list around
the pivot into a lower part and a middle part (equal to the pivot)
and an upper part (parallelized via the standard prefix-sum technique).
Then sort the lower and upper parts recursively. Finally concatenate
the three parts.
\end{block}
\end{defn}
\begin{thm}[$PESort$ Performance]
\label{thm:par-esort-perf} Take any normalized frequencies $q_{1..u}$,
and let $C=\f{Seq}(n,q)$. Then $PESort$ sorts every sequence from
$C$, taking $O(n\cdot H+n)$ work and $O\left((\log n)^{2}\right)$
span, where $H=\sum_{i=1}^{u}\left(q_{i}\cdot\ln\frac{1}{q_{i}}\right)$.\end{thm}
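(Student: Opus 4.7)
The plan is to handle the span and work bounds separately. For the \emph{span bound}, I would observe that the parallel pivot algorithm (\nameref{lem:par-pivot-alg}) returns a pivot from the middle two quartiles, so each recursive call works on a subproblem of size at most $\tfrac{3}{4}n$. Combined with $O(\log n)$-span pivot selection, $O(\log n)$-span partitioning via the standard parallel prefix-sum technique, and $O(\log n)$-span concatenation of the three parts (stored in leaf-based BBTs), this gives the recurrence $T_\infty(n) \le T_\infty(\tfrac{3}{4}n) + O(\log n)$, which solves to $T_\infty(n) = O((\log n)^2)$.

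For the \emph{work bound}, the crucial observation is that all copies of the same item value move together through the recursion: no matter what pivot is chosen, every element equal to the pivot goes to the middle part (and is removed from further recursion), while elements sharing any other common value all go to the same side (either all to the lower or all to the upper part). So any subproblem at level $k$ that still contains a type-$i$ copy contains all $n q_i$ of them. Since each level shrinks the containing subproblem's size by a factor of at least $\tfrac{3}{4}$, once $(3/4)^k n \le \tfrac{4}{3} \cdot n q_i$, i.e., once $k$ exceeds $\log_{4/3}(1/q_i)$ by a constant, the type-$i$ copies fill at least $\tfrac{3}{4}$ of the subproblem. Their contiguous block (in sorted order) must then cover the entire rank range $[s/4+1,\,3s/4]$, so any pivot drawn from the middle two quartiles equals the type-$i$ value, absorbing all type-$i$ copies at that step. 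Thus each type-$i$ element survives at most $O(\log(1/q_i) + 1)$ levels of recursion.

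Since at each level the pivot selection, partitioning, and concatenation together cost $O(s)$ work per subproblem of size $s$ (i.e., $O(1)$ per alive element), the work bound follows by summing per-element survival times:
\[
\sum_{i=1}^{u} n q_i \cdot O\!\left(\log \tfrac{1}{q_i} + 1\right) \,=\, O(n H + n).
\]

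The main obstacle is rigorously verifying the absorption step. The careful argument is that when a type occupying $c \ge \tfrac{3}{4} s$ slots is present in a subproblem of size $s$, the at most $s/4$ off-type elements split into $a$ before and $s - a - c$ after the type-$i$ block, forcing $a \le s/4$ and $a + c \ge 3s/4$, so the block $[a+1,\,a+c]$ fully covers $[s/4+1,\,3s/4]$ regardless of its position. Every pivot from the middle two quartiles is therefore forced to be of the dominant type. The ``same-value elements stay together'' invariant, though conceptually obvious, is what makes the entropy savings possible; without it, type-$i$ copies could split across many subproblems and the absorption argument would collapse.
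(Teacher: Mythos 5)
Your proposal is correct and follows essentially the same route as the paper: the same span recurrence $T_\infty(n)\le T_\infty(\tfrac{3}{4}n)+O(\log n)$, and the same work argument based on equal items travelling together while the part containing them shrinks by a factor of $\tfrac{3}{4}$ per stage, giving each copy of an item of frequency $q_i$ an $O(\log\frac{1}{q_i}+1)$ survival depth and hence $O(nH+n)$ total work. The only divergence is your absorption step: the paper obtains the survival bound more directly by noting that the part containing an item of multiplicity $r=n\cdot q_i$ always has size at least $r$, so it cannot shrink by $\tfrac{3}{4}$ for more than $\log_{4/3}(n/r)$ stages; your dominant-type-forces-the-pivot lemma is correct but not needed for the bound.
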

\begin{proof}
Consider any item $x$ with frequency $r$ in the input sequence to
$PESort$. At each pivoting stage, $x$ will end up in either the
middle part with size exactly $r$, or in a part with size at most
$\frac{3}{4}$ of the list size, and hence traverses $O\left(\log\frac{k}{r}\right)$
stages in the recursion. Therefore the partitioning steps and recursive
calls take in total $O\left(\sum_{i=1}^{u}\left(n\cdot q_{i}\cdot\log\frac{n}{n\cdot q_{i}}\right)\right)=O(n\cdot H)$
work. The terminal stages take in total $O(n)$ work. Each concatenation
takes $O(1)$ work because leaf-based BBTs with heights differing
by $O(1)$ can be concatenated in $O(1)$ work. Also, each stage takes
$O(\log n)$ span (\ref{lem:par-pivot-alg}), and the depth of the
recursion is $O(\log n)$, so the total span is $O\left((\log n)^{2}\right)$.\end{proof}
\begin{lem}[Parallel Pivot Algorithm]
\label{lem:par-pivot-alg} Let $PPivot$ be the following parallel
algorithm:
\begin{block}
\item Partition the input list of size $k$ into blocks $B[1..c]$ of size
$\log k$ except perhaps the last block. Then for each $i$ in $[1..c]$
in parallel, find the median $m[i]$ of $B[i]$ using the sequential
linear-time median finding algorithm. Then use an optimal parallel
sorting algorithm (such as adapted from \cite{goodrich1996parallelsort,atallah1989cascading})
to sort $m[1..c]$ and output their median. (If there are two medians
choose either one.)
\end{block}
Then $PPivot$ outputs an item that is in the two middle quartiles
of any input list of size $k$, taking $O(k)$ work and $O(\log(k))$
span.\end{lem}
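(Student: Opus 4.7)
The plan is to verify separately the three claims: $O(k)$ work, $O(\log k)$ span, and that the output lies in the second or third quartile of the input.

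First I would analyze work. There are $c = \lceil k/\log k \rceil$ blocks, each of size at most $\log k$, and the sequential linear-time median-finding algorithm runs in $O(\log k)$ steps on a block of size $\log k$, so the block-median phase costs $O(c \cdot \log k) = O(k)$ work. The parallel sort is invoked on the list $m[1..c]$; since the cited optimal parallel sort runs in $O(c \log c)$ work on $c$ items and $c \log c = (k/\log k)\cdot\log(k/\log k) \le k$, this phase also costs $O(k)$ work. Partitioning the input into blocks and returning the median of $m[1..c]$ (its middle leaf in the output BBT) take $O(k)$ and $O(1)$ work respectively. Summing gives $O(k)$.

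Next I would handle span. The block-median phase runs all $c$ block-medians in parallel, each contributing $O(\log k)$ span, so this phase has span $O(\log k)$. The optimal parallel sort sorts $c$ items in $O(\log c) = O(\log k)$ span. Extracting the middle element is $O(\log c)$ span. So the total span is $O(\log k)$.

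Finally I would argue the quartile guarantee. Let $p$ be the returned element, i.e., the median of $m[1..c]$. At least $\lceil c/2\rceil$ of the block medians $m[i]$ satisfy $m[i] \le p$, and for each such block $B[i]$ of full size $\log k$ at least $\lceil \log(k)/2\rceil$ of its items are $\le m[i] \le p$. Hence at least $(c/2)(\log(k)/2) \ge k/4$ items of the input are $\le p$; the symmetric argument shows at least $k/4$ items are $\ge p$. Thus $p$ lies in the two middle quartiles of the input, as claimed.

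The only mild technical nuisance I anticipate is bookkeeping for the last (possibly short) block and for small $k$: when $k$ is below a small absolute constant the $\log k$ block size and the $k/4$ thresholds degenerate, but this is easily handled by noting that any element of a length-$O(1)$ list is trivially in its two middle quartiles up to the absorbed constants, and by taking floors/ceilings in the obvious way so that the $c \log k / 4 \ge k/4$ step still goes through (or by simply invoking the bound for $k$ above some constant and handling smaller $k$ directly). No other step appears to be a genuine obstacle; the main content is just packaging the standard median-of-medians selection idea into the parallel primitives provided by the cited optimal parallel sort.
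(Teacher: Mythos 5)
Your proposal is correct and follows essentially the same route as the paper: the same work/span accounting (block medians in parallel at $O(\log k)$ each, then an optimal parallel sort on $c\in O(k/\log k)$ medians costing $O(c\log c)=O(k)$ work and $O(\log c)$ span), and the same median-of-medians counting for the quartile guarantee, which you merely spell out more explicitly (including the rounding/short-block bookkeeping that the paper leaves implicit).
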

\begin{proof}
Let $x$ be the output item. Then at least half of the blocks have
median at most $x$ and at least half of the blocks have median at
least $x$, and hence $x$ is not in the first or last quartile of
the input list. The partitioning takes $O(k)$ work and $O(\log k)$
span. Finding the median of each block takes $O(\log k)$ work and
span, and so constructing $m[1..c]$ takes $O(k)$ work and $O(\log k)$
span. Finally, $c\in O\left(\frac{k}{\log k}\right)$ and hence sorting
$m[1..c]$ takes $O(c\cdot\log c)=O(k)$ work and $O(\log c)\wi O(\log k)$
span.\end{proof}
\begin{rem*}
A much easier alternative to the \nameref{lem:par-pivot-alg} is to
repeatedly pick the pivot uniformly randomly until it falls into the
middle quartiles of the input, which succeeds in $O(1)$ expected
attempts. $M_{1}$ and $M_{2}$ will still have the same performance
bounds on average (\ref{thm:M1-perf} and \ref{thm:M2-perf}), because
$PESort$ is only used to sort each cut batch, and so will not affect
the subsequent work done on the batch. Hence this should make no difference
when using work-stealing schedulers in practice (\ref{sec:work-steal}).
\end{rem*}

\subsection{Locking Mechanisms\label{sub:locking}}

In this section we give pseudo-code implementations of the various
locking mechanisms used in our data structures, that are supported
in the QRMW pointer machine model.

The \textbf{non-blocking lock} is trivially implemented using the
test-and-set operation as shown in TryLock/Unlock below.
\begin{defn}[Non-blocking Lock]
~
\begin{block}
\item \textbf{TryLock( Bool $x$ ):}

\begin{block}
\item Return $\neg\f{TestAndSet}(x)$.
\end{block}
\item \textbf{Unlock( Bool $x$ ):}

\begin{block}
\item Set $x:=false$.
\end{block}
\end{block}
\end{defn}
We can use non-blocking locks to implement the \textbf{activation}
interface for a process $P$, through which a process can start $P$
if it is not running and condition $C$ is true. Note that any process
that makes $C$ become true must activate $P$. If $P$ has to keep
running as long as $C$ is true, then $P$ must reactivate itself.
\begin{defn}[Activation Interface]
\label{def:local-activation}~($P$ is the process to be guarded
by the interface.)
\begin{block}
\item Private Process $P$.\quad{}// Returns $true$ iff it is to be reactivated.
\item Private Process $C$.\quad{}// Returns $true$ iff $P$ is ready
to run.
\item Private Bool $active:=false$.
\item \textbf{Public Activate():}

\begin{block}
\item If $\f{TryLock}(active)$:

\begin{block}
\item Create Bool $reactivate:=false$.
\item If $C()$, then $reactivate:=P()$.
\item $\f{Unlock}(active)$.
\item If $reactivate$, then Activate().
\end{block}
\end{block}
\end{block}
\end{defn}
If $P$ is only activated by $O(1)$ processes at any time, and $C$
runs in $O(1)$ steps, then each activation call completes within
$O(1)$ steps.

The \textbf{dedicated lock} with keys $[1..k]$, where threads must
use distinct keys to acquire it, can be implemented using the fetch-and-add
operation as shown below.
\begin{defn}[Dedicated Lock]
\label{def:dedi-lock}~($k$ is a positive integer.)
\begin{block}
\item Private Int $count:=0$.
\item Private Int $l:=0$.
\item Private Array $q[1..k]$ initialized with $null$.
\item \textbf{Public Acquire( Int $i$ ):}

\begin{block}
\item If $\f{FetchAndAdd}(count,1)=0$:

\begin{block}
\item Set $l:=i$.
\item Return.
\end{block}
\item Otherwise:

\begin{block}
\item Write pointer to continuation of current thread into $q[i]$.
\item Terminate.
\end{block}
\end{block}
\item \textbf{Public Release():}

\begin{block}
\item If $\f{FetchAndAdd}(count,-1)>1$:

\begin{block}
\item Create Int $j:=l$.
\item Create Pointer $p:=null$.
\item While $p=null$:

\begin{block}
\item Set $j:=j\%k+1$.
\item If $q[j]\ne null$, then swap $p,q[j]$.
\end{block}
\item Set $l:=j$.
\item Fork to resume $p$.
\end{block}
\end{block}
\end{block}
\end{defn}
When any thread $\tau$ attempts to acquire the dedicated lock with
$k$ keys using key $i$, it takes $O(k)$ steps for the fetch-and-add.
If it fails to acquire it at that point, then it writes a pointer
to its continuation into $q[i]$. After that, if $j$ is the key used
by the thread currently holding the lock, then all threads that are
resumed before $\tau$ have keys in cyclic order between $j$ and
$i$.

\section*{Acknowledgements}

We would like to express our gratitude to our families and friends
for their wholehearted support, and to all others who have given us
valuable comments and advice. This research was supported in part
by Singapore MOE AcRF Tier 1 grant T1 251RES1719 and National Science
Foundation grants CCF-1150036, CCF-1733873 and CCF-1725647.

\setlength{\baselineskip}{1em}

\bibliographystyle{plain}
\addcontentsline{toc}{section}{\refname}\bibliography{OPWM}

\begin{thebibliography}{10}

\bibitem{AfekKK12}
Yehuda Afek, Haim Kaplan, Boris Korenfeld, Adam Morrison, and Robert~Endre
  Tarjan.
\newblock Cbtree: {A} practical concurrent self-adjusting search tree.
\newblock In {\em Distributed Computing - 26th International Symposium, {DISC}
  2012, Salvador, Brazil, October 16-18, 2012. Proceedings}, pages 1--15, 2012.

\bibitem{AfekKK14}
Yehuda Afek, Haim Kaplan, Boris Korenfeld, Adam Morrison, and Robert~Endre
  Tarjan.
\newblock The {CB} tree: a practical concurrent self-adjusting search tree.
\newblock {\em Distributed Computing}, 27(6):393--417, 2014.

\bibitem{agrawal2014batcher}
Kunal Agrawal, Jeremy~T Fineman, Kefu Lu, Brendan Sheridan, Jim Sukha, and
  Robert Utterback.
\newblock Provably good scheduling for parallel programs that use data
  structures through implicit batching.
\newblock In {\em Proceedings of the 26th ACM symposium on Parallelism in
  algorithms and architectures}, pages 84--95. ACM, 2014.

\bibitem{AkhremtsevS16}
Yaroslav Akhremtsev and Peter Sanders.
\newblock Fast parallel operations on search trees.
\newblock In {\em 23rd {IEEE} International Conference on High Performance
  Computing, HiPC 2016, Hyderabad, India, December 19-22, 2016}, pages
  291--300, 2016.

\bibitem{arora2001worksteal}
Nimar~S Arora, Robert~D Blumofe, and C~Greg Plaxton.
\newblock Thread scheduling for multiprogrammed multiprocessors.
\newblock {\em Theory of computing systems}, 34(2):115--144, 2001.

\bibitem{atallah1989cascading}
Mikhail~J Atallah, Richard Cole, and Michael~T Goodrich.
\newblock Cascading divide-and-conquer: A technique for designing parallel
  algorithms.
\newblock {\em SIAM Journal on Computing}, 18(3):499--532, 1989.

\bibitem{BlellochFS16}
Guy~E. Blelloch, Daniel Ferizovic, and Yihan Sun.
\newblock Just join for parallel ordered sets.
\newblock In {\em Proceedings of the 28th {ACM} Symposium on Parallelism in
  Algorithms and Architectures, {SPAA} 2016, Asilomar State Beach/Pacific
  Grove, CA, USA, July 11-13, 2016}, pages 253--264, 2016.

\bibitem{BlellochRe97}
Guy~E. Blelloch and Margaret Reid-Miller.
\newblock Pipelining with futures.
\newblock In {\em Proceedings of the ACM Symposium on Parallel Algorithms and
  Architectures}, SPAA '97, pages 249--259, New York, NY, USA, 1997. ACM.

\bibitem{BlellochR98}
Guy~E. Blelloch and Margaret Reid{-}Miller.
\newblock Fast set operations using treaps.
\newblock In {\em {SPAA}}, pages 16--26, 1998.

\bibitem{blumofe1999worksteal}
Robert~D Blumofe and Charles~E Leiserson.
\newblock Scheduling multithreaded computations by work stealing.
\newblock {\em Journal of the ACM (JACM)}, 46(5):720--748, 1999.

\bibitem{Brent}
R.~P. Brent.
\newblock The parallel evaluation of general arithmetic expressions.
\newblock {\em Journal of the ACM}, pages 201--206, 1974.

\bibitem{BrodalTrZa98}
Gerth~St\o{}lting Brodal, Jesper~Larsson Tr\"{a}ff, and Christos~D. Zaroliagis.
\newblock A parallel priority queue with constant time operations.
\newblock {\em Journal of Parallel and Distributed Computing}, pages 4--21,
  1998.

\bibitem{BrownER14}
Trevor Brown, Faith Ellen, and Eric Ruppert.
\newblock A general technique for non-blocking trees.
\newblock In {\em {ACM} {SIGPLAN} Symposium on Principles and Practice of
  Parallel Programming, PPoPP '14, Orlando, FL, USA, February 15-19, 2014},
  pages 329--342, 2014.

\bibitem{CormenLeRi09}
Thomas~H. Cormen, Charles~E. Leiserson, Ronald~L. Rivest, and Clifford Stein.
\newblock {\em Introduction to Algorithms}.
\newblock The MIT Press, third edition, 2009.

\bibitem{CrauserMeMe98}
A.~Crauser, K.~Mehlhorn, U.~Meyer, and P.~Sanders.
\newblock A parallelization of {D}ijkstra's shortest path algorithm.
\newblock In {\em Proceedings of the International Symposium on Mathematical
  Foundations of Computer Science (MFCS)}, pages 722--731. Springer, 1998.

\bibitem{DriscollGaSh88}
James~R. Driscoll, Harold~N. Gabow, Ruth Shrairman, and Robert~E. Tarjan.
\newblock Relaxed heaps: an alternative to {F}ibonacci heaps with applications
  to parallel computation.
\newblock {\em Communications of the ACM}, 31:1343--1354, 1988.

\bibitem{dwork1997contention}
Cynthia Dwork, Maurice Herlihy, and Orli Waarts.
\newblock Contention in shared memory algorithms.
\newblock {\em Journal of the ACM (JACM)}, 44(6):779--805, 1997.

\bibitem{EllenFH14}
Faith Ellen, Panagiota Fatourou, Joanna Helga, and Eric Ruppert.
\newblock The amortized complexity of non-blocking binary search trees.
\newblock In {\em {ACM} Symposium on Principles of Distributed Computing,
  {PODC} '14, Paris, France, July 15-18, 2014}, pages 332--340, 2014.

\bibitem{EllenPR10}
Faith Ellen, Panagiota Fatourou, Eric Ruppert, and Franck van Breugel.
\newblock Non-blocking binary search trees.
\newblock In {\em Proceedings of the 29th ACM SIGACT-SIGOPS Symposium on
  Principles of Distributed Computing}, PODC '10, pages 131--140, New York, NY,
  USA, 2010. ACM.

\bibitem{elmasry2013distsensitive}
Amr Elmasry, Arash Farzan, and John Iacono.
\newblock On the hierarchy of distribution-sensitive properties for data
  structures.
\newblock {\em Acta informatica}, 50(4):289--295, 2013.

\bibitem{EKS14}
Stephan Erb, Moritz Kobitzsch, and Peter Sanders.
\newblock Parallel bi-objective shortest paths using weight-balanced b-trees
  with bulk updates.
\newblock In {\em Proceedings of the Symposium on Experimental Algorithms
  (SEA)}, 2014.
\newblock to appear.

\bibitem{FatourouKa12}
Panagiota Fatourou and Nikolaos~D. Kallimanis.
\newblock Revisiting the combining synchronization technique.
\newblock In {\em PPoPP}, pages 257--266, 2012.

\bibitem{FriasSi07}
Leonor Frias and Johannes Singler.
\newblock Parallelization of bulk operations for {STL} dictionaries.
\newblock In {\em Euro-Par Workshops}, volume 4854 of {\em LNCS}, pages 49--58.
  Springer, 2007.

\bibitem{Cilk}
Matteo Frigo, Charles~E. Leiserson, and Keith~H. Randall.
\newblock The implementation of the {C}ilk-5 multithreaded language.
\newblock In {\em Proceedings of the ACM SIGPLAN Conference on Programming
  Language Design and Implementation (PLDI)}, pages 212--223, 1998.

\bibitem{gibbons1998qrqw}
Phillip~B Gibbons, Yossi Matias, and Vijaya Ramachandran.
\newblock The queue-read queue-write pram model: Accounting for contention in
  parallel algorithms.
\newblock {\em SIAM Journal on Computing}, 28(2):733--769, 1998.

\bibitem{goodrich1996parallelsort}
Michael~T Goodrich and S~Rao Kosaraju.
\newblock Sorting on a parallel pointer machine with applications to set
  expression evaluation.
\newblock {\em Journal of the ACM (JACM)}, 43(2):331--361, 1996.

\bibitem{Graham}
R.~L. Graham.
\newblock Bounds on multiprocessing anomalies.
\newblock {\em SIAM Journal on Applied Mathematics}, pages 17(2):416--429,
  1969.

\bibitem{HendlerInSh10}
Danny Hendler, Itai Incze, Nir Shavit, and Moran Tzafrir.
\newblock Flat combining and the synchronization-parallelism tradeoff.
\newblock In {\em Proceedings of the ACM Symposium on Parallelism in Algorithms
  and Architectures (SPAA)}, pages 355--364, 2010.

\bibitem{iacono2001wstree}
John Iacono.
\newblock Alternatives to splay trees with o(log n) worst-case access times.
\newblock In {\em Proceedings of the twelfth annual ACM-SIAM symposium on
  Discrete algorithms}, pages 516--522. Society for Industrial and Applied
  Mathematics, 2001.

\bibitem{IntelCilkPlus13}
Intel Corporation.
\newblock {\em Intel {Cilk Plus} Language Extension Specification, Version
  1.1}, 2013.
\newblock Document 324396-002US. Available from
  \url{http://cilkplus.org/sites/default/files/open_specifications/Intel_Cilk_plus_lang_spec_2.htm}.

\bibitem{OpenMP}
{OpenMP Architecture Review Board}.
\newblock {OpenMP} application program interface, version 4.0.
\newblock Available from
  \url{http://www.openmp.org/mp-documents/OpenMP4.0.0.pdf}, July 2013.

\bibitem{OyamaTaYo99}
Y.~Oyama, K.~Taura, and A.~Yonezawa.
\newblock Executing parallel programs with synchronization bottlenecks
  efficiently.
\newblock In {\em Proceedings of the International Workshop on Parallel and
  Distributed Computing for Symbolic and Irregular Applications (PDSIA)}, pages
  182--204, 1999.

\bibitem{PaigeKr85}
Richard~C. Paige and Clyde~P. Kruskal.
\newblock Parallel algorithms for shortest path problems.
\newblock In {\em Int. Conference on Parallel Processing}, pages 14--20, 1985.

\bibitem{paul1983paradict}
Wolfgang Paul, Uzi Vishkin, and Hubert Wagener.
\newblock Parallel dictionaries on 2--3 trees.
\newblock {\em Automata, Languages and Programming}, pages 597--609, 1983.

\bibitem{TBB}
James Reinders.
\newblock {\em Intel Threading Building Blocks: Outfitting C++ for Multi-Core
  Processor Parallelism}.
\newblock O'Reilly, 2007.

\bibitem{Sanders98}
Peter Sanders.
\newblock Randomized priority queues for fast parallel access.
\newblock {\em Journal of Parallel Distributed Computing}, 49(1):86--97, 1998.

\bibitem{sleator1985splay}
Daniel~Dominic Sleator and Robert~Endre Tarjan.
\newblock Self-adjusting binary search trees.
\newblock {\em Journal of the ACM (JACM)}, 32(3):652--686, 1985.

\bibitem{TPL}
{The Task Parallel Library}.
\newblock \url{http://msdn.microsoft.com/en-us/magazine/cc163340.aspx}, October
  2007.

\end{thebibliography}

\end{document}